\makeatletter\@addtoreset{equation}{section}\makeatother
\newtheorem{Lemma}{Lemma}[section]
\newtheorem{Proposition}[Lemma]{Proposition}
\newenvironment{Proof}%
 {\begin{trivlist} \item[]{\bf Proof. }}%
 {\hspace*{\fill}$\rule{.4\baselineskip}{.4\baselineskip}$\end{trivlist}}
\newtheorem{lemma}{Lemma}
\newtheorem{hypo}{Hypothesis}
\newtheorem{remark}{Remark}
\newenvironment{proof}[1][.]%
{\begin{trivlist}\item[]\textbf{Proof#1 }}%
 {\hspace*{\fill}$\rule{0.3\baselineskip}{0.35\baselineskip}$\end{trivlist}}
\def\Re{\mathop{\mathrm{Re}}}
\newcommand{\D}{\mathrm{d}}
\begin{document}

\title{Invasion Fronts Outside the Homoclinic Snaking Region in the Planar Swift-Hohenberg Equation}
\author[1]{David J.B. Lloyd}
\affil[1]{\small Department of Mathematics, University of Surrey, Guildford, GU2 7XH, UK}
\date{\today}
\maketitle

\begin{abstract}
It is well-known that stationary localised patterns involving a periodic stripe core can undergo a process that is known as `homoclinic snaking' where patterns are added to the stripe core as a bifurcation parameter is varied. The parameter region where homoclinic snaking takes place usually occupies a small region in the bistability region between the stripes and quiescent state. Outside the homoclinic snaking region, the localised patterns invade or retreat where stripes are either added or removed from the core forming depinning fronts. It remains an open problem to carry out a numerical bifurcation analysis of depinning fronts. 
In this paper, we carry out a numerical bifurcation analysis of depinning of fronts near the homoclinic snaking region, involving a spatial stripe cellular pattern embedded in a quiescent state, in the two-dimensional Swift-Hohenberg equation with either a quadratic-cubic or cubic-quintic nonlinearity. We focus on depinning fronts involving stripes that are orientated either parallel, oblique and perpendicular to the front interface, and almost planar depinning fronts. We show that invading parallel depinning fronts select both a far-field wavenumber and a propagation wavespeed whereas retreating parallel depinning fronts come in families where the wavespeed is a function of the far-field wavenumber. Employing a far-field core decomposition, we propose a boundary value problem for the invading depinning fronts which we numerically solve and use path-following routines to trace out bifurcation diagrams. We then carry out a thorough numerical investigation of the parallel, oblique, perpendicular stripe, and almost planar invasion fronts.
We find that almost planar invasion fronts in the cubic-quintic Swift-Hohenberg equation bifurcate off parallel invasion fronts and co-exist close to the homoclinic snaking region. Sufficiently far from the 1D homoclinic snaking region, no almost planar invasion fronts exist and we find that parallel invasion stripe fronts may regain transverse stability if they propagate above a critical speed. Finally, we show that  depinning fronts shed light on the time simulations of fully localised patches of  stripes on the plane. 
The numerical algorithms detailed have wider application to general modulated fronts and reaction-diffusion systems.

\end{abstract}

\section{Introduction}

\begin{figure}[htb]
	\centering
	\includegraphics[width=0.8\linewidth]{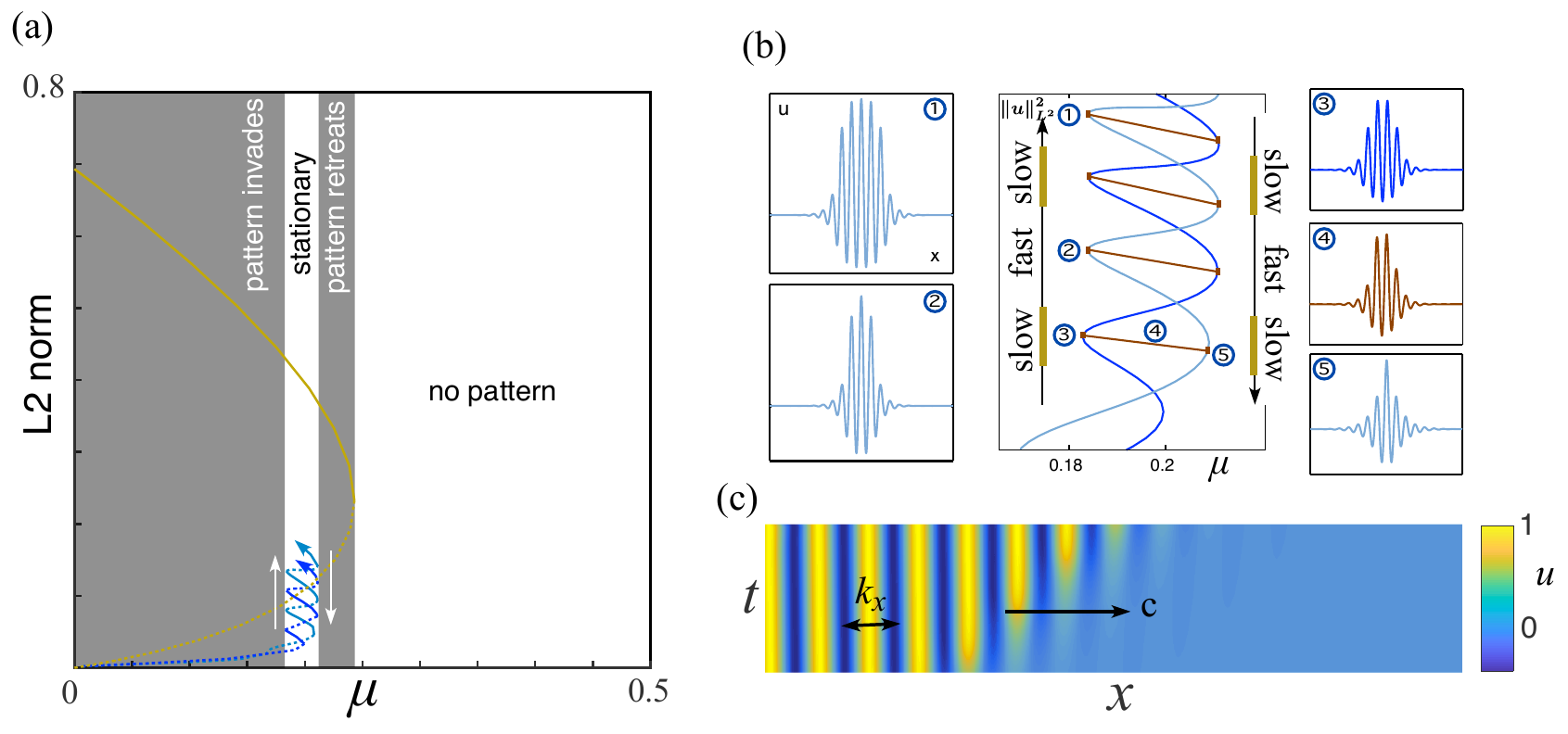}
	\caption{(a) Bifurcation diagram for 1D stationary localised and stripe patterns bifurcating from the trivial state for the quadratic-cubic SH equation with $\nu=1.6$. The homoclinic snaking region occurs between $0.184<\mu<0.211$ where stationary localised pulses are found shown in (b). In panel (b) we also plot how the depinning of the localised patterns occurs either side of the snaking region with a fast-slow propagation speed as the depinning passes each fold of the snake. 
A space-time plot of an invading front is shown in (c) and the pattern selection of the front propagation speed $c$ and far-field wavenumber $k_x$.
\label{f:SH_23_bif} }
\end{figure}
The formation of stationary localised patterns involving a spatially periodic core embedded in a a quiescent state has been found to occur in a range of applications from elastic buckling~\cite{thompson2015}, magnetic fluids~\cite{lloyd2015}, doubly diffusive convection~\cite{beaume2018}, nonlinear optics~\cite{Ackemann2009}, crystals~\cite{emmerich2012} and desertification~\cite{meron2018}, to name but a few. One of the most successful theories for their formation is known as homoclinic snaking~\cite{woods1999,coullet2000}. The prototypical system that possesses homoclinic snaking is the Swift-Hohenberg (SH) equation
\begin{equation}\label{e:sh}
u_t = -(1+\Delta)^2u - \mu u + f(u),
\end{equation} 
where $u=u(x,y,t)$, $\Delta$ is the planar Laplacian, $\mu,\nu\in\mathbb{R}$ are parameters and $f(u) = \nu u^2 - u^3=:f_{23}(u)$ (which we call the quadratic-cubic SH equation) or $f(u)=\nu u^3 - u^5=:f_{35}(u)$ (which we call the cubic-quintic SH equation). 
The classic bifurcation diagram for stationary states in 1D is shown in figure~\ref{f:SH_23_bif}(a) depicting the quiescent state undergoing a subcritical Turing instability at $\mu=0$ from which emerges an unstable branch of spatially periodic cellular patterns and two stationary localised states. The unstable spatially periodic patterns undergo a fold and restabilise to form a bistable region with the quiescent state. The unstable localised states however undergo an infinite number of folds forming two intertwining ``homoclinic snakes". At every other fold of the homoclinic snake, a new periodic pattern is added to the core of the localised state. A significant amount of theory has been developed to fully understand the homoclinic snaking mechanism~\cite{beck2009,makrides2014,dawes2010} as well as a large number of numerical investigations; see~\cite{knobloch2008,knobloch2015} for a review. The reason for the homoclinic snaking region is due to the existence of two stationary fronts (known as Pomeau fronts), connecting the trivial state and the stripe pattern and vice versa, that are {\it pinned} in an open region in parameter space~\cite{pomeau1986}. A localised pulse with a periodic core can be thought of as the gluing of these two Pomeau fronts~\cite{beck2009}. 

Outside of the homoclinic snaking region, the stationary localised Pomeau fronts start to move and either invade the quiescent state or retreat by adding or subtracting (respectively) a stripe at a time located at the interface between the stripes and the quiescent state~\cite{burke2006}. This process is known as {\it depinning} and occurs in the majority of the bistability region; see figure~\ref{f:SH_23_bif}(a). Localised pulses involving a periodic core, undergo exactly the same behaviour except the depinning occurs in opposite directions. The moving fronts are a special form of modulated fronts; see~\cite{sandstede2004}. While the stationary homoclinic snaking region is well understood numerically and analytically, less is known about the depinning region. The main problem is that the depinning fronts { do not} propagate at a constant speed (in fact they ``lurch") and so are not described by an ODE in the travelling frame. The lurching invasion process is due to passing near the folds of the homoclinic snake where the front slows down as the structure passes a fold and then speeds up; see figure~\ref{f:SH_23_bif}(b) \& (c). In~\cite{burke2006}, they carried out a formal, semi-analytical perturbation analysis near the edges of the 1D homoclinic snaking region.  However, one would like to numerically compute the depinning fronts and their pattern selection away from the snaking region and detect bifurcations. This is the main aim of the paper.  Furthermore, the weakly nonlinear analysis (see for instance Ponedel~{\it et al.}~\cite{ponedel2017}) suggests there is pattern selection mechanism for both invading and retreating fronts where they select a unique propagation speed $c$ and far-field wavenumber $k_x$ but the semi-analytical perturbation analysis does not say anything about a selection mechanism. Hence, we wish to also clarify when such a pattern selection mechanism occurs or not. 

\begin{figure}[h]
	\centering
	\includegraphics[width=0.8\linewidth]{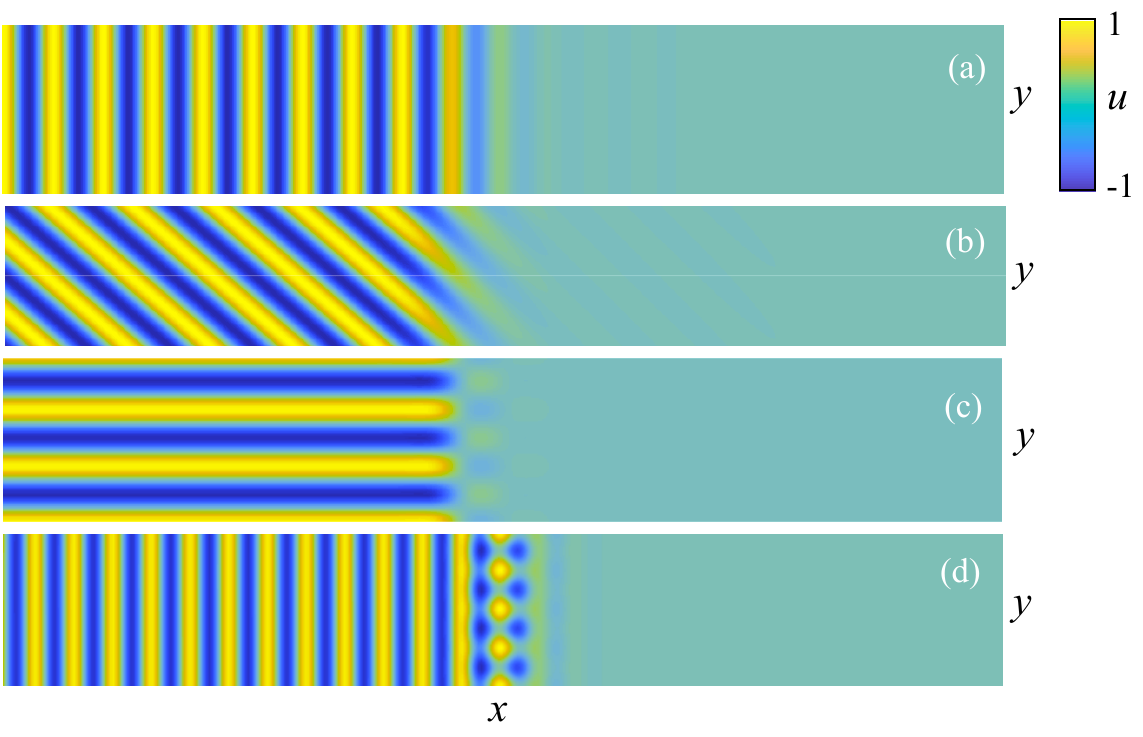}
	\caption{Four types of stripe fronts we focus on in this paper with stripes (a) parallel to the front interface (these fronts select a non-constant front speed and far-field wavenumber) (b) oblique to the front interface with constant front speed (these fronts select a constant propagation speed and far-field wavenumber selected but angle free) (c) perpendicular to the front interface  (these fronts select a constant front speed)  and (d) ``almost-planar" fronts (these fronts select a non-constant front speed and far-field wavenumber). \label{f:stripe_fronts}}
\end{figure}
In 2D, the range of possible depinning fronts increases dramatically. On the 2D-plane, there are three types of possible depinning stripe fronts: parallel, oblique, perpendicular and ``almost-planar" stripe fronts; see figure~\ref{f:stripe_fronts}. Parallel stripe fronts have stripes that are parallel to the front interface and similarly for the oblique and perpendicular stripe fronts. Almost-planar stripe fronts involve parallel stripes in their far-field but have a non-trivial spatially periodic structure at their interface in the transverse direction. Parallel stripe fronts are the simple 2D extension of the 1D invasion fronts shown in figure~\ref{f:SH_23_bif}(c) while the perpendicular, oblique, and almost-planar fronts are genuinely 2D. We note that stationary perpendicular stripe fronts were investigated in~\cite{avitabile2010} but the oblique stripe fronts have not been studied before. 

\begin{figure}
\centering
	\includegraphics[width=0.8\linewidth]{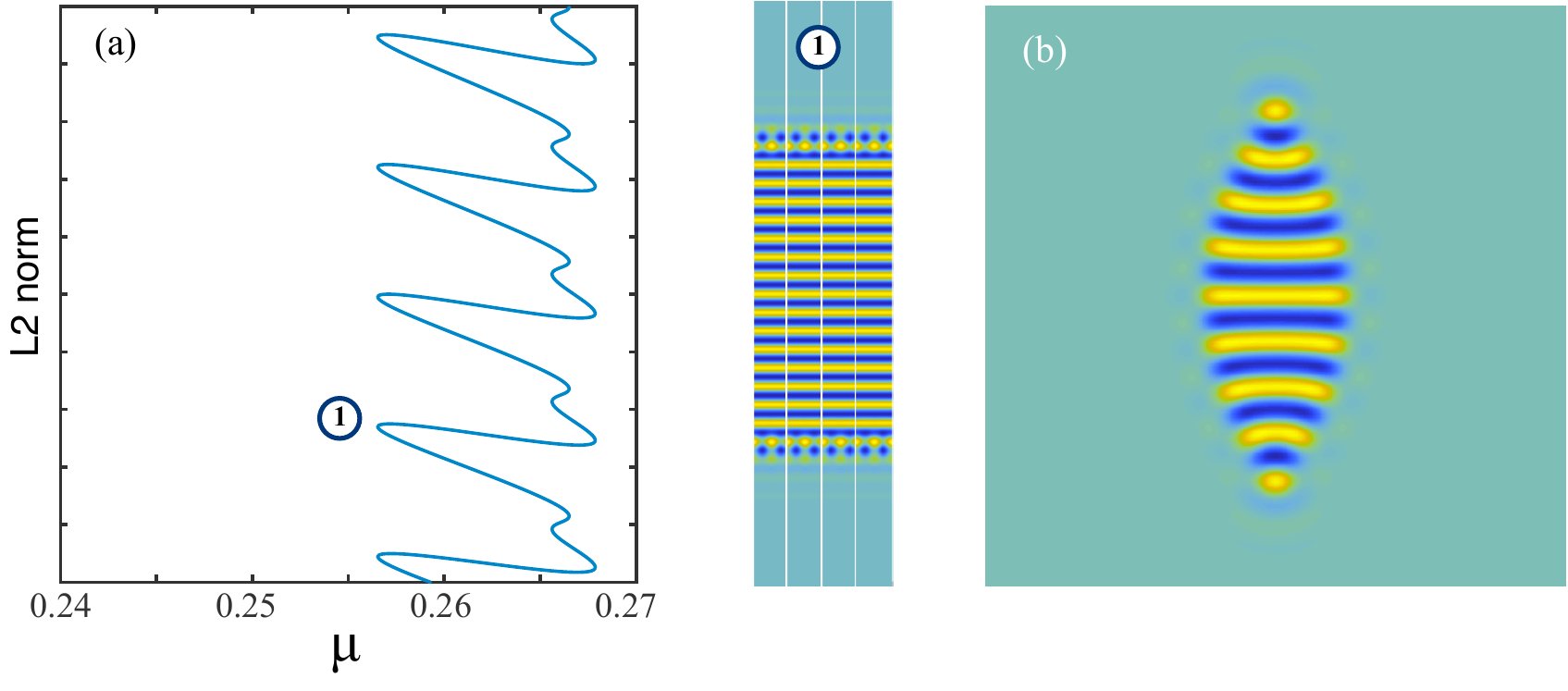}
	\caption{(a) Snaking of almost planar fronts shown inset \raisebox{.5pt}{\textcircled{\raisebox{-.9pt} {1}}} for the cubic-quintic SH equation with $\nu=1.25$. The snaking occurs between $0.256<\mu<0.268$ with two other folds at $\mu=0.2656$ and $\mu=0.2665$. (b) Stationary stable localised patches  worm $\mu=0.66,\nu=2$ in the cubic-quintic SH equation. \label{f:almost_snake}}.
\end{figure}
In the 2D-planar cubic-quintic SH equation, stationary ``almost planar" localised structures are known to exist and snake between 4 different fold points in $\mu$ unlike the standard snaking between 2 fold points as seen in 1D; see figure~\ref{f:almost_snake} and \cite{burke2007b,avitabile2010}.  The question of how the snaking structure affects the depinning fronts of almost planar localised structures has not been explored.

Two types of bifurcations of depinning fronts one might be interested investigating are fold bifurcations and transverse instability bifurcations. 
One can readily imagine fronts ceasing to exist as the selected wavenumbers vary but it is difficult to determine from initial value solvers where these occur as opposed to the front just losing stability. 
Transverse instability bifurcations can readily be observed from time simulations; see figure~\ref{f:almost_ivp}. In the cubic-quintic SH equation with $\nu=2$,
the homoclinic snaking region in 1D (corresponding to parallel stripes) is between $0.6267<\mu<0.7126$ and an almost planar snaking region in almost the same parameter region; see~\cite{avitabile2010}. For $0.375<\mu<0.62$, we see an almost planar invasion process. However, for $0<\mu<0.375$ where the invasion speed increases, we see that the invasion front is stable with respect to transverse perturbations.  Investigating these phenomena using numerical path-following routines would allow for a systematic parameter study to be carried out that would be time consuming or impossible to carry out with time simulations alone. 
\begin{figure}[h]
	\centering
	\includegraphics[width=0.95\linewidth]{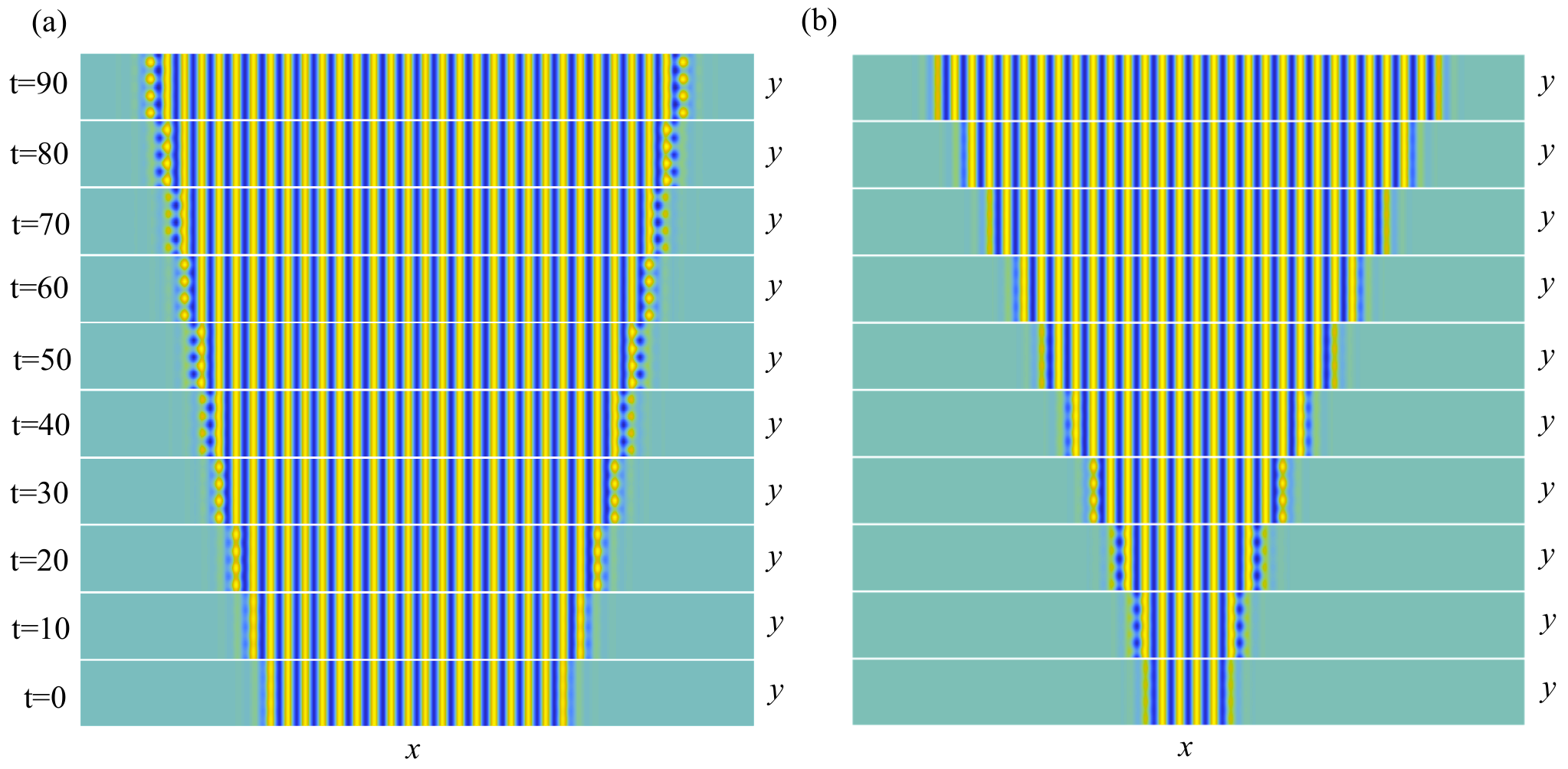}
	\caption{Time simulations of the cubic-quintic SH equation with $\nu=2$ starting from a localised stripe pattern with a small random perturbation. Panel (a) $\mu=0.45$ shows de-stabilisation to an almost planar invasion front while panel (b) $\mu=0.35$ shows an initial de-stabilisation to an almost planar invasion front before then evolving to a parallel stripe invasion front.  \label{f:almost_ivp}}
\end{figure}

Beyond investigating planar fronts, a major interest is in the depinning of 2D patches of cellular pattern; see figure~\ref{f:almost_snake}(b) for localised worms in the depinning region. In \cite{lloyd2008,avitabile2010}, they showed that the planar fronts appear to be related to the fully localised patches. The depinning of the fully localised patches has not been studied and it would also be interesting to see if the depinning planar fronts can help explain the behaviour of the patches.  Key questions one may wish to ask are: can we predict the speed of growth of the patch? and can we predict the structure of the growing patch? The stationary versions of these structures have been shown numerically to have a bifurcation structure that can be explained via planar fronts. It seems reasonable to ask if planar depinning fronts can give us an indication as to the pattern formation structure of growing 2D patches.

Pattern forming fronts propagating into an unstable state in one space dimension have been extensively studied; see~\cite{saarloos2003} for a review.  
However in the subcritical region, that is the focus of this paper, the depinning fronts are propagating into a stable state. It appears that relatively little is known numerically in the subcritical region despite the depinning fronts taking up most of the bistable region in parameter space. In doubly diffusive convection, interesting instabilities of the invasion fronts outside the pinning region have recently been investigated by Beaume {\it et al.}~\cite{beaume2018} using time simulations. There have also been recent developments in two spatial dimensions with directional quenching (i.e. the $\mu$ parameter has a travelling step change) where the fronts connect a quiescent state and a stripe patterned state; see~\cite{monteiro2017,goh2017,avery2018}. 

It is not entirely clear how to set up the numerical boundary value problem for depinning fronts for use in continuation and bifurcation routines; \cite[Chapter 10]{krauskopf2007} highlight that the computation of modulated waves is an open problem and are difficult to compute due to the presence of the essential spectrum. A natural approach is to consider depinning fronts rather than travelling invading localised pulses so that one can go to a traveling frame and use a standard phase condition for front speed~\cite{krauskopf2007}. However, one now has the problem that the far-field pattern is now slanted prohibiting simple Robin boundary conditions which standard numerical continuation packages such as \textsc{pde2path}~\cite{pde2path,uecker2014b} and \textsc{AUTO07p}~\cite{auto07p} usually require. Worse still is that the linearisation about the depinning front is not Fredholm prohibiting the application of Newton's method. To overcome this problem, we will adapt and extend and idea by Lloyd \& Scheel~\cite{lloyd2017} who proposed a new numerical method for stationary planar grain boundaries; see also Morrissey \& Scheel~\cite{morrissey2015} and Avery {\it et al.}~\cite{avery2018}. Their key idea is to use a {\it far-field core decomposition} with additional phase conditions to deal with the far-field boundary condition and setup a numerical boundary value problem for both the front and the pattern selection parameters.  This boundary value problem can then be easily embedded into numerical continuation routines to find various bifurcations. 
In this paper, we show how the far-field core decomposition idea can be extended to compute the depinning fronts and their pattern selection wavenumbers and propagation speeds.

In this paper, we concentrate on depinning fronts whose far-field is made up of stripes and leave the study of depinning fronts involving other cellular periodic structures e.g. hexagons, for a subsequent paper. We have two main results. The first of these is a pattern selection principle for transverse invasion fronts in the bistable region.
\begin{Proposition}\label{prop:1} (Pattern Selection for Parallel Depinning Fronts)
Let $\mu>0$. Assume that the far-field stripe solution is temporally stable and there exists a transverse parallel invasion front of~(\ref{e:sh}). Then invasion fronts persists under small perturbations and select a unique invasion speed, $c$ and far-field spatial wavenumber, $k_x$ for fixed $\mu$ and $\nu$. 
On the other hand, retreating fronts come in one-parameter families that are parameterised by $k_x$ and whose front speed, $c=c(k_x)$.  
\end{Proposition}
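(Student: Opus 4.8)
The plan is to recast the parallel invasion front as a modulated travelling wave and then read off the selection statement from a Fredholm/transversality count, tracing the invasion-versus-retreat asymmetry to the sign of a relative group velocity. Since parallel stripes carry no transverse structure, the front is governed by the one-dimensional version of~(\ref{e:sh}); a naive travelling frame is insufficient because the stripe pattern is pinned in the laboratory frame while the interface moves, so I would look for a modulated travelling wave $u(x,t)=\mathcal U(x-ct,\,k_x x)$ with $\mathcal U(\xi,\cdot)$ of period $2\pi$ in its second argument, satisfying $\mathcal U(\xi,\cdot)\to U_p(\cdot\,;k_x)$ as $\xi\to-\infty$ and $\mathcal U(\xi,\cdot)\to 0$ as $\xi\to+\infty$. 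Equivalently, and in line with the far-field core decomposition used throughout the paper, I would write $u(x,t)=\chi(x-ct)\,U_p(k_x x)+w(x-ct,x)$ with $w$ localised in its first argument, treat $(w,c,k_x)$ as the unknowns, and append two phase conditions fixing the translational freedom of the core and of the far-field pattern.

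Next I would set up the linearisation and establish that the far-field core decomposition makes it Fredholm, thereby removing the essential-spectrum obstruction flagged in the introduction. At $\xi\to+\infty$ the linearisation about the quiescent state is a hyperbolic spatial dynamical system: for $\mu>0$ the spatial eigenvalue relation $\lambda^4+2\lambda^2-c\lambda+(1+\mu)=0$ has no purely imaginary roots, since its real and imaginary parts can vanish together only at the Turing point $c=0,\ \mu=0$; this yields a two-dimensional stable and a two-dimensional unstable subspace. At $\xi\to-\infty$ the linearisation is the Bloch/Floquet operator about $U_p$, whose spectrum is controlled by the assumed temporal stability of the stripe except for two neutral modes, namely the translation mode $\partial_\xi U_p$ and the mode $\partial_{k_x}U_p$ generating the one-parameter family of stripes. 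I would then compute the Fredholm index of the full operator by the relative-Morse-index formula, as the difference of the dimensions of the relevant stable/unstable/centre subspaces at the two ends, with $c$ and $k_x$ carried along as additional unknowns.

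The decisive step is deciding how the two neutral pattern modes are apportioned between the stable and unstable sides, and here I would invoke the group-velocity criterion for pattern-forming fronts (Sandstede \& Scheel~\cite{sandstede2004}; van~Saarloos~\cite{saarloos2003}): one evaluates the group velocity $c_g(k_x)$ of the neutral Bloch mode of $U_p$ and compares it with the front speed $c$. For an invading front the pattern is manufactured in the wake, so $c_g-c$ transports the phase mode away from the interface --- it is an outgoing mode and imposes an extra matching condition; together with the two free parameters this renders the problem Fredholm of index $0$, so by transversality (trivial kernel modulo the fixed translations) the assumed front is isolated and the pair $(c,k_x)$ is uniquely selected, while persistence under small perturbations follows from the implicit function theorem. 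For a retreating front the pre-existing pattern lies ahead of the interface, the phase mode is incoming and supplies a free far-field datum rather than a matching condition; the index rises to $1$, the solution set is one-dimensional, and it is naturally coordinatised by the free wavenumber $k_x$ with $c=c(k_x)$.

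The hard part will be the analysis at the periodic end. Because the stripe carries the neutral translation and wavenumber modes, the essential spectrum of the linearisation touches the imaginary axis and the operator is not Fredholm on unweighted spaces; making the index count rigorous requires either small exponential weights selected according to the sign of $c_g-c$ or, as here, the far-field core decomposition with carefully chosen phase conditions, and in either route the crux is proving that the neutral modes split exactly as claimed --- outgoing for invasion, incoming for retreat --- so that the codimension is genuinely $2$ against $1$. Establishing the sign of $c_g-c$, equivalently that the invasion front is pushed rather than pulled, is where the temporal-stability hypothesis on the stripe and the bistable structure $\mu>0$ enter, and is the step I expect to be the most delicate.
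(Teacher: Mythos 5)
Your proposal is correct and follows essentially the same route as the paper's Appendix~\ref{s:mod_fronts_inf}: the same modulated-wave formulation and far-field core decomposition with phase conditions, the same exponentially-weighted-space Fredholm count in which the neutral Bloch phase mode of the stripes shifts the index between invasion (index $-1$, hence selection of both $k_x$ and $\omega$) and retreat (index $0$, hence a one-parameter family $\omega=\omega(k_x)$), closed off by the same transversality-plus-implicit-function-theorem (Lyapunov--Schmidt) argument. One remark: the step you flag as most delicate is in fact immediate in this setting---because the far-field stripes are stationary in the laboratory frame, the dispersion relation $\lambda=\hat\lambda(\eta)\pm\omega\eta$ underlying Lemma~\ref{L:fred} gives $\left.\D\eta/\D\lambda\right|_{\lambda=0}=\pm 1/\omega$ directly, with temporal stability (Hypothesis~\ref{h:h1}) needed only for the quadratic tangency $\hat\lambda=d_{\|}\eta^{2}+\mathcal{O}(\eta^{4})$, so the sign of the relative drift requires no pushed-versus-pulled analysis, a dichotomy that concerns propagation into unstable states and is moot in the bistable regime $\mu>0$.
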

We refer the reader to Appendix~\ref{s:mod_fronts_inf} for a precise list of the assumptions and the definition of a transverse front. Transverse in this context means the kernel of the linearisation about the front is two-dimensional spanned by elements induced by spatial and temporal translations only. We believe this result is true for a more general class of PDEs. Since there is a unique pattern selection principle only for invasion fronts, we numerically concentrate on computing invading fronts. 
The proof of this result can be turned into a numerical algorithm allowing us to path follow invasion fronts for the first time. We note that a minor adaption of this result and using the results in~\cite{goh2017} yields that oblique and almost-planar fronts (see figure~\ref{f:stripe_fronts}(b) \& (d)) are also pattern selecting if invading and come in one-parameter $k_y$ families. The wavespeed selection property for perpendicular fronts (see figure~\ref{f:stripe_fronts}(b)) can easily be shown using Laypunov-Schmidt reduction. 

The second result is a comprehensive numerical bifurcation study of the invasion (depinning) fronts (parallel, perpendicular and oblique stripes, and almost planar stripes) in the planar SH equation. We will primarily focus on the quadratic-cubic SH equation for parallel invasion fronts as there is an interesting prediction from the weakly-nonlinear theory for the front propagation speed and far-field wavenumber selection. For the other fronts, we will focus on the cubic-quintic SH equation as stripes are easily destabilised in the quadratic-cubic SH equation on the plane.  Rather than give a complete listing of our results, we highlight two interesting observations.\\

{\bf Observations:}\\
{\it We observe in the bistable region:
\begin{itemize}
	\item As one moves away from the snaking region, the selected wavenumber of parallel invasion fronts starts with the Hamiltonian selected wavenumber at the edge of the homoclinic snake and displays a ``dip" before increasing. 
	\item Almost planar invasion fronts bifurcate off parallel invasion fronts and co-exist close to the homoclinic snaking region. They come in one parameter families parameterised by $k_y$. Sufficiently far from the homoclinic snaking region, no almost planar invasion fronts exist. \\
\end{itemize}
}

The paper is outlined as follows. We first briefly review the existence and stability of stripe cellular patterns on the plane in the cubic-quintic SH equation in \S\ref{s:roll_stab}. We then review pattern selection principles for stationary fronts in \S\ref{s:pat_select_stat}. In \S\ref{s:weak}, we look at the weakly nonlinear amplitude analysis for the emergence of the localised states and travelling fronts. We also review in \S\ref{s:semi_ana} the weakly nonlinear theory and semi-analytical perturbation analysis of~\cite{burke2006,aranson2000}. 
In \S\ref{s:method}, we outline the far-field core decomposition numerical method and then present our numerical investigation in \S\ref{s:numerics}. In \S\ref{s:patch_invasion}, we investigate the invasion process of stripe patches in the plane. Finally, we conclude in \S\ref{s:discussion} and outline areas for future research. 

\section{Existence and stability of stripes}\label{s:roll_stab}
In this section, we state the existence and stability results of stripes near $\mu\sim0$ in the cubic-quintic SH equation leaving the proofs of propositions 2.1 and 2.3 to Appendix~\ref{s:lay_proof}. We also numerically compute the stability boundaries of the stripes for larger values of $\mu$ beyond the validity of the existence proofs. This information will prove crucial in understanding the invasion fronts and to calculate Fredholm indices in the Appendix~\ref{s:mod_fronts_inf}. For the quadratic-cubic SH equation the results are similar except stripes are typically unstable in the bistability region to hexagonal perturbations.

We follow Mielke~\cite{mielke1995a} for the proof of existence and instability analysis for $(\mu,\nu)\sim0$. We assume that the stationary stripes are periodic in $x$ and independent of $y$. Hence, we consider solutions of the form $u(x,y,t)=\tilde u(\xi;\mu,\nu,k_x)$, where $\xi=k_xx$ and $\tilde u$ is $2\pi$-periodic in $\xi$. The symmetry group is $\Omega=\frac{2\pi}{k_x}\mathbb{Z}\times\mathbb{R}$ with discrete dual group $\Gamma=k\mathbb{Z}\times\{0\}$. We further assume, that we are close to the co-dimension 2 point where the stripes bifurcate sub/supercritically such that $\nu\sim0$. 
We also set $k_x=\sqrt{1+\kappa}$ and assume $\kappa$ is small. 

Omitting tilde's and redefining $x$, we define nonlinear problem $F(\mu,\kappa,\nu,u)=0$, where
\begin{equation}\label{e:non_F}
F(\mu,\kappa,\nu,u) := -(1+(1+\kappa)\partial_x^2)u - \mu u + \nu u^3 - u^5,\qquad F:\mathbb{R}^4\times X^4\rightarrow X^0,
\end{equation}
and $X^j = H^j_{\mbox{per}}([0,2\pi],\mathbb{R})$. We obtain the existence of stripes via Lyapunov-Schmidt reduction.

\begin{Proposition} (Existence of small amplitude stripes)
There is an $\mu_0>0$ and $\nu_0>0$ such that for all $\mu\in(0,\mu_0]$ and $\nu\in(0,\nu_0]$ all $\kappa\in\left(-\hat\kappa,\hat\kappa\right)$ where $\hat\kappa = \sqrt{\frac{9\nu^2-40\mu}{40}}$ there is a unique solution $u=\tilde u_{\mu,\kappa,\nu}$ of (\ref{e:non_F}) which is even in $x$ and positive at $x>0$. All other small bounded solutions are obtained from translates of this family. We have the expansion
\[
\tilde u_{\mu,\kappa,\nu}(\xi) = \tilde a \cos(\xi) + \mathcal{O}(\tilde a^2),
\]
where
\[
\tilde a(\mu,\kappa,\nu)^2_\pm = \left(\frac{1}{2\pi}\langle \tilde u_{\mu,\kappa,\tilde\nu},U_1\rangle \right)^2= \frac{3\nu}{5}\pm\frac{\sqrt{9\nu^2-40(\mu+\kappa^2)}}{5} + \mathcal{O}(|\mu| + |\nu|^2 +|\kappa|^2).
\]
\end{Proposition}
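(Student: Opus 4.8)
The plan is to follow Mielke's Lyapunov--Schmidt reduction: collapse the infinite-dimensional problem $F(\mu,\kappa,\nu,u)=0$ on the even periodic Sobolev spaces onto a single scalar algebraic equation whose roots are the stripe amplitudes. First I would record the structural features that drive everything. The map $F:\mathbb{R}^4\times X^4\to X^0$ is smooth, $u\equiv 0$ solves it for all parameters, and $F$ is odd in $u$, i.e. $F(\mu,\kappa,\nu,-u)=-F(\mu,\kappa,\nu,u)$, so the bifurcation is $\mathbb{Z}_2$-equivariant; it is also equivariant under the translation action of $O(2)$ on the $2\pi$-periodic variable $\xi$. I would then linearise the stationary operator (consistent with~\eqref{e:sh}) at the trivial state and bifurcation point $(\mu,\kappa,\nu,u)=0$, obtaining $L_0:=D_uF(0)$ with Fourier symbol $-(1-n^2)^2$ on the modes $e^{\mathrm{i}n\xi}$. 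The symbol has a double zero at $n=1$, so $\ker L_0=\mathrm{span}\{\cos\xi,\sin\xi\}$ is two-dimensional and $L_0$ is self-adjoint Fredholm of index zero with $\mathrm{Rg}\,L_0=(\ker L_0)^{\perp}$. Restricting to the even subspace $X^j_{\mathrm e}$ removes $\sin\xi$ and fixes the translation gauge, leaving the one-dimensional kernel spanned by $U_1=\cos\xi$.

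With the splitting $u=aU_1+w$, $a\in\mathbb{R}$, $w\in \mathrm{Rg}\,L_0\cap X^4_{\mathrm e}$, and the spectral projection $E$ onto $\mathrm{span}\{U_1\}$, the equation $F=0$ is equivalent to the pair $(I-E)F=0$ and $EF=0$. Since $L_0$ restricted to the complement of its kernel is an isomorphism onto $(I-E)X^0_{\mathrm e}$, the implicit function theorem gives, for all small $(a,\mu,\kappa,\nu)$, a unique small solution $w=w^{*}(a,\mu,\kappa,\nu)$ of the range equation, smooth and vanishing at $a=0$; equivariance forces $w^{*}$ to be odd in $a$, and a direct order count gives $w^{*}=\tfrac{\nu}{256}a^{3}\cos3\xi+O(a^{5})+O((\mu+\kappa^{2})a)$, since the first nonlinear term $\nu(aU_1)^3$ feeds the $\cos3\xi$ mode through $L_0^{-1}$. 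Substituting into the kernel equation yields the scalar reduced equation $g(a,\mu,\kappa,\nu):=\tfrac{1}{2\pi}\langle F(\mu,\kappa,\nu,aU_1+w^{*}),U_1\rangle=0$, which by oddness factors as $g=a\,h(a^{2},\mu,\kappa,\nu)$; the trivial branch is $a=0$ and the stripes solve $h=0$.

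The heart of the computation is the Taylor expansion of $h$. The linear-in-$a$ coefficient comes from $\langle \mathcal LU_1,U_1\rangle$ with $\mathcal L=-(1+(1+\kappa)\partial_\xi^2)^2-\mu$; using $(1+(1+\kappa)\partial_\xi^2)\cos\xi=-\kappa\cos\xi$ this equals $-(\mu+\kappa^2)$ up to the fixed normalisation. The cubic and quintic coefficients are the Fourier projections $\langle\cos^3\xi,\cos\xi\rangle$ and $\langle\cos^5\xi,\cos\xi\rangle$, giving $\tfrac34\nu$ and $-\tfrac58$, while the feedback of $w^{*}$ perturbs only the quintic coefficient and only at $O(\nu^2)$, which is absorbed into the stated remainder. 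Hence $h=-(\mu+\kappa^2)+\tfrac34\nu\,a^2-\tfrac58 a^4+\text{h.o.t.}$, a quadratic in $A=a^2$ whose roots are $A_\pm=\tfrac{3\nu}{5}\pm\tfrac15\sqrt{9\nu^2-40(\mu+\kappa^2)}+O(|\mu|+|\nu|^2+|\kappa|^2)$, matching the claimed $\tilde a^2_\pm$. Real and (since $\nu>0$) positive roots exist precisely when $9\nu^2-40(\mu+\kappa^2)>0$, i.e. $|\kappa|<\hat\kappa=\sqrt{(9\nu^2-40\mu)/40}$ (which is real in the relevant regime $\mu<9\nu^2/40$); this is exactly the stated $\kappa$-range, and the two roots are the sub- and supercritical stripe branches, coalescing at the fold $|\kappa|=\hat\kappa$.

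Finally, the uniqueness and ``translates'' statement follow from the symmetries: restoring the $\sin\xi$ direction, the full two-dimensional reduced map is $O(2)\times\mathbb{Z}_2$-equivariant, so in polar coordinates on the kernel it reduces to $\rho\,h(\rho^2)=0$ with $h$ as above, whose nonzero solution set is a union of translation orbits (circles of radius $\sqrt{A_\pm}$); imposing evenness together with positivity at $\xi>0$ selects a single representative on each branch, so every other small bounded solution is a translate. I expect the main obstacle to be the order bookkeeping of the third paragraph: one must verify that no term of lower order than those retained has been dropped, that the $w^{*}$-feedback enters the quintic coefficient only at $O(\nu^2)$ and hence lies inside the remainder, and that the three small quantities $\mu$, $\kappa^2$, $a^2$ are balanced at a common order so that the quadratic-in-$A$ truncation faithfully reproduces the discriminant $9\nu^2-40(\mu+\kappa^2)$ and thus the exact threshold $\hat\kappa$. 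The equivariant reduction and the Fredholm setup are standard, so the genuine work is concentrated in this coefficient computation.
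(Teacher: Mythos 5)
Your proposal is correct and takes essentially the same route as the paper's proof: a Lyapunov--Schmidt reduction following Mielke, solving the range equation by the implicit function theorem and computing the identical reduced coefficients $-(\mu+\kappa^2)+\tfrac34\nu a^2-\tfrac58 a^4$ (up to the $\tfrac{1}{2\pi}$ normalisation, matching the paper's $-\tfrac12(\mu+\kappa^2)+\tfrac{3\nu}{8}|\alpha|^2-\tfrac{5}{16}|\alpha|^4$), so that the quadratic in $a^2$ reproduces $\tilde a^2_\pm$ and the fold threshold $\hat\kappa$. The only cosmetic difference is that you fix the translation gauge by restricting to the even subspace (one-dimensional kernel) and recover the full family of translates via $O(2)$-equivariance at the end, whereas the paper keeps the two-dimensional kernel throughout and obtains a bifurcation function depending only on $|\alpha|$, which is the same argument in different clothing.
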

We note that we require $9\nu^2\geq40(\mu+\kappa^2)$ where we have four branches. Folds on the branches occur when $9\nu^2=40(\mu+\kappa^2)$. It is clear, the last stripe to lose existence is the $\kappa=0$ one at $\mu=9\nu^2/40$ and a parabola existence region of stripes grows as $\mu$ is decreased; see figure~\ref{f:stripe_stab}. We call $\tilde a_-$ the ``lower" branch and $\tilde a_+$ the upper branch in the bistable region.

Linear stability of the periodic solutions to (\ref{e:non_F}) can be found by looking at the linear problem
\[
-(1+\partial_x^2 + \partial_y^2)^2w + \mu w + (3\tilde\nu\epsilon^2\tilde u_{\mu,k,\tilde\nu}^2(x) - 5\tilde u_{\mu,\kappa,\tilde\nu}^4(x))w = \lambda w.
\]
Carrying out a Floquet-Bloch decomposition of the form $w = e^{i(\sigma x + \tau y)}\tilde w(\xi)$ where $\xi = k_xx$ and $\tilde w\in X^4$, leads to the eigenvalue problem
\begin{equation}\label{e:roll_lin_op}
L(\mu,k,\nu,\sigma,\tau,\lambda)\tilde w := -(1+(k_x\partial_\xi + i\sigma)^2 - \tau^2)^2\tilde w + \mu \tilde w+ (3\tilde\nu\epsilon^2\tilde u_{\mu,\kappa,\nu}^2(x) - 5\tilde u_{\mu,\kappa,\nu}^4(x))\tilde w  - \lambda \tilde w = 0
\end{equation}

\begin{Lemma}~\cite{mielke1995a} \label{l:stab_rolls}(Quadratic dependence on $\sigma$ and $\tau$ of the zero eigenvalue)

The critical eigenvalue $\lambda\sim 0$ has the expansion
\begin{align*}
\lambda =& c_1(\mu,\nu,k_x)\sigma^2 + c_2(\mu,\nu,k_x)\tau^2 + \mathcal{O}(\sigma^4+\tau^4),\qquad\mbox{where},\\
c_1(\mu,\nu,k_x) =& \frac{2}{c(\mu,\nu,k_x)}\left[8k_x^2(1-k_x^2)^2 -(1+k_x^2)c(\mu,\nu,k_x) \right] + \mathcal{O}(\tilde a^2), \quad\mbox{and},\\
\quad c_2(\mu,\nu,k_x) =& \frac{2(1-k_x^2)}{c(\mu,\nu,k_x)}+\mathcal{O}(\tilde a^2),
\end{align*}
where $c(\mu,\nu,k_x) = \frac34\nu\tilde a^2 - \frac54\tilde a^4+\mathcal{O}(\tilde a^6)$.
Instability holds when ever $c_1$ or $c_2$ are positive. 
\end{Lemma}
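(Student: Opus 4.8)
The plan is to treat $\sigma$ and $\tau$ as small parameters and obtain the critical eigenvalue by a Lyapunov--Schmidt reduction of the Bloch operator in~(\ref{e:roll_lin_op}), following Mielke~\cite{mielke1995a}. Since the Swift--Hohenberg flow is variational, the operator $\mathcal{L}(\sigma,\tau)$ obtained from~(\ref{e:roll_lin_op}) by dropping the $-\lambda\tilde w$ term is self-adjoint on $L^2([0,2\pi],\mathbb{C})$ for every real $(\sigma,\tau)$ and is polynomial, hence analytic, in $(\sigma,\tau)$; its eigenvalues near $0$ therefore depend analytically on the Bloch parameters. Differentiating the stationary equation in $\xi$ shows that $L_0:=\mathcal{L}(0,0)$ annihilates the translation mode $w_0:=\partial_\xi\tilde u_{\mu,\kappa,\nu}\approx-\tilde a\sin\xi$. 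The essential point is that close to the bifurcation $L_0$ carries a \emph{second} small eigenvalue $\mathcal{E}_1$, with eigenfunction the amplitude mode $w_1\approx\tilde a\cos\xi$: projecting $L_0w_1$ onto $\cos\xi$ and eliminating $\mu+\kappa^2$ through the stationary amplitude relation $\mu+\kappa^2=\tfrac34\nu\tilde a^2-\tfrac58\tilde a^4$ implied by Proposition 2.1 gives $\mathcal{E}_1\propto c(\mu,\nu,k_x)=\tfrac34\nu\tilde a^2-\tfrac54\tilde a^4$, which is $\tilde a^2$ times the slope of the amplitude equation and so vanishes at the stripe fold. The reduction must therefore be performed on the two-dimensional slow subspace $E_0=\mathrm{span}\{w_0,w_1\}$ rather than on the simple kernel of $L_0$.

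Next I would expand $\mathcal{L}(\sigma,\tau)$ to second order. Writing $A=1+k_x^2\partial_\xi^2$ and $D=k_x\partial_\xi$, the principal part $-(1+(D+i\sigma)^2-\tau^2)^2$ expands as $-A^2-2i\sigma(AD+DA)+4\sigma^2D^2+2(\sigma^2+\tau^2)A+\mathcal{O}(3)$, so that
\[
\mathcal{L}=L_0+\sigma L_1+\sigma^2L_2^\sigma+\tau^2L_2^\tau+\mathcal{O}(3),\qquad L_1=-2i(AD+DA),\quad L_2^\sigma=4D^2+2A,\quad L_2^\tau=2A.
\]
Two parity facts then do most of the work: $L_1$ is self-adjoint but exchanges the even and odd sectors, whereas $A$ (and hence $L_2^\sigma$, $L_2^\tau$) preserves parity. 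Consequently $\langle w_0,L_1w_0\rangle=0$ -- there is no group velocity, consistent with the reversibility of the stripes -- so no term linear in $\sigma$ survives; evenness in $\tau$ removes both the linear-in-$\tau$ and the $\sigma\tau$ terms; and at $\sigma=0$ the translation ($w_0$, odd) and amplitude ($w_1$, even) modes decouple entirely.

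I would then read off $\lambda$ from the reduced $2\times2$ matrix $M(\sigma,\tau,\lambda)=P(\mathcal{L}-\lambda)P-P\mathcal{L}Q\,[Q(\mathcal{L}-\lambda)Q]^{-1}Q\mathcal{L}P$ on $E_0$, with $P$ the spectral projection onto $E_0$ and $Q=\mathrm{id}-P$. The diagonal entries give the regular parts: the transverse coefficient comes from the decoupled odd sector, $2\langle w_0,Aw_0\rangle/\langle w_0,w_0\rangle\to2(1-k_x^2)$, while the longitudinal diagonal entry $\langle w_0,(4D^2+2A)w_0\rangle$ supplies the regular, $1/c$-free $(1+k_x^2)$ contribution. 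The singular part is produced by the off-diagonal coupling: $L_1$ links $w_0$ and $w_1$ at order $\sigma$, with $\langle w_1,L_1w_0\rangle\propto i\,k_x(1-k_x^2)$, and eliminating $w_1$ divides the square of this coupling by the small diagonal entry $\mathcal{E}_1\propto c$, generating the $8k_x^2(1-k_x^2)^2/c$-type term in $c_1$. Arranging the characteristic condition in the natural form $c\,\lambda=C_1(k_x)\sigma^2+C_2(k_x)\tau^2+\cdots$, in which the common factor $c$ records the elimination of the amplitude mode, and substituting the amplitude relation throughout, then yields $c_1$ and $c_2$ exactly as stated up to the $\mathcal{O}(\tilde a^2)$ remainder.

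The main obstacle is that this is a \emph{singular} rather than a simple-eigenvalue perturbation: the second slow eigenvalue $\mathcal{E}_1\propto c$ is itself small and degenerates at the fold, where $c\to0$, and it is precisely this near-degeneracy that the $1/c$ factors encode. One must therefore justify the Schur complement and the analytic eigenvalue expansion uniformly down to the fold, verify that the complementary fast modes in the range of $Q$ have order-one inverses and so contribute only regular $\mathcal{O}(\tilde a^2)$ corrections, and -- the real bookkeeping crux -- identify correctly the normalising quantity $c$ shared by $c_1$ and $c_2$. Everything else is a finite, if lengthy, sequence of inner-product computations with $w_0\approx-\tilde a\sin\xi$ and $w_1\approx\tilde a\cos\xi$, which can be organised rigorously within the centre-manifold/Lyapunov--Schmidt framework of Mielke~\cite{mielke1995a}.
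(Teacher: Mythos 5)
Your proposal is essentially the paper's own route: the paper proves this lemma simply by deferring to Mielke's Lemma~4.2, whose argument --- a Bloch-wave Lyapunov--Schmidt reduction onto the two-dimensional slow subspace spanned by the translation mode $\sim\sin\xi$ and the near-critical amplitude mode $\sim\cos\xi$, with parity eliminating the terms linear in $\sigma$ and $\tau$ and a Schur complement over the small amplitude eigenvalue generating the singular $1/c$ Eckhaus contribution --- is exactly what you reconstruct, and it is also the computation the paper itself performs in scaled variables in Appendix~B for Proposition~2.3. Your identification of $\mathcal{E}_1\propto c$ vanishing at the stripe fold and of the $\mathcal{O}\bigl(ik_x(1-k_x^2)\bigr)$ off-diagonal coupling matches that reduction, and the minor discrepancies in your regular diagonal entry are absorbed by the lemma's $\mathcal{O}(\tilde a^2)$ remainders since $k_x^2-1=\mathcal{O}(\tilde a^2)$ in the regime of validity.
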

\begin{Proof}
The proof is identical to that of~\cite[Lemma 4.2]{mielke1995a}.
\end{Proof}
It is easy to see that the lower branch, $\tilde a_-$, is always unstable to co-periodic perturbations. The quadratic dependence of the critical eigenvalue on $\sigma$ and $\tau$ will form a key part in showing the pattern selection mechanism of the invasion fronts in \S\ref{s:mod_fronts_inf}.

 In order to determine the most unstable perturbations, we introduce the scalings
 \[
\mu = \epsilon^4 \hat\mu,\qquad\nu = \epsilon^2\hat\nu,\qquad (k_x^2-1)=2\epsilon^2\hat\kappa,\qquad\sigma = \epsilon^2\hat\sigma,\qquad\tau = \epsilon\hat\tau,\qquad\lambda = \epsilon^4\hat\lambda,
 \]
 where $\hat\kappa \in[-\frac{\tilde\kappa}{2},\frac{\tilde\kappa}{2}]$, $|\epsilon|\ll1$.
  
 \begin{Proposition}\label{p:stripe_instab}(Instability of small amplitude stripes) The lower branch of stripes $\tilde a_-$ is always unstable. For the upper branch of stripes $\tilde a_+$, we have three curves
 \[
 K^{\mbox{z}} = 1+ \mathcal{O}(\epsilon^8), \qquad\mbox{and},\; K^{\mbox{e}}_\pm = 1\pm\frac{135\hat\nu^2-800\hat\mu+15\hat\nu\sqrt{81\hat\nu^2-320\hat\mu}}{6400}\epsilon^2 + \mathcal{O}(\epsilon^4)
 \]
 such that the stripes are Eckhaus unstable for $k_x\not\in[K^{\mbox{e}}_-(\epsilon,\hat\mu,\hat\nu),K^{\mbox{e}}_+(\epsilon,\hat\mu,\hat\nu)]$ and zig-zag unstable for $k_x<K^{\mbox{z}}(\epsilon,\hat\mu,\hat\nu)$
 \end{Proposition}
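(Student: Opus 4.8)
The plan is to read off both instability boundaries from the long-wave expansion of the critical (translation) eigenvalue in Lemma~\ref{l:stab_rolls}, combined with the explicit upper-branch amplitude $\tilde a_+^2=\tfrac15(3\nu+\sqrt{9\nu^2-40(\mu+\kappa^2)})$ and the anisotropic scaling $\mu=\epsilon^4\hat\mu$, $\nu=\epsilon^2\hat\nu$, $k_x^2-1=2\epsilon^2\hat\kappa$, $\sigma=\epsilon^2\hat\sigma$, $\tau=\epsilon\hat\tau$, $\lambda=\epsilon^4\hat\lambda$. By Lemma~\ref{l:stab_rolls} the stripe is unstable as soon as $c_1>0$ or $c_2>0$, so the two boundaries are the zero sets of $c_2$ (transverse/zig-zag) and of $c_1$ (longitudinal/Eckhaus). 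The lower branch needs no work: it is already co-periodically unstable, i.e.\ the amplitude mode at $\sigma=\tau=0$ is positive on $\tilde a_-$ because this branch lies below the saddle-node $5\tilde a^4-6\nu\tilde a^2+8(\mu+\kappa^2)=0$, so $\tilde a_-$ is unstable for every admissible $\hat\kappa$.

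For the upper branch I would handle the two directions separately. Setting $\hat\sigma=0$, the transverse coefficient satisfies $c_2\propto(1-k_x^2)$ at leading order, so it changes sign exactly at $k_x=1$ and the zig-zag boundary is $K^{\mathrm z}=1$ to leading order, with instability on the small-wavenumber side. Promoting this to $K^{\mathrm z}=1+\mathcal O(\epsilon^8)$ requires showing that the corrections to the marginal curve coming from the $\mathcal O(\tilde a^2)$ terms in Lemma~\ref{l:stab_rolls} and from the higher transverse orders all cancel. This is a structural feature of the Bloch operator $-(1+(k_x\partial_\xi+\mathrm i\sigma)^2-\tau^2)^2$ in~(\ref{e:roll_lin_op}): rotational invariance forces $\sigma$ and $\tfrac12\tau^2$ to enter only in a fixed combination, so the transverse phase diffusion stays neutral at $k_x=1$ to all orders visible in this expansion.

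Setting $\hat\tau=0$ isolates the Eckhaus boundary. Here the truncation $\lambda=c_1\sigma^2+c_2\tau^2$ is not by itself adequate, because the $\sigma\sim\tfrac12\tau^2$ balance built into the scaling mixes the quartic transverse terms into the longitudinal mode; I would therefore first reduce the Bloch eigenvalue problem~(\ref{e:roll_lin_op}) by Lyapunov--Schmidt to its leading reduced dispersion relation $\hat\lambda=\hat\lambda(\hat\sigma,\hat\tau;\hat\mu,\hat\nu,\hat\kappa)$ --- the Newell--Whitehead--Segel dispersion of the amplitude equation --- and then set the longitudinal phase-diffusion coefficient to zero. Substituting $\tilde a_+^2$ turns this marginality condition into a polynomial relation linking $\hat\kappa$ to $(\hat\mu,\hat\nu)$; its discriminant produces the term $\sqrt{81\hat\nu^2-320\hat\mu}$, and converting back through $k_x=\sqrt{1+2\epsilon^2\hat\kappa}=1+\epsilon^2\hat\kappa+\mathcal O(\epsilon^4)$ yields the symmetric pair $K^{\mathrm e}_\pm=1\pm\tfrac1{6400}(135\hat\nu^2-800\hat\mu+15\hat\nu\sqrt{81\hat\nu^2-320\hat\mu})\epsilon^2+\mathcal O(\epsilon^4)$, with Eckhaus instability for $k_x\notin[K^{\mathrm e}_-,K^{\mathrm e}_+]$.

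I expect the Eckhaus step to be the main obstacle, for two reasons. First, one must push the Lyapunov--Schmidt reduction of~(\ref{e:roll_lin_op}) far enough to capture the genuinely $\mathcal O(\tilde a^2)$-corrected phase-diffusion coefficient, since the leading coefficients recorded in Lemma~\ref{l:stab_rolls} do not on their own locate this boundary in the present scaling. Second, the marginality relation must be solved self-consistently in $\hat\kappa$, which appears both explicitly and inside the amplitude through $S=\sqrt{9\hat\nu^2-40\hat\mu-160\hat\kappa^2}$, and one must check that its two roots collapse to the symmetric form $1\pm M\epsilon^2$ at the claimed order. By comparison the lower-branch and zig-zag claims reduce to sign bookkeeping once the reduction is available, the only delicate point there being the high-order cancellation that pins $K^{\mathrm z}$ to $1$.
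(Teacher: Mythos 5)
Your proposal is correct and follows essentially the same route as the paper's Appendix B proof: a Lyapunov--Schmidt reduction of the Floquet--Bloch eigenvalue problem~(\ref{e:roll_lin_op}) under the anisotropic scaling $(\sigma,\tau)=(\epsilon^2\hat\sigma,\epsilon\hat\tau)$, yielding a reduced dispersion relation whose marginality along the axes $\hat\tau=0$ and $\hat\sigma=0$ produces, after substituting $\tilde a_+^2$, the Eckhaus discriminant $\sqrt{81\hat\nu^2-320\hat\mu}$ and the zig-zag boundary pinned at $k_x=1$. You correctly identify that the quadratic truncation of Lemma~\ref{l:stab_rolls} is insufficient for the Eckhaus boundary at this scaling, which is precisely why the paper carries out the full reduction to the $2\times 2$ matrix pencil $\widehat{\mathbf m}$.
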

 The existence and stability boundaries of the upper branch of stripes in the bistable region look very similar to that for the SH equation with just a re-stabilising cubic nonlinearity and we do not plot them here. 
 We note that while this is only an instability result, numerical calculations below suggest that the stripes are stable for $k_x\in[K^z(\epsilon,\nu),K_+^e(\epsilon,\nu)]$. A more detailed analysis along the lines of~\cite{mielke1997} would be required to establish stability and we do not carry it out here.  
 
 \begin{figure}[h]
	\centering
	\includegraphics[width=0.8\linewidth]{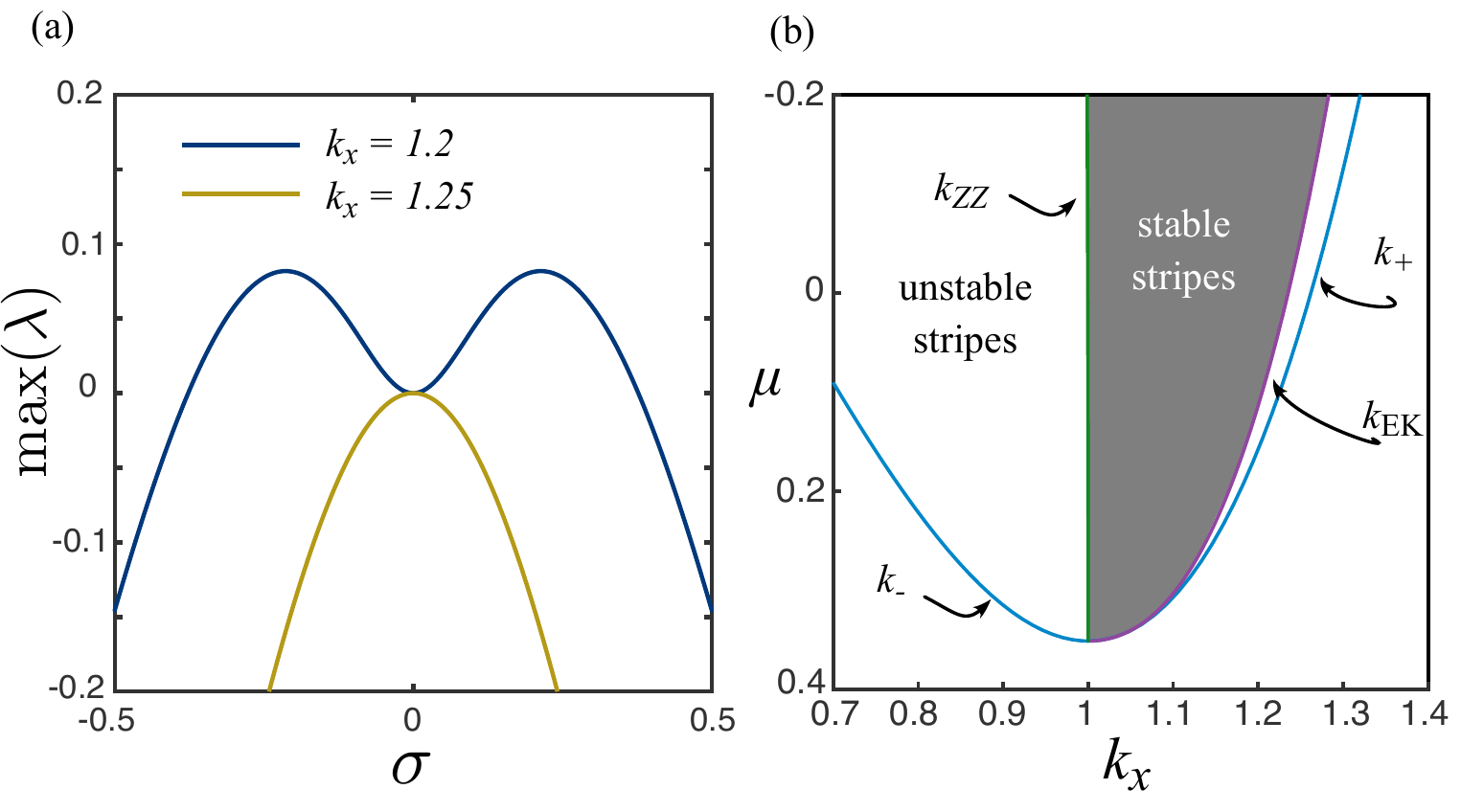}
	\caption{(a) $\max(\lambda)$-eigenvalue of the periodic orbit as the Floquet multiplier $\sigma$ is varied for $(\mu,\nu)=(0,1.25)$. All other eigenvalues are strictly negative for all $\sigma$. For $k_x=1.2$ the stripe is Eckhaus stable and unstable for $k_x>1.2365$. (b) Stripe existence and 2D stability curves for $\nu=1.25$. $k_{\pm}$ denotes the existence boundaries for the stripes, $k_{\mbox{EK}}$ denotes the Eckhaus instability boundary, and $k_{\mbox{ZZ}}$ the zig-zag instability boundary. \label{f:stripe_stab}}
\end{figure}
 
Far away from the weakly nonlinear limit, we can numerically trace out the existence and stability boundaries. The existence boundaries are found by continuing in $k_x$ and path following in two parameters the fold loci. The zig-zag instability eigenvalue, $\lambda_{\mbox{zz}}$, is found by computing
\[
\lambda_{\mbox{zz}} = 2\frac{\langle(\partial_x^2-1)u_x,u_x \rangle}{\langle u_x,u_x\rangle},
\]
for a 1D stripe; see for instance~\cite{lloyd2017}. The boundary is then found by setting $\lambda_{\mbox{zz}}=0$ and tracing out the resulting curve $k_x=k_{\mbox{ZZ}}(\mu)$. The Eckhaus instability boundary is found by computing when the zero eigenvalue changes sign in its curvature as the Floquet multiplier $\sigma$ is varied; see~\cite{radss,sherratt2012}. In particular, we solve for $(u,v,v_{\sigma},v_{\sigma\sigma},c,k_x,\lambda,\lambda_{\sigma})$ and continue the following system
\begin{align*}
-(1+k_x^2\partial_x^2)^2u - \mu u + f(u) + cu_x =& 0,\qquad x\in(0,2\pi], c\in\mathbb{R},\\
Lv - \lambda v =& 0,\qquad v,\lambda\in\mathbb{R},\\
Lv_{\sigma} - \lambda_{\sigma}v - 4(1+k_x^2\partial_x^2)k_x\partial_xv =&0, \qquad v_{\sigma},\lambda_{\sigma}\in\mathbb{R},\\
Lv_{\sigma\sigma} + 2\lambda_{\sigma}v_{\sigma}  + 8(1+k_x^2\partial_x^2)k_x\partial_xv_{\sigma} + 4v + 12k_x^2\partial_x^2v =&0,\qquad v_{\sigma\sigma}\in\mathbb{R},\\
\int_0^{2\pi}u_x^{\mbox{old}}(u-u^{\mbox{old}})\D x = 0,\qquad 
\int_0^{2\pi}v^2\D x =& 1,\\
\int_0^{2\pi}vv_{\sigma}\D x = 0,\qquad
\int_0^{2\pi}vv_{\sigma\sigma}\D x =& 0,
\end{align*}
where $L = -(1+k_x^2\partial_x^2)^2 - \mu + f'(u)$ and $u^{\mbox{old}}=\cos(x)$. This system is found by differentiating (\ref{e:roll_lin_op}), with respect to $\sigma$. We provide an initial condition 
\[
(u,v,v_{\sigma},v_{\sigma\sigma},c,k_x,\lambda,\lambda_{\sigma})=(u_s,(u_s)_x,0,0,0,0,0,0),
\]
where $u_s=u_s(x)$ is the converged periodic orbit, to then use Newton's method to converge to a solution.

In figure~\ref{f:stripe_stab}(a), we plot $\lambda(\sigma)$ with $\tau=0$ for $\nu=1.25$ with $k_x=1.2$ and $k_x=1.25$ shown in blue and gold, respectively. We see that for sufficiently large $k_x$, there is a change in curvature of the eigenvalue $\lambda(\sigma)$ leading to the Eckhaus instability shown in blue. 

In figure~\ref{f:stripe_stab}(b), we plot a typical existence and stability diagram for the upper branch of stripes at $\nu=1.25$. Here we see that qualitatively the stability boundaries are rather similar to those observed in the weakly nonlinear limit. In particular, we see that the zig-zag instability is very close to unity as predicted in Proposition~\ref{p:stripe_instab} and the Eckhaus instability curve is quadratic in the wavenumber, $k_x$.

\section{Pattern Selection Principles for Stationary fronts}\label{s:pat_select_stat}
In this section, we review the pattern selection principles for stationary pattern fronts connecting to the trivial state in the snaking region. Since the stripe patterns come in families parameterised by their wavenumbers/periods, a natural question arises about if a front selects a unique wavenumber and the mechanism for the selection. When investigating invasion fronts near the edge of the homoclinic snaking region we will be interested in how their pattern selection relates to that of the stationary fronts. 

In 1D, this pattern selection mechanism of stationary fronts connecting periodic stripes is completely resolved for the SH equation. Rewriting the stationary SH equation as a first order ODE system $U_x = F(U)$, where $U:=(u_1,u_2,u_3,u_4)^T= (u,u_x,u_{xx},u_{xxx})^T$ and $F(U) = (u_2,u_3,u_4,-2u_2 -(1+\mu)u_1 + f(u_1))^T$, one can show the system has the Hamiltonian 
\begin{equation}\label{e:1D_Ham}
\mathcal{H}_{\mbox{1D}}(u) = u_{xxx}u_x  - \frac12u_{xx}^2 + u_x^2 + \frac12(1+\mu)u^2 + G(u),
\end{equation}
where $G(u) = \int_uf(u)\D u$, which is independent of $x$. Hence, if we have a 1D front connecting a stripe state to the trivial state, then $\mathcal{H}_{\mbox{1D}}$ must vanish when evaluated along a single stripe in the far field of the front. A minor modification of~\cite[Proposition 3]{lloyd2008} shows that there exist stripe patterns that satisfy the Hamiltonian selection principle. 

One can also use the algorithms outlined in \S\ref{s:method} with $\omega=0$ or in~\cite{krauskopf2008} to compute stationary fronts connecting a spatially periodic pattern to the trivial state, but we do not do this here. Instead we plot for the sake of completeness the Hamiltonian selected wavenumber of the 1D stripes for both the quadratic-cubic and cubic-quintic SH equation in figure~\ref{f:1D_ham}. The bifurcating selected stripes have an initial wavenumber of one and are unstable until a fold where they subsequently restabilise after which, the selected wavenumber monotonically decreases.

\begin{figure}[h]
	\centering
	\includegraphics[width=0.8\linewidth]{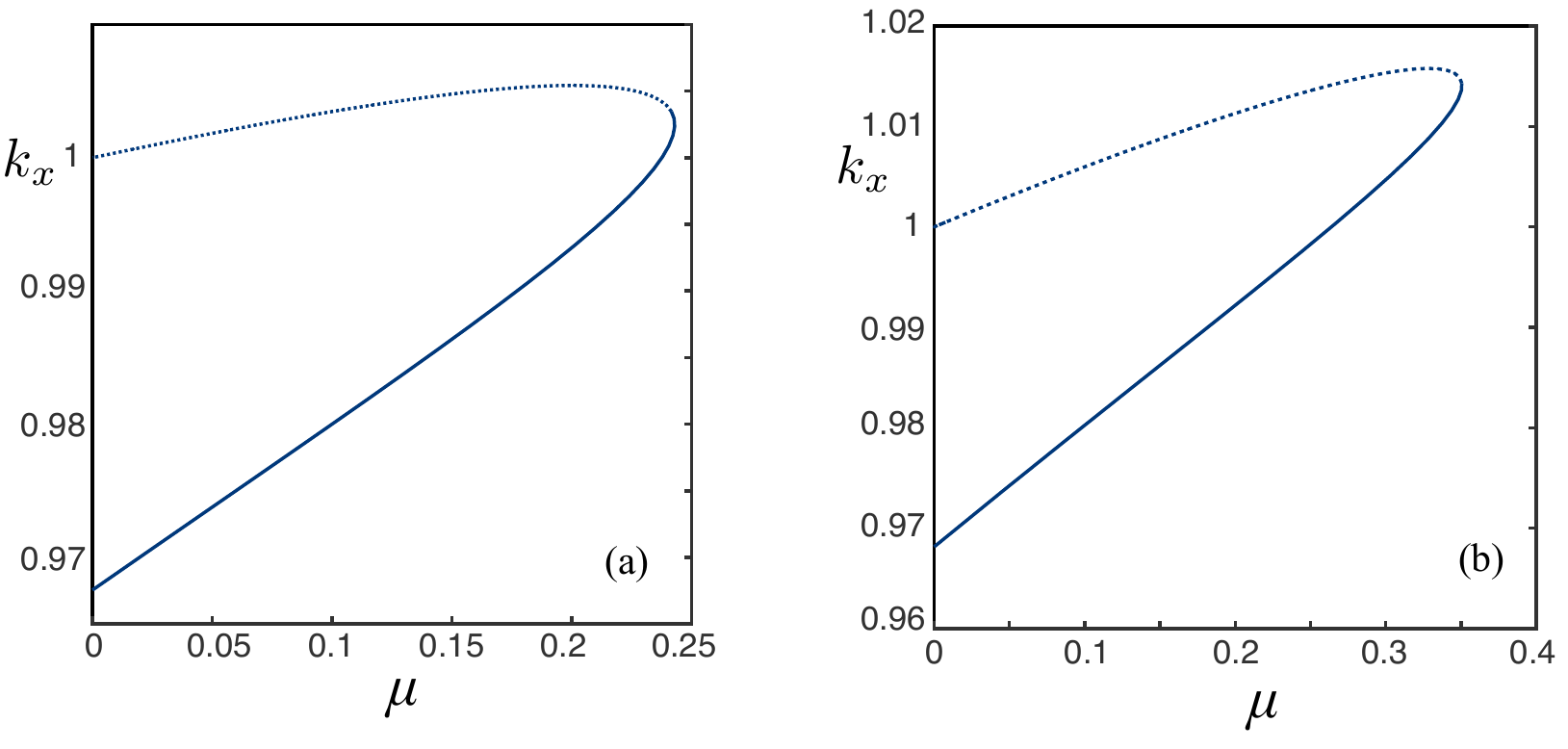}
	\caption{Hamiltonian selected wavenumber for (a) 1D stripes in the quadratic-cubic SH equation $\nu=1.6$, (b) 1D stripes in the cubic-quintic SH equation $\nu=1.25$. The co-periodic stable branches are depicted as solid lines while co-periodic unstable branches are shown as dashed lines.\label{f:1D_ham}}
\end{figure}

In addition to the Hamiltonian, the SH equation posed on $\mathbb{R}^d$ with $1\leq d\leq 3$ is a gradient system 
\[
u_t = -\nabla\mathcal{E}(u),
\]
in $H^2(\mathbb{R}^d)$, where the energy functional, $\mathcal{E}$, is given by 
\[
\mathcal{E}(u) = \int_{\mathbb{R}^d}\left[\frac{[(1+\Delta)u]^2}{2} + \frac{\mu u^2}{2} + G(u)\right] \D\mathbf{x},\qquad \mathbf{x}\in\mathbb{R}^d,
\]
and the gradient $\nabla\mathcal{E}(u) = \frac{\delta\mathcal{E}}{\delta u}(u)$ of $\mathcal{E}$ with respect to $u$ is computed in $L^2(\mathbb{R}^d)$. One often looks for points in parameter space where the energy of a single stripe, that also lies in the zero-level Hamiltonian set, has zero energy. This point in parameter space is known as the {\it Maxwell point}. At the Maxwell point, one expects to find a stationary front connecting the trivial state to stripes that have zero Hamiltonian. The homoclinic snaking is found to occur around the Maxwell point. 

In 2D, the pattern selection mechanism is similar except we now have two conserved quantities that are independent of $x$. 
\begin{Proposition}~\cite{lloyd2008,lloyd2017} (Conserved quantities for the 2D SH Equation)

If $u(x,y)$ is a smooth stationary solution of the planar SH equation~(\ref{e:sh}) which is spatially periodic with period $\ell$ in the $y$-variable, then the quantities 
\begin{align*}
\mathcal{H}(u) =& \int_0^\ell \left[u_{xxx}u_x  - \frac{u_{xx}^2}{2} + u_x^2 + \frac{(1+\mu)u^2}{2} + G(u) - u_{xy}^2 - u_y^2 + \frac{u_{yy}^2}{2} \right]\D y,\\
\mathcal{S}(u) =& \int_0^\ell\left[(u + u_{xx} + u_{yy})u_{xy} - u_y(u_x + u_{xxx} + u_{xyy}) \right] \D y,
\end{align*}
where $G(u) = \int_0^uf(v)\D v$, do not depend on $x$.
\end{Proposition}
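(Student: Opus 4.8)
The plan is to verify both identities by a direct computation: differentiate each functional under the integral sign with respect to $x$ and show that the result vanishes after integrating by parts in $y$, using the $\ell$-periodicity to discard all boundary terms. Conceptually these are the two first integrals of the spatial-dynamics formulation of the stationary problem. Reading the stationary Euler--Lagrange equation $\nabla\mathcal{E}(u)=(1+\Delta)^2u+\mu u+G'(u)=0$ as an evolution in the $x$-variable on a phase space of $\ell$-periodic functions of $y$, the flow is autonomous and Hamiltonian, with $\mathcal{H}$ the conserved Hamiltonian and $\mathcal{S}$ the Noether momentum generated by $y$-translation invariance. I would, however, present only the elementary verification, which is self-contained and uses nothing beyond the stationary equation and periodicity.

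For $\mathcal{H}$ I would first differentiate the integrand term by term in $x$. The contributions of $u_{xxx}u_x$ and $-\tfrac12u_{xx}^2$ combine to $u_{xxx}u_{xx}-u_{xx}u_{xxx}=0$ and cancel at once. Every remaining term can be written with a factor $u_x$: integrating by parts in $y$ one finds $-2u_{xy}u_{xxy}\mapsto 2u_xu_{xxyy}$, $-2u_yu_{xy}\mapsto 2u_xu_{yy}$ and $u_{yy}u_{xyy}\mapsto u_xu_{yyyy}$, with all boundary terms killed by periodicity. Collecting everything, the integrand reduces to
\[
u_x\bigl[\,u_{xxxx}+2u_{xx}+2u_{xxyy}+u_{yyyy}+2u_{yy}+(1+\mu)u+G'(u)\,\bigr]=u_x\,\nabla\mathcal{E}(u),
\]
which vanishes identically on stationary solutions. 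Hence $\D\mathcal{H}/\D x=0$.

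For $\mathcal{S}$ I would exploit the factorisations $u+u_{xx}+u_{yy}=(1+\Delta)u=:v$ and $u_x+u_{xxx}+u_{xyy}=(1+\Delta)u_x=v_x$, so that $\mathcal{S}=\int_0^\ell\bigl(v\,u_{xy}-u_y\,v_x\bigr)\D y$. Differentiating in $x$ cancels the cross terms $v_xu_{xy}-u_{xy}v_x$ and, after one integration by parts in $y$, leaves $\int_0^\ell\bigl(-v_y u_{xx}-u_y v_{xx}\bigr)\D y$. Now the stationary equation enters through $(1+\Delta)v=-\mu u+f(u)$, used to replace $v_{xx}$, together with $u_{xx}=v-u-u_{yy}$ to eliminate the last second $x$-derivative. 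The key structural observation is that the $\mu$- and nonlinear contributions appear as the total $y$-derivatives $\partial_y\bigl(\tfrac{\mu}{2}u^2\bigr)$ and $\partial_y\bigl(\int^u f\bigr)$, and hence integrate to zero, while the surviving terms assemble into $\partial_y(uv)+\partial_y(u_yv_y)$; both integrate to zero over a period, giving $\D\mathcal{S}/\D x=0$.

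The computations themselves are routine; the main obstacle is purely organisational, namely keeping track of the many high-order mixed derivatives, choosing the integration-by-parts so that every term ends up either proportional to the stationary equation or as an exact $y$-derivative, and checking that all boundary terms vanish (which requires only smoothness and $\ell$-periodicity in $y$). A secondary subtlety worth flagging is the sign bookkeeping in the potential: the cancellation for $\mathcal{H}$ hinges on $G'$ being the variational derivative appearing in $\nabla\mathcal{E}$, consistent with the gradient structure $u_t=-\nabla\mathcal{E}$, whereas for $\mathcal{S}$ the nonlinearity is irrelevant since it enters only through a total $y$-derivative.
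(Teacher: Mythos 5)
Your verification is correct, and it takes a recognisably different route from the paper. The paper does not verify the two quantities by direct differentiation at all: it invokes the pointwise, divergence-form conservation laws (\ref{e:con_quant}) of \cite[Lemma 1]{lloyd2008} --- the Noether currents associated with $x$- and $y$-translations of the second-gradient Lagrangian $\mathcal{L}(u,\nabla u,\Delta u)=\frac{(q+r)^2}{2}+\frac{\mu q^2}{2}+G(q)$ --- and simply integrates both identities in $y$ over $[0,\ell]$, with periodicity killing the divergence terms. Your spatial-dynamics remarks ($\mathcal{H}$ as the Hamiltonian, $\mathcal{S}$ as the momentum of $y$-translations) identify exactly the structure the paper is exploiting, but your written argument is the self-contained elementary check: differentiate under the integral, integrate by parts in $y$, and use the stationary equation. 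What the paper's route buys is that the quantities are \emph{derived} rather than checked, and the derivation is pointwise (valid before integration, hence reusable for other reductions); what your route buys is independence from the cited lemma and a transparent term-by-term audit, including the neat factorisation $v=(1+\Delta)u$ that makes the $\mathcal{S}$ computation almost mechanical.

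Two small points. First, in your list of surviving exact derivatives for $\mathcal{S}$ you dropped the term $-vv_y=-\partial_y\bigl(\tfrac12 v^2\bigr)$: the full reduction is $\frac{\D\mathcal{S}}{\D x}=\int_0^\ell\partial_y\bigl(\tfrac{\mu}{2}u^2-\textstyle\int_0^u f+uv-\tfrac12 v^2+u_yv_y\bigr)\D y$, so the conclusion is unaffected, but the bookkeeping as written is incomplete. Second, your sign caveat is well taken and in fact sharper than the statement itself: conservation of $\mathcal{H}$ requires $G'(u)=-f(u)$ (equivalently, $G$ must be the potential appearing in $\nabla\mathcal{E}$, as your reduction to $u_x\,\nabla\mathcal{E}(u)$ shows), whereas the proposition's normalisation $G(u)=\int_0^u f(v)\,\D v$ would leave a residual $2u_xf(u)$; the same convention issue is already visible in the 1D Hamiltonian (\ref{e:1D_Ham}).
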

The first conserved quantity was proven in~\cite{lloyd2008} while the second quantity was recently pointed out by~\cite{lloyd2017}. Both quantities can be seen from the conservation laws~\cite[Lemma 1]{lloyd2008} due to translations in $x$ and $y$, respectively,
\begin{subequations}\label{e:con_quant}
\begin{align}
\partial_x\mathcal{L}(u,\nabla u,\Delta u) - \nabla\cdot\left[u_x\mathcal{L}_p(u,\nabla u,\Delta u) + \mathcal{L}_r(u,\nabla u,\Delta u)\nabla u_x - u_x\nabla\mathcal{L}_r(u,\nabla u,\Delta u) \right] =& 0,\\
\partial_y\mathcal{L}(u,\nabla u,\Delta u) - \nabla\cdot\left[u_y\mathcal{L}_p(u,\nabla u,\Delta u) + \mathcal{L}_r(u,\nabla u,\Delta u)\nabla u_y - u_y\nabla\mathcal{L}_r(u,\nabla u,\Delta u) \right] =& 0,
\end{align}
\end{subequations}
where $\mathcal{L}(q,p,r):\mathbb{R}\times\mathbb{R}^2\times\mathbb{R}\rightarrow\mathbb{R}$ and 
\[
\mathcal{L}(q,p,r) = \frac{(q+r)^2}{2} + \frac{\mu q^2}{2} + G(q).
\]
Integrating both conservation laws in $y$ over $[0,\ell]$ and using periodicity in $y$, we find the two quantities. 

As in 1D, if we have a planar front connecting a spatially doubly periodic state (periodic in the transverse direction) to the trivial state, then both $\mathcal{H}$ and $\mathcal{S}$ must vanish when evaluated along a single periodic cell in the far field of the front. Since we have two conditions, we expect both the wavenumbers in $x$, $k_x$, and $y$, $k_y$, to be selected. 
However, for any even function in $y$, the integrand in $\mathcal{S}$ is odd and so $\mathcal{S}=0$. Hence, we do not expect pattern selection of $k_y$ for perpendicular, parallel or almost-planar stripe fronts. 

For oblique stripe fronts, there is no even symmetry in $y$ and so we may expect some sort of pattern selection to occur. In particular, for an oblique stripe $u = u_s(k_xx+k_yy;|k|)= u_s(\xi;|k|)$, where $\xi\in(0,2\pi], k = \sqrt{k_x^2+k_y^2}$, we find that $S(u)$ reduces to 
\[
\mathcal{S}(k) = 2k_xk_y\int_0^{2\pi}\left[|k|^2(u_s)'' - (u_s')^2\right]\D \xi.
\]
For marginally zig-zag stable stripes i.e., $k=k_{ZZ}$, $S(u_s)$ is zero; see~\cite{lloyd2017}. Furthermore, at $k=k_{ZZ}$, $\mathcal{H}(u_s)$ becomes
\[
\mathcal{H}(k) = \int_0^{2\pi}\left[-\frac12|k|^4(u_s'')^2 +  \frac{(1+\mu)u^2}{2} + G(u) \right]\D \xi.
\]
Now $\mathcal{H}(k)=0$ occurs when $\mathcal{E}(u_s)$ is zero i.e., at the Maxwell Point of the stripes. Hence, snaking of oblique stripe fronts is not possible and they can only exist at the Maxwell point. Since $\mathcal{S}$ and $\mathcal{H}$ only depend on $|k|$, then at the Maxwell point, any rotation of the stripes is possible. Numerically, we have confirmed that oblique stationary stripe fronts exist at the Maxwell point using the algorithms described in  \S\ref{s:method} but we do not show this here. 

\begin{Proposition}
Stationary oblique stripe fronts of the planar SH equation~(\ref{e:sh}) on a fixed periodic infinite strip $(x,y)\in\mathbb{R}\times \mathbb{S}^1$, where $\mathbb{S}^1=\mathbb{R}/2L_y\mathbb{Z}$ with period $2L_y$ in the $y$-direction, can only exist at the Maxwell point i.e., oblique stripe fronts cannot snake. At the Maxwell point, any orientation of the stripes with respect to the front interface is possible and the selected stripe wavenumber is $k_{ZZ}$.
\end{Proposition}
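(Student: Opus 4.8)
The plan is to use the two $x$-independent conserved quantities $\mathcal{H}$ and $\mathcal{S}$ established above as far-field matching constraints, exactly in the spirit of the preceding stationary-front discussion. A stationary oblique stripe front on the strip is a heteroclinic in the $x$-direction joining the trivial state $u\equiv 0$ at one end to a pure oblique stripe $u_s(k_xx+k_yy;|k|)$ in the far field at the other. First I would argue that, because $\mathcal{H}$ and $\mathcal{S}$ do not depend on $x$ and both vanish identically on $u\equiv 0$, continuity in $x$ forces $\mathcal{H}(u_s)=0$ and $\mathcal{S}(u_s)=0$ when evaluated on a single periodic cell of the far-field stripe. This step presupposes that the front converges to the pure stripe as $x\to+\infty$ with enough decay for the conserved densities to be well defined and to attain their limiting values; I would fold this convergence into the definition of an oblique stripe front, consistent with the numerical construction of \S\ref{s:method}.

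Next I would impose $\mathcal{S}(u_s)=0$. Using the reduction $\mathcal{S}(k)=2k_xk_y\int_0^{2\pi}[\,|k|^2(u_s)''-(u_s')^2\,]\D\xi$ and noting that a genuinely oblique front has $k_xk_y\neq 0$, vanishing of $\mathcal{S}$ is equivalent to vanishing of the bracketed integral. By the result of~\cite{lloyd2017} this integral is a multiple of the zig-zag eigenvalue $\lambda_{zz}$, so it vanishes precisely on the marginally zig-zag-stable family, i.e.\ at $|k|=k_{ZZ}$. I would emphasise here that this is exactly where the oblique case differs from the parallel and perpendicular cases: those carry an even symmetry in $y$ that makes $\mathcal{S}\equiv 0$ automatically, selecting no transverse wavenumber, whereas obliqueness activates $\mathcal{S}$ as a genuine constraint. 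The one point needing care is uniqueness: I would verify that, across the admissible range of wavenumbers, $\lambda_{zz}$ undergoes a single sign change, so that $|k|=k_{ZZ}$ is the only solution and the selected magnitude is well defined.

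With $|k|=k_{ZZ}$ fixed I would substitute into $\mathcal{H}$, which collapses to $\mathcal{H}(k)=\int_0^{2\pi}[-\tfrac12|k|^4(u_s'')^2+\tfrac{(1+\mu)u^2}{2}+G(u)]\D\xi$, and then identify $\mathcal{H}(u_s)=0$ with $\mathcal{E}(u_s)=0$, i.e.\ with the Maxwell point. Since $\mathcal{E}(u_s)$ varies with $\mu$ and vanishes only at isolated Maxwell points, the front can exist only at such a point and in particular cannot persist over an open $\mu$-interval, so no snaking is possible. For the orientation statement I would observe that both $\mathcal{H}$ and the bracketed $\mathcal{S}$-integral depend on the wavevector only through its magnitude $|k|$; writing $(k_x,k_y)=|k|(\cos\theta,\sin\theta)$, the two selection conditions are invariant under rotation of $\theta$. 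Hence once $|k|=k_{ZZ}$ and $\mu$ sits at the Maxwell point, every angle $\theta$ yields an admissible far-field stripe, and so any orientation relative to the front interface is allowed.

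The main obstacle I expect is the far-field matching step: rigorously transferring the zero-level constraints to the far field requires genuine control of the approach of the heteroclinic to the pure oblique stripe, rather than merely assuming the limit, and one must also respect compatibility of the strip period $2L_y$ with $k_y$ (which quantises $\theta$ on a fixed strip, so ``any orientation'' should be read as ``the selection principle prefers no orientation''). By contrast, the algebraic reductions of $\mathcal{S}$ and $\mathcal{H}$ and the identification with $\lambda_{zz}$ and $\mathcal{E}$ are routine given the cited results; the real content is that convergence to a pure stripe, combined with $x$-independence of $\mathcal{H}$ and $\mathcal{S}$, legitimately pins the far-field wavenumber at $k_{ZZ}$ and the parameter at the Maxwell point.
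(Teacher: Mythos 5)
Your proposal is correct and takes essentially the same route as the paper: you evaluate the $x$-independent conserved quantities $\mathcal{H}$ and $\mathcal{S}$ on the far-field stripe, use $k_xk_y\neq0$ so that $\mathcal{S}=0$ forces $|k|=k_{ZZ}$ via the zig-zag eigenvalue of~\cite{lloyd2017}, reduce $\mathcal{H}=0$ at $k=k_{ZZ}$ to $\mathcal{E}(u_s)=0$ (the Maxwell point), and deduce orientation freedom from the fact that both quantities depend on the wavevector only through $|k|$. Your additional caveats (rate of far-field convergence, single sign change of $\lambda_{zz}$, and quantisation of the angle by the strip period $2L_y$) go slightly beyond the paper's brief discussion but do not alter the argument.
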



\section{Weakly nonlinear analysis of fronts}\label{s:weak}
In this section, we review the formal weakly-nonlinear analysis for depinning fronts near $\mu\sim0$. Formal derivation and analysis of the amplitude equations for 1D modulated fronts in the subcritical region for the generalised SH equation was first done by Kao \& Knobloch~\cite{kao2013}. 
Rigorous derivation and analysis for fronts propagating into unstable states  in the 1D supercritical cubic SH equation was carried out by Eckmann \& Wayne~\cite{eckmann1991} and extended to hexagon modulated fronts by Doelman {\it et al.}~\cite{doelman2003}.

\subsection{Parallel stripe fronts}
Since parallel stripe fronts are just the simple 2D extension of the 1D fronts, we consider the 1D SH equation with a general nonlinearity given by
\[
u_t = -(1+\partial_x^2)^2u -\mu u + f_2u^2 + f_3u^3 + f_4u^4 + f_5u^5 + \mathcal{O}(u^6).
\]
We are interested in modulated fronts near the co-dimension two point where the stripes bifurcate sub/supercritically. 
Hence, following~\cite{kao2013}, we scale the parameters 
\[
\mu = \epsilon^4\tilde \mu,\qquad f_3 = -\frac{38}{27}f_2^2 + \epsilon^2 b,
\]
where $\epsilon\ll1,b\in\mathbb{R}$, 
and we consider the ansatz
\begin{equation}\label{e:1D_ansatz}
u(x,t) = \epsilon A(X,T)e^{ix} + c.c + \mathcal{O}(\epsilon^2),
\end{equation}
where $X=\epsilon^2 x, T = \epsilon^4t$. Substituting the ansatz~(\ref{e:1D_ansatz}) into~(\ref{e:sh}) and equating at orders $\mathcal{O}(\epsilon^j),j=1,2,3,4,5$, (see~\cite{kao2013,budd2005}) one ends, up after a lot of algebra, with the amplitude equation for $A(X,T)$ at $\mathcal{O}(\epsilon^5)$ given by
\begin{equation}\label{e:amp1}
A_T = 4A_{XX} - \tilde \mu A + \frac{32if_2^2}{27}|A|^2A_X + 3b|A|^2A - 2\left(\frac{1960f_2^4}{81}- \frac{58f_2f_4}{3} - 5f_5 \right)|A|^4A.
\end{equation}
Various authors have studied stationary localised solutions of this equation; see for instance~\cite{budd2005,burke2007c,dawes2010} and figure~\ref{f:amp_stat}. The typical bifurcation structure is a subcritical bifurcation at $\tilde\mu=0$ where two localised pulses bifurcate off the trivial state; see figure~\ref{f:amp_stat}(a). As $\tilde\mu$ is increased the pulses broaden until they resemble a combination of a front and back; see figure~\ref{f:amp_stat}(b). The bifurcation curve asymptotes at the Maxwell point for the system and there is no snaking of the pulse solutions or pinning of the fronts in this equation. We note that in the full SH equation, exponentially small terms create a pinning region where homoclinic snaking takes place about the Maxwell point; see  Kozyreff \& Chapman~\cite{kozyreff2006,kozyreff2009,kozyreff2013}.
\begin{figure}[h]
	\centering
	\includegraphics[width=0.7\linewidth]{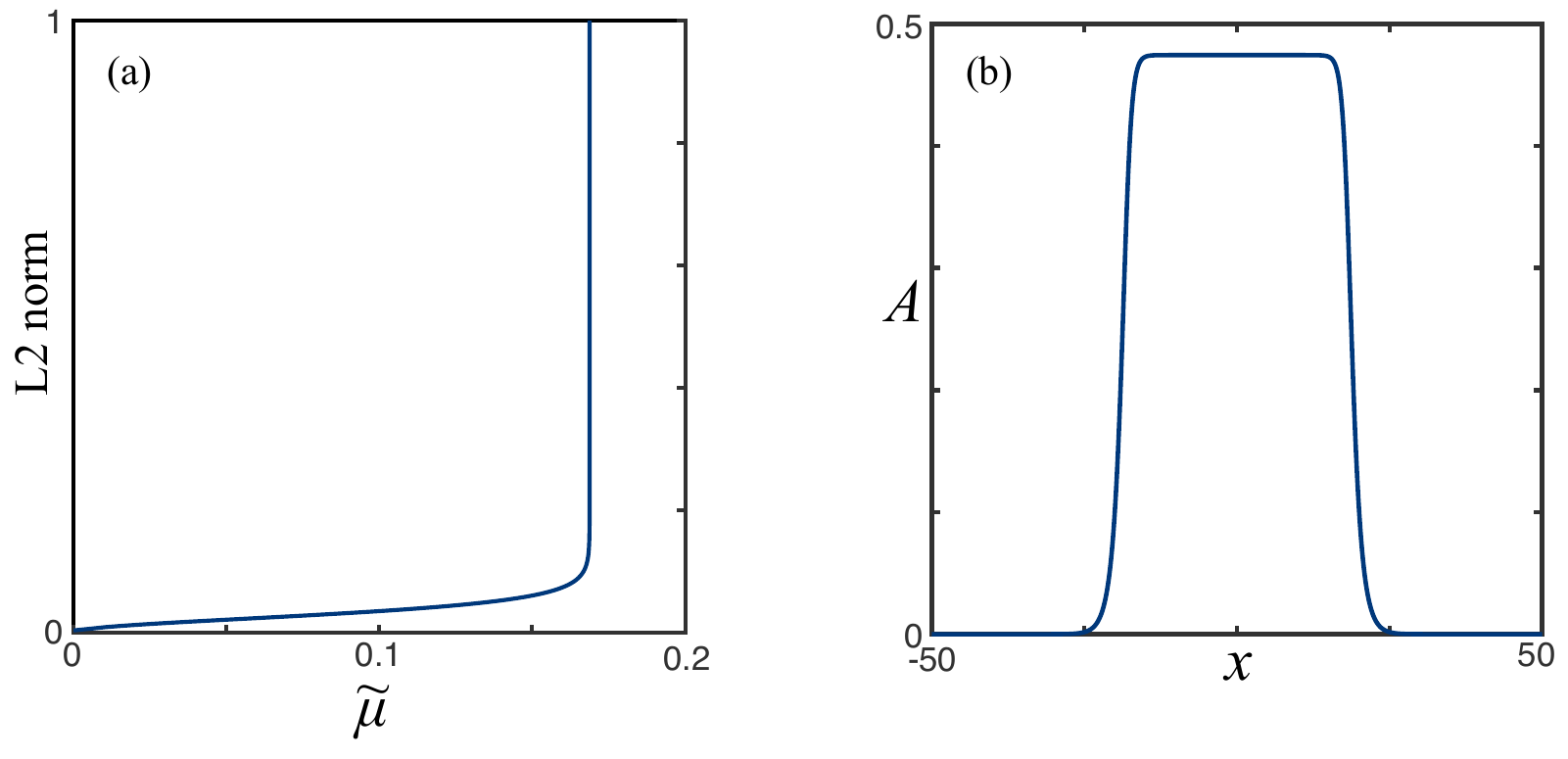}
	\caption{(a) Bifurcation diagram for a stationary pulse of~(\ref{e:amp1}) with $f_2=f_5=0,b=1$. The bifurcation diagram asymptotes at the Maxwell point (b) plot of $\Re(A)$ near the Maxwell point.\label{f:amp_stat}}
\end{figure}

The modulated fronts that we seek correspond to front solutions $A(X,T) = A(X-\tilde cT)=:A(Z)$ of (\ref{e:amp1}) and satisfy the ODE
\begin{equation}\label{e:GL_1D}
4A_{ZZ} + \tilde cA_Z - \tilde\mu A + \frac{32if_2^2}{27}|A|^2A_Z + 3b|A|^2A - 2\left(\frac{1960f_2^4}{81}- \frac{58f_2f_4}{3} - 5f_5 \right)|A|^4A = 0,
\end{equation}
where the wavespeed $c=\epsilon^2\tilde c$. 

Rescaling~(\ref{e:GL_1D}) yields the Ginzburg-Landau equation analysed by Kao \& Knobloch~\cite{kao2013} and Ponedel {\it et al.}~\cite{ponedel2017}:
\begin{equation}\label{e:rescale_GL_1D}
A_{ZZ}  + \tilde cA_Z - \tilde\mu A + ai|A|^2A_Z + b|A|^2A - |A|^4A = 0,
\end{equation}
where 
\[
a = \frac{4f_2^2}{21\sqrt{5}}\left(f_2^4 - \frac{783}{980}f_2f_4 - \frac{81}{392}f_5 \right)^{-1/2},
\]
provided $f_2^4 - \frac{783}{980}f_2f_4 - \frac{81}{392}f_5>0$. We note that in the cubic-quintic case $a=0$. The depinning fronts we are interested in correspond to nonlinear travelling fronts of the form~\cite{ponedel2017} 
\begin{equation}\label{e:CGL_front}
A(Z) = a_Ne^{iq_NZ}(1+e^{2a_N^2e_1Z})^{-\frac12-i\frac{e_0}{2e_1}},
\end{equation}
where 
\begin{align*}
a_N^2 =& \frac{2(5\Lambda - 6) + 2\Upsilon\sqrt{(2\Lambda+\tilde\mu\Delta)/\Gamma} }{\Delta},\\
q_N^2 =&  \frac{a}{\Delta}\left[-2\Lambda + (6-\Lambda)\sqrt{(2\Lambda+\tilde\mu\Delta)/\Gamma} \right],\\
\tilde c         =& \sqrt{\frac{\Gamma}{3}}\left[\frac{(\Lambda-6)+\sqrt{(2\Lambda+\tilde\mu\Delta)\Gamma}}{\Delta} \right],
\end{align*}
where $\Gamma=16-3a^2,\Delta = 16-4a^2,\Upsilon=8-3a^2,\Lambda=2,e_0=-\frac14a,e_1=\frac14\sqrt{\Gamma/3}$ and $\omega=-q_N\tilde c$, in the case where $b=1$. We call $\omega$ the {\it transition frequency}. The front connects the rotating wave state $A=a_Ne^{i(q_NZ - \omega t)}$ as $Z\rightarrow-\infty$ to the trivial state $A=0$ as $Z\rightarrow\infty$. The wavenumber of the selected stripes is given by
\[
k = 1 + \epsilon^2q_N + \mathcal{O}(\epsilon^4),
\]
and for the cubic-quintic SH equation we find the correction is higher order while for the quadratic-cubic SH equation we find the wavenumber selection is less than one. 

\begin{figure}[h]
	\centering
	\includegraphics[width=1\linewidth]{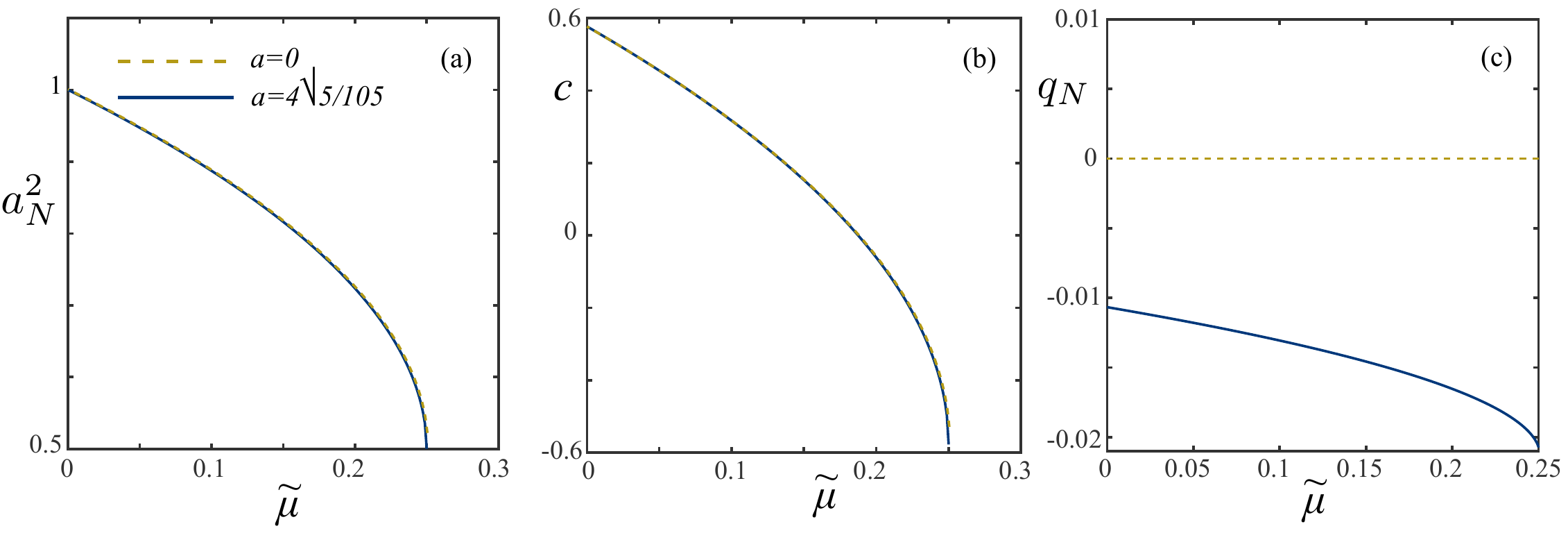}
	\caption{Bifurcation diagrams for traveling wave solution CGL~(\ref{e:rescale_GL_1D}) for the quadratic-cubic and cubic-quintic SH equation with $b=1$. (a) shows the square of the amplitude of the travelling wave (b) the selected wavespeed and (c) the selected far-field wavenumber. We see that for the quadratic-cubic SH equation with $a\neq0$, the selected far-field wavenumber monotonically decreases as one approaches the fold of the stripes in the bistable region.  \label{f:parallel_amp}}
\end{figure}

In figure~\ref{f:parallel_amp}, we plot the amplitude, $a_N$, selected correction wavenumber, $q_N$, and the wavespeed for the quadratic-cubic (with $a=4\sqrt{5}/105$) and cubic-quintic SH equation, both with $b=1$. The main observation is that as we go further in to the bistability region, both the amplitude and selected front speed decrease monotonically. Furthermore, for the quadratic-cubic SH equation, the amplitude equations predict that the selected wave number should also decay monotonically in particular there is also a unique pattern selection mechanism for retreating fronts.

We note that all of this can be put on a rigorous foundation using the spatial dynamics ideas in~~\cite{haragus1999} and transversality of the fronts in the extended phase space (including the wavespeed $c$) has been proven in~\cite{duan1995}.

\subsection{Oblique and perpendicular stripe fronts}
It is possible to capture oblique invasion stripes in figure~\ref{f:stripe_fronts}(b) for $\mu\sim0$ using weakly nonlinear amplitude equations~\cite{malomed1990}. To do this we consider the ansatz
\begin{equation}\label{e:ob_ansatz}
u(x,y,t) = \epsilon A(\epsilon^2x,\epsilon^4t)e^{ikr} + c.c. + \mathcal{O}(\epsilon^2),
\end{equation}
where $r = \cos(\alpha)x - \sin(\alpha)y$ and $\alpha$ is the orientation of the interface with respect to the strip pattern with $\alpha=0$ corresponding to parallel stripes. We concentrate on just the cubic-quintic SH equation. Upon substituting in~(\ref{e:ob_ansatz}) into the SH equation with 
\[
\mu = \epsilon^4\tilde\mu,\qquad \nu = \epsilon^2\tilde\nu,
\]
at $\mathcal{O}(\epsilon^5)$ the amplitude equation for $A$ is given by
\begin{equation}
A_T = 4(\cos(\alpha))^2A_{XX} - \tilde\mu A  + 3\tilde\nu |A|^2A - 10|A|^4A.
\end{equation}
We note that at $\alpha=\pi/2$, the $A_{XX}$ term disappears and this corresponds to the stripes becoming perpendicular. Going to a travelling frame $Z=(X-\hat c T)/\cos(\alpha)$ where $A(X,T)=A(Z)$, we find the equation 
\[
4A_{ZZ}+\frac{\hat c}{\cos(\alpha)}A_Z - \tilde\mu A + 3\tilde\nu|A|^2A - 10|A|^4A = 0.
\]
This is the same equation as~(\ref{e:GL_1D}) just with a different wavespeed i.e. $\tilde c_{1D} = \hat c/\cos(\alpha)$. Hence, we have the same traveling front solution but with a wavespeed $\tilde c/\cos(\alpha)$. Therefore, weakly nonlinear oblique stripe fronts are predicted to travel slower than parallel (1D) depinning fronts is slower than the parallel fronts in the absence of pinning effects/exponentially small corrections. 

\begin{figure}[h]
	\centering
	\includegraphics[width=0.8\linewidth]{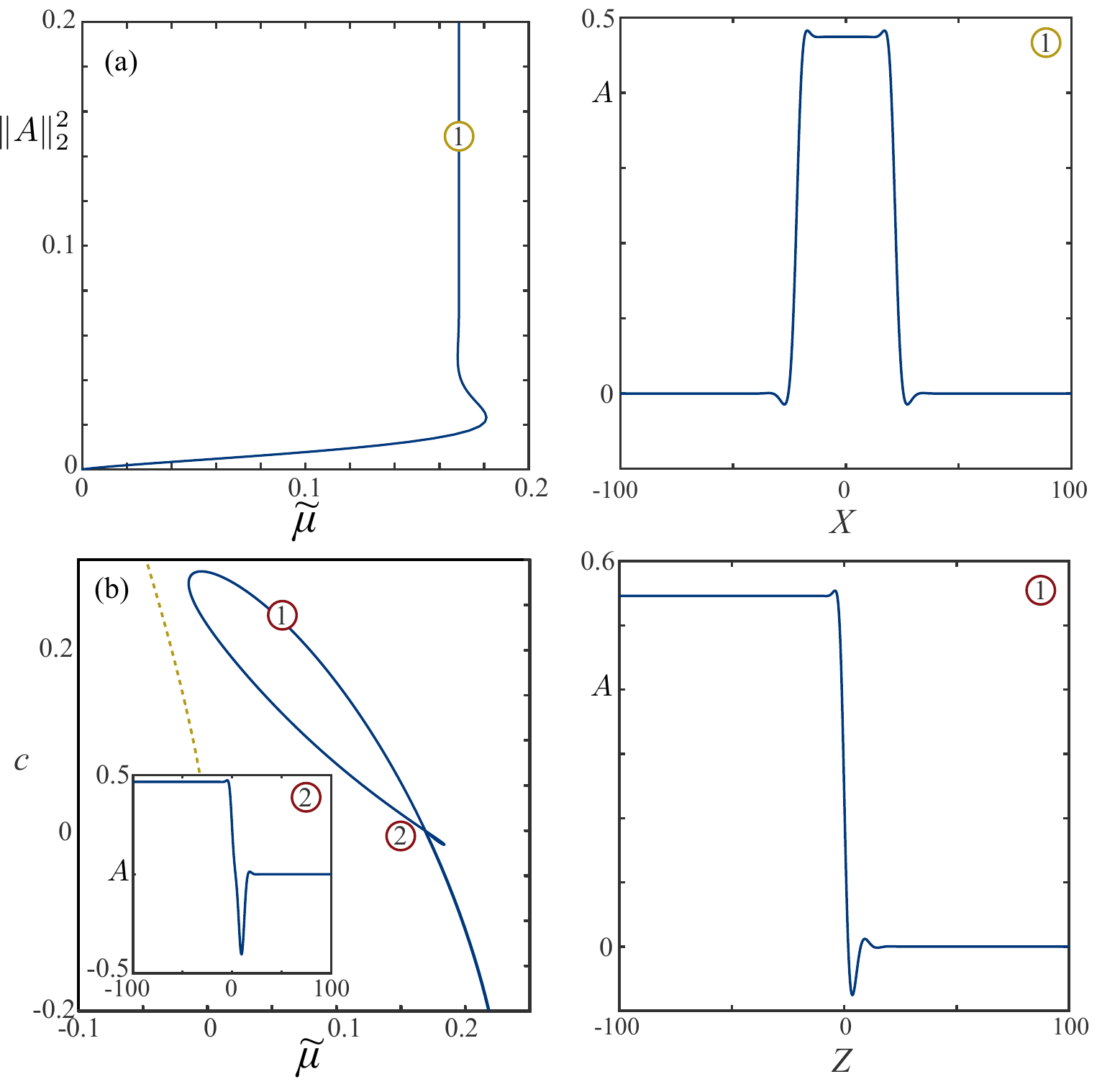}
	\caption{Bifurcation diagram for localised patterns in the perpendicular amplitude equation~(\ref{e:perp_amp}) with $\tilde\nu=1$ (a) stationary pulse that bifurcates from the trivial state (b) travelling fronts in the frame $Z=X-\tilde cT$. In panel (b) we also plot in dashed gold the linear spreading speed. \label{f:perp_amp}}
\end{figure}

When $\alpha=\pi/2$ the stripes become perpendicular and we change the scaling in the ansatz 
\[
u(x,y,t) =  \epsilon A(\epsilon x,\epsilon^4t)e^{iy} + c.c. + \mathcal{O}(\epsilon^2).
\]
The corresponding amplitude equation at $\mathcal{O}(\epsilon^5)$ is given by
\begin{equation}\label{e:perp_amp}
A_T = -A_{XXXX} - \tilde\mu A + 3\tilde \nu |A|^2 A - 10|A|^4A.
\end{equation}
This equation is still very difficult to analyse. The existence of stationary pulses can be proved using topological shooting methods (see Peleiter \& Troy~\cite{peletier2001}) and these pulses broaden until they reach the 1D Maxwell point much like the standard bifurcation diagram seen for the amplitude equation with just a second-order derivative in $X$; see Figure~\ref{f:amp_stat} and \ref{f:perp_amp}(a). However, the existence of travelling fronts looks difficult to establish. Instead, we numerically trace out the travelling fronts in figure~\ref{f:perp_amp}(b) in the travelling frame $Z=X-\tilde cT$. Here we see a significant difference for fronts compared to the parallel stripe case. The stationary front that the stationary pulses appear to converge at the Maxwell point, start to travel as we move away from the Maxwell point. As $\tilde\mu$ is decreased, the travelling fronts undergo a fold bifurcation and loop back towards the Maxwell point. The location of the fold depends on $\tilde\nu$ and can occur in the bistable region. The travelling front after the fold, develops a significant dip in its profile that eventually ends up as a multi-front and pulse solution terminating at the Maxwell point. For $\tilde\mu<0$, the fronts are known as `pushed'~\cite{saarloos2003} since the trivial state is unstable. In figure~\ref{f:perp_amp}(b) we have also plotted the linear spreading speed for the propagation of fronts into unstable states which determines the boundary between ``pushed" and ``pulled" fronts; see~\cite{saarloos2003}.

\section{Semi-analytical perturbation theory near the homoclinic snake}\label{s:semi_ana}
We review the semi-analytical perturbation theory near the homoclinic snake in 1D; see~\cite{burke2006,aranson2000}. Let $\delta$ be a small distance outside the homoclinic snaking region $\mu=\mu_{\mbox{fold}} + \delta$, with $|\delta|\ll1$. The transition time, $T$, between the passage of two successive nodes  is found by carrying out the perturbation
\begin{equation}\label{e:semi_ansatz}
u(x,t) = U_0(x) + |\delta|^{1/2}u_1(x,t),
\end{equation}
where $U_0(x)$ is a stationary localise pulse located at a fold on the homoclinic snake. Substituting~(\ref{e:semi_ansatz}) in to~(\ref{e:sh}) we find
\begin{equation}\label{e:semi_ana_1}
\mathcal{L}u_1 = \partial_tu_1 - |\delta|^{1/2}[\mbox{sgn}(\delta)U_0 + f_2u_1^2 + 3f_3U_0u_1^2 + 6f_4U_0^2u_1^2 + 10f_5U_0^3u_1^2] + \mathcal{O}(\delta),
\end{equation}
where $\mathcal{L}=\mathcal{L}[U_0]$ is the linearised SH operator at $\mu_{\mbox{fold}}$. This equation has to be solved subject to the requirement that $|u_1|\rightarrow0$ as $|x|\rightarrow\infty$. The righthand side of~(\ref{e:semi_ana_1}) is uniformly small provided $|\delta|\ll 1$ and hence the perturbation $u_1 $ evolves on a timescale $\mathcal{O}(|\delta|^{-1/2})$. At leading order, one has to solve the eigenvalue problem $\mathcal{L}u_1=0$. At a saddle node high up the snake, $\mathcal{L}$ has a three-dimensional kernel spanned by $\hat U_{e}(x)$ the even mode whose eigenvalue vanishes at the fold, $\hat U_{o}(x)$ the odd mode that tracks
the even mode ever more closely as one moves up the
snake, and $\hat U_n(x)$ the neutrally stable odd mode $U_0'(x)$.  Hence, we can write
\[
u_1(x,t) = a(t)\hat U_e(x) + b(t)\hat U_o(x) + c(t)\hat U_n(x) + \mathcal{O}(|\delta|^{1/2}),
\]
where $a,b,c$ are slowly evolving real amplitudes. Substituting this ansatz in to~(\ref{e:semi_ana_1}) and employing the Fredholm alternative on the righthand side of (\ref{e:semi_ana_1}) yields a system of ODEs for $a,b$ and $c$. The calculation is simplified by noting that in the space of reflection symmetric perturbations the ``centre of mass" of the pattern remains fixed. Hence, we can set $b\equiv c\equiv 0$ yielding 
\begin{equation}
\alpha_1\dot a = |\delta|^{1/2}[\alpha_2\mbox{sgn}(\delta) + \alpha_3a^2] + \mathcal{O}(|\delta|),
\end{equation}
where 
\begin{align}
\alpha_1 = \int_{-\infty}^{\infty}\hat U_e^2\D x,\qquad \alpha_2 = \int_{-\infty}^{\infty}U_0\hat U_e\D x,\qquad \alpha_3 = \int_{-\infty}^{\infty}(f_2  + 3f_3 U_0 + 6f_4U_0^2 + 10f_5U_0^3)\hat U_e^3\D x.
\end{align}
The quantities $\alpha_i$ need to be calculated numerically and hence why the calculation is semi-analytical. The transition time $T$ to pass between two successive folds is estimated as the time it takes $a$ to pass from $-\infty$ to $+\infty$ i.e.,
\begin{equation}\label{e:semi_ana_T}
T = \frac{\pi\alpha_1}{(|\delta|\alpha_2\alpha_3)^{1/2}} =: \alpha |\delta|^{-1/2}.
\end{equation}

\begin{figure}[h]
	\centering
	\includegraphics[width=0.8\linewidth]{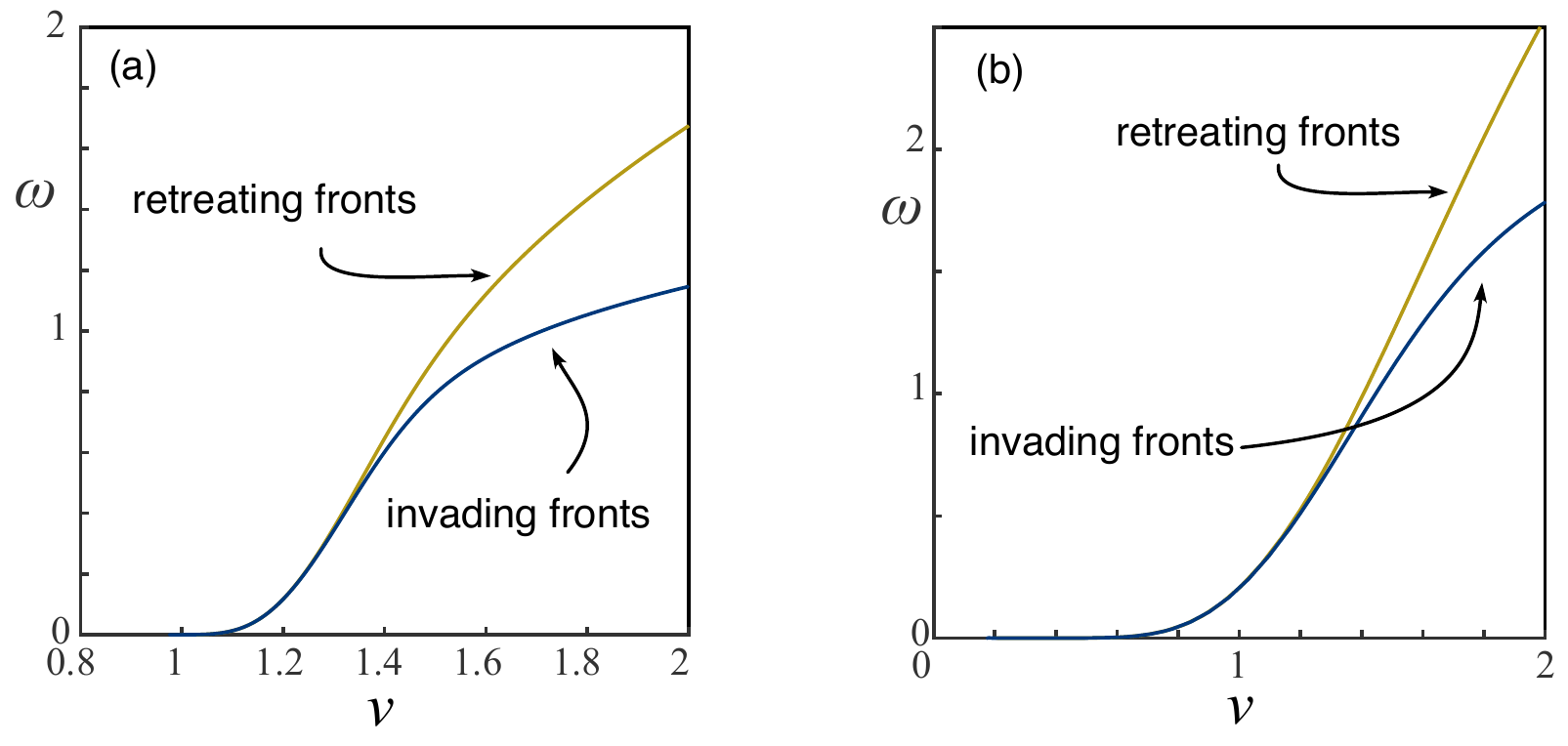}
	\caption{Semi-analytical prediction of the transition time pre-factor (a) quadratic-cubic (b) cubic-quintic SH equation. Gold line denotes trivial state invading while blue line denotes patterned state invading.\label{f:semi_ana_1D}}
\end{figure}

The integrals $\alpha_i$ vary strongly for localised pulses taken at the bottom of the snake (i.e. those with a small L2 norm) but start to converge as one takes pulses high up the snake. We therefore take a localised pulse sufficiently high up the snake so that the integrals $\alpha_i$ do not strongly depend on which localised pulse is taken. This corresponds to looking at invasion fronts.  Figure~\ref{f:semi_ana_1D} shows the transition frequency pre-factor $\omega = 2\pi/\alpha$ for both the quadratic-cubic and cubic-quintic SH equation in as $\nu$ is varied. We note that this frequency is approximately the invasion speed since $k_x\approx1$. We also plot the predicted transition frequency for retreating fronts close to the homoclinic snake where the trivial state invades the pattern. While these fronts do not select a unique transition frequency it is natural to ask what happens to a front whose stripes have the same wavenumber as that selected at the edge of the homoclinic snaking region. Here we see a shorter transition time $T$ compared to the invading fronts.

This calculation clearly has some problems as one would like to do a full series expansion of~(\ref{e:semi_ansatz}) but then this generates higher order terms that are not small in $\delta$ for the equation for $a$.  
Furthermore, this calculation tells us nothing about the selected spatial wavenumber of the pattern forming front only that it is close to that of the selected spatial wavenumber of the stationary localised pulse at the fold. 

\section{Numerical method for invasion fronts}\label{s:method}
In this section, we describe the numerical methods for time simulations, and boundary value problems for the perpendicular, oblique, parallel and almost-planar stripe invasion fronts. 

\subsection{Initial value solver}\label{s:ivp_method}
Time simulations of~(\ref{e:sh}) are carried out on a periodic rectangle using a 4th order exponential time-stepper Runge-Kutta scheme in time and fast Fourier method in space~\cite{kassam2005}. The scheme is implemented in \textsc{Matlab2017b} on a dual hexa-core (2.93 GHz) Mac Pro with 24GB RAM.

\subsection{Perpendicular stripes invasion fronts}\label{s:perpendicular_stripes_numerics}
We discuss the implementation of the continuation method for 2D perpendicular stripes. This computation is the easiest of those that we carry out since the asymptotic state (a perpendicular stripe) is constant in the front direction. Perpendicular travelling stripes solve the BVP
\begin{align}
cu_\rho - (1 + \partial_\rho^2 + k_y^2\partial_\gamma)^2u + f(u;\mu,\nu) =& 0,\qquad \rho\in[-L_\rho,L_\rho],\gamma\in(0,2\pi],\\
u(\rho,\gamma)-u(\rho,\gamma+2\pi)=&0,
\end{align}
where $u(x,y,t)=u(x-ct,k_yy)=:u(\rho,\gamma)$. We impose Neumann boundary conditions, $u_\rho=u_{\rho\rho\rho}=0$, at $\rho=\pm L_\rho$. To deal with the translational invariance symmetries in $\rho$ and $\gamma$, we impose the standard phase conditions~\cite{krauskopf2007}
\[
\int_{-L_\rho}^{L_\rho}\int_{-\pi}^{\pi}u^{\mbox{old}}_{\rho}(u-u^{\mbox{old}})\D \gamma\D \rho = 0,\qquad\mbox{and}\qquad \int_{-L_\rho}^{L_\rho}\int_{-\pi}^{\pi}u^{\mbox{old}}_{\gamma}(u-u^{\mbox{old}})\D \gamma\D \rho = 0.
\]
The resulting system is discretised with 4th order finite-differences in $\rho$ and pseudo-spectral Fourier collocation method in $\gamma$; see~\cite{trefethen2000}. The phase conditions are computed using the trapezoidal rule.  

\subsection{Oblique stripe invasion fronts}\label{s:oblique_stripe_numerics}
We now discuss the oblique stripe front continuation method. This method is an adaption of that described in Avery {\it et al.}~\cite{avery2018} in the context of oblique fronts with directional quenching.

We introduce the travelling coordinates $u(x,y,t) = u(x - c_xt,k_y(y-c_yt)) =:u(\rho,\gamma)$ and substituting into (\ref{e:sh}) we find
\begin{equation}\label{e:oblique_sh}
\mathbf{c}\cdot(u_\rho,k_yu_\gamma) -(1+\partial_\rho^2 + k_y^2\partial_{\gamma}^2)^2u - \mu u + f(u) =: Lu + f(u),
\end{equation}
where $\mathbf{c}=(c_x,c_y)$, and satisfies $\mathbf{c}\perp (k_x,k_y)$ the far-field wavenumbers i.e., $c_y= -c_xk_x/k_y$ so that the asymptotic patterns are stationary in the co-moving frame. The far-field stripes, $u_s(kx)=u_s(\xi)$ satisfy 
\[
-(1+k^2\partial_\xi^2)^2u_s - \mu u_s + f(u_s) = 0,\qquad u_s(0) = u_s(2\pi),
\]
where $k^2=k_x^2+k_y^2$. The BVP for stripe $u_s$ is solved using a pseudo-spectral Fourier collocation method with an additional phase condition
\[
\int_0^{2\pi}(\partial_\xi u_s^{\mbox{old}})(u_s-u_s^{\mbox{old}})\D\xi = 0,
\]
where $u_s^{\mbox{old}}=\cos(\xi)$, 
and then interpolated using a band-limited interpolant~\cite{trefethen2000}. 

For invading oblique stripes in the bistable region, it can be shown (see~\cite{avery2018}) that the linearisation about a front is Fredholm with index $-1$ in an exponentially weighted $L^2$-space and hence, we expect the fronts to come in one-parameter families parameterised by $k_y$ and select a front speed $c_x$ and far-field wavenumber $k_x$. Varying $k_y$, changes the angle of the stripes with respect to the front interface. 

We carry out a far-field core decomposition as for the fronts by setting~\cite{avery2018}
\begin{equation}
u(\rho,\gamma) = u_s(k_x\rho + \gamma;k)\chi(\rho) + w(\rho,\gamma),
\end{equation}
where $\chi(\rho)= (1+\mbox{tanh}(m(\rho-d)))/2$
and substituting this into ~\eqref{e:oblique_sh} to yield the inhomogeneous PDE for $w$ given by
\begin{equation}
L[u_s\chi + w] + f(u_s\chi + w) - \chi\left( Lu_h + f(u_s)\right) = 0,
\end{equation}
where we have subtracted to the far-field stripe solution. We add two phase conditions
\begin{align}
\int_0^{2\pi}\int_{L_x}^{L_x}([(\partial_\rho - k_x\partial_\gamma) ]w^{\mbox{old}})(w-w^{\mbox{old}})\D\rho \D\gamma = 0,\qquad  \int_0^{2\pi}\int_{L_\rho}^{L_\rho-2\pi/|k|}u_s'w\D\rho\D\gamma = 0.
\end{align}
Note, we do not need an additional phase condition to deal with the translational invariance in $\gamma$ since that symmetry has been fixed due to the choice of the far-field core decomposition (the translational invariance occurs as an asymptotic phase, $\theta$, in $u_s(k_x\rho+\gamma+\theta)$ which we set to zero). We again employ the same discretisation as used for perpendicular stripe fronts. The phase conditions are computed using the trapezoidal rule in both $\rho$ and $\gamma$. The Jacobian for the (now algebraic) system of equations, is explicitly computed with respect to the remainder function $w(\rho,\gamma)$ and a first order finite-difference is used for the Jacobian with respect to the asymptotic wave numbers $k_x$ and $c_x$. The resulting system is then solved using \textsc{matlab}'s \verb1fsolve1 routine. This is then embedded in a standard secant pseudo-arclength continuation routine as described in~\cite{krauskopf2007}.

\subsection{Parallel and almost-planar stripe invasion fronts}
In this section, we discuss how we implement the numerical boundary-value problem described in \S\ref{s:mod_fronts_inf} and the convergence of the method. We first describe our method for parallel stripe invasion fronts and then how we adapt it for almost-planar stripe invasion fronts.

Since parallel stripe invasion fronts have no $y$-dependence, we consider the 1D SH equation. The stationary parallel stripes of an invasion front in a co-moving frame are of the form $u_s(k_xx;k_x) = u_s(k_x(\xi+ ct);k_x),\xi,t\in\mathbb{R}$ and are time-periodic where $c>0$ is the propagation speed of the front. One-dimensional invasion fronts are of the form $u(x,t)=u(\rho,\tau)$, $\rho=k_x\xi,\tau=\omega t,\omega=ck_x$, and solve the boundary value problem 
\begin{subequations}\label{e:1D_BVP_numerics}
\begin{align}
\omega(u_{\rho} - u_{\tau}) - (1+k_x^2\partial_{\rho}^2)^2u - \mu u + f(u) =& 0,\qquad \rho\in[-L_\rho,L_\rho],\tau\in(0,2\pi],\\
u(\rho,\tau) - u(\rho,\tau+2\pi) =& 0,
\end{align}
\end{subequations}
We call $\omega>0$ the {\it transition frequency} and $k_x>0$ the {\it far-field wavenumber}.

Since the invasion fronts converge exponentially fast towards stripe patterns, we carry out a far-field core decomposition of the solution (see Appendix~\ref{s:modulated_finite})
\begin{equation}\label{e:farfield_core}
u(\rho,\tau) = u_p(\rho+\tau + \psi;k_x)\chi(\rho) + v(\rho,\tau; \omega),
\end{equation}
where we take $\chi(\rho)= (1+\mbox{tanh}(m(\rho-d)))/2$. Substituting this ansatz into the SH equation yields (after subtracting the equation for $u_p$)
\begin{equation}\label{e:PDE_w_2}
\mathbb{L}[u_p(\rho+\tau+\psi;k_x)\chi(z) + v(\rho,\tau;\omega)] + f(u_p(\rho+\tau+\psi;k_x)\chi(z) + v(\rho,\tau;\omega)) - \chi\left(\mathbb{L}u_p + f(u_p) \right) = 0,
\end{equation}
where $\mathbb{L}u :=  \omega(u_{\rho} - u_{\tau}) - (1+k_x^2\partial_{\rho}^2)^2u - \mu u$. Two additional integral phase conditions are added $\Phi_1v=\Phi_2v=0$ (see Appendix~\ref{s:modulated_finite}) where 
\begin{equation}\label{e:phase_cons_2}
\Phi_1 v =\int_0^{2\pi}\int_{-L_\rho}^{L_\rho}([\partial_\rho-\partial_\tau]v^{\mbox{old}})(v-v^{\mbox{old}})\D\rho\D\tau,\qquad \Phi_2v = \int_0^{2\pi}\int_{-L_\rho}^{-L_\rho+2\pi/|k_x|}u'_pv\D\rho\D\tau,
\end{equation}
and $v^{\mbox{old}}$ is a template solution (e.g. the initial guess or previous solution).

The boundary value problem for the 1D invasion fronts~(\ref{e:PDE_w_2}) is solved for $(v,\omega,k_x)$ on the computational domain shown in figure~\ref{f:comp_domain} which also shows the localised remainder function $v(\tau,\rho)$. We discretise $(\tau,\rho)$ using a pseudo-spectral Fourier collocation method in $\tau$ and fourth-order finite-differences in $\rho$. The 1D stripe $u_s$ is solved as for the oblique stripe front. The phase conditions~(\ref{e:phase_cons_2}) are computed using the trapezoidal rule in both $\rho$ and $\tau$. The Jacobian for the (now algebraic) system of equations, is explicitly computed with respect to the remainder function $v(\rho,\tau)$ and a first order finite-difference is used for the Jacobian with respect to the asymptotic wave numbers $k_x$ and $\omega$. The resulting system is then solved using \textsc{matlab}'s \verb1fsolve1 routine and embedded in a standard secant pseudo-arclength continuation routine as done for the oblique stripe fronts. 
\begin{figure}[h]
	\centering
	\includegraphics[width=0.9\linewidth]{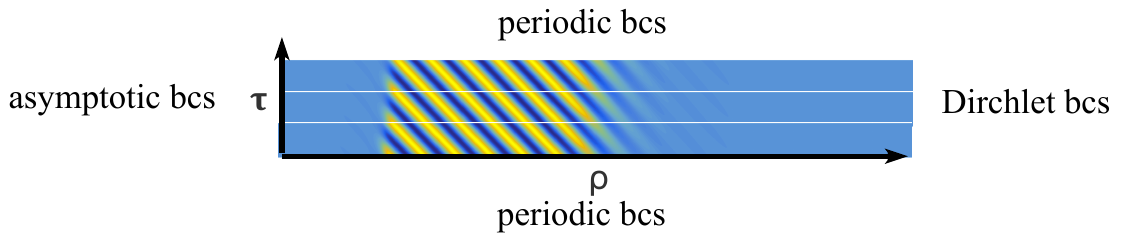}
	\caption{A plot of the remainder function $v$ and the computational domain for the parallel invasion fronts in $(\rho,\tau)-space$.  \label{f:comp_domain}}
\end{figure}

For almost planar fronts, we solve the boundary value problem
\begin{align*}
\omega(u_{\rho} - u_{\tau}) - (1+k_x^2\partial_{\rho}^2 + k_y\partial_\gamma^2)^2u - \mu u + f(u) =& 0,\qquad \rho\in[-L_\rho,L_\rho],\tau\in(0,2\pi],\gamma\in(0,2\pi],\\
u(\rho,\tau,\gamma) - u(\rho,\tau+2\pi,\gamma) =& 0,\\
u(\rho,\tau,\gamma) - u(\rho,\tau,\gamma+2\pi) =& 0,
\end{align*}
where $u(\rho,\gamma,\tau)=u(k_x(x- ct),k_yy,\omega t)$, by carrying out a far-field core decomposition (\ref{e:farfield_core}) as before. To deal with the translational invariance in $\gamma$ of solutions, we impose an additional phase condition
\[
\Phi_3 v = \int_0^{2\pi}\int_0^{2\pi}\int_{-L_\rho}^{L_\rho}(\partial_\gamma v^{\mbox{old}})(v-v^{\mbox{old}})\D\rho\D\tau\D\gamma,
\]
and the other phase conditions (\ref{e:phase_cons_2}) are now integrated also over $\gamma\in(0,2\pi]$.

We discretise in $\gamma$ using a pseudo-spectral Fourier collocation method as in the $\tau$ direction and discretise $(\rho,\tau)$ as before.

We have found that  \textsc{matlab}'s \verb1fsolve1 routine works reasonably well though it can become slow for large discretisations where a Newton solver that takes advantage of the `bordered' structure of the Jacobian matrix can greatly speed up the computations. The main advantage for using \verb1fsolve1 is that it is able to converge despite the initial guess having a large residual. 

\begin{figure}[p]
	\centering
	\includegraphics[width=\linewidth]{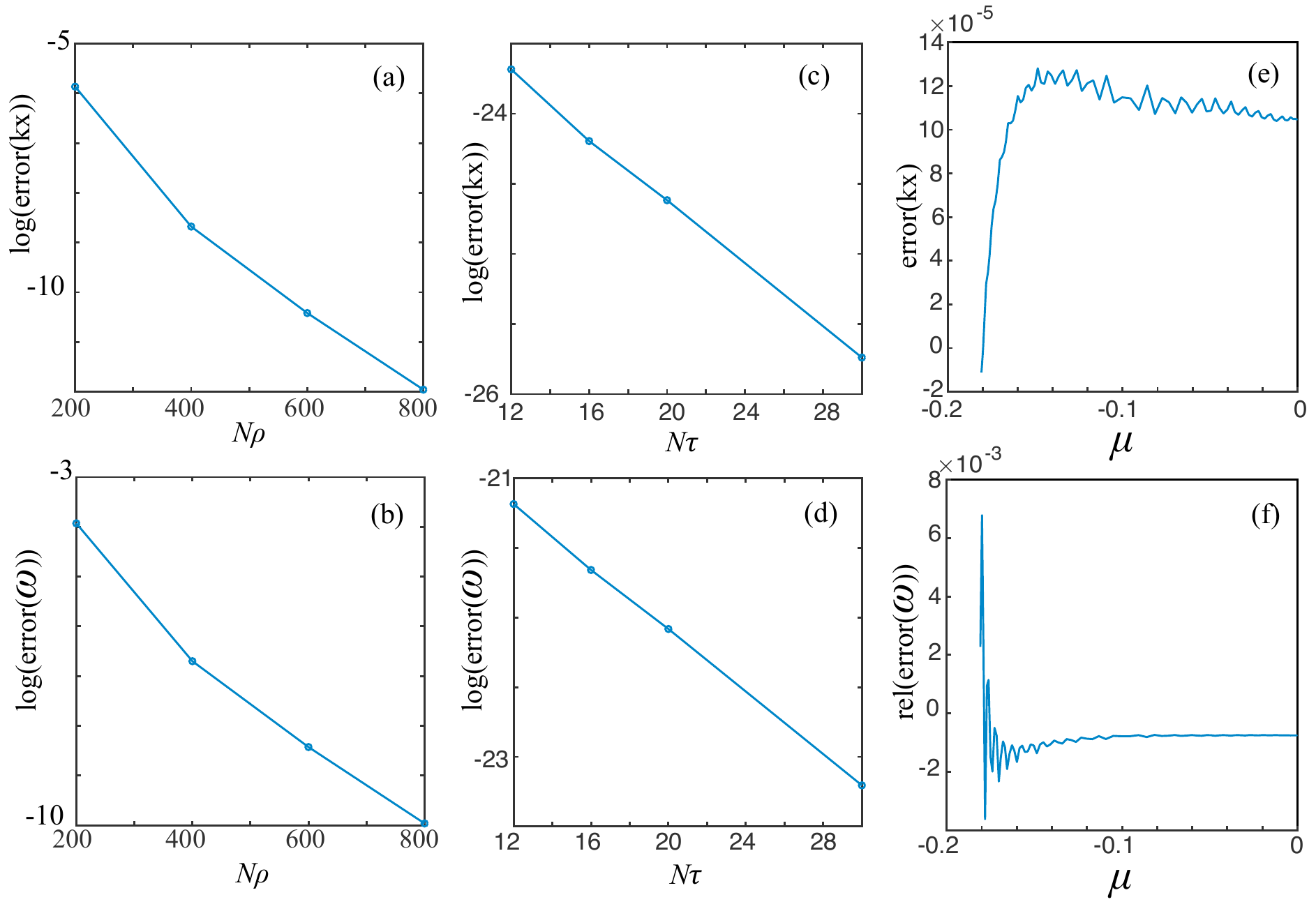}
	\caption{Absolute error from a high resolution solution at $(\mu,\nu)=(0,1.6)$, (a) \& (b) $N_\tau=20,L_\rho=40\pi,d=40$, (c) \& (d) $N_\rho=400,L_\rho=40\pi,d=40$, (e) \& (f) $N_\tau=20,N_\rho=400,L_\rho=40\pi$\label{f:convergence_1}}
\end{figure}

\begin{figure}[p]
\centering
\includegraphics[width=0.6\linewidth]{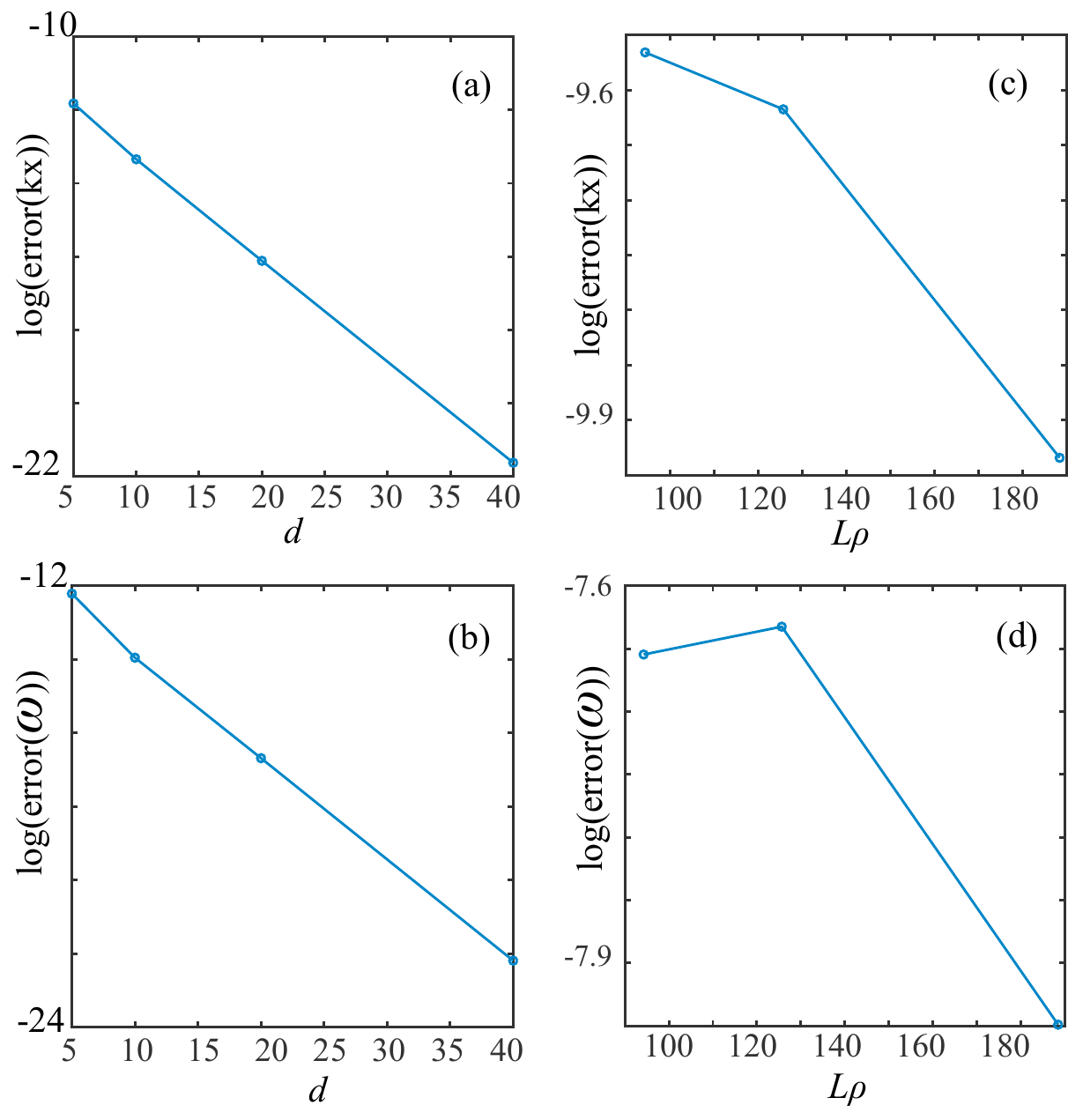}
\caption{Absolute error (a) \& (b) $N_\tau=20,N_\rho=400,L_\rho=40\pi$, (c) \& (d) $N_\tau=20,N_\rho=800,d=20$.\label{f:convergence_2}}
\end{figure}

Initial conditions are chosen for the fronts to be
\begin{align*}
u_{\mbox{perpendicular}}(\rho,\gamma) =&\frac12(1-\mbox{tanh}(\rho))\cos(\gamma),\\
u_{\mbox{oblique stripes}}(\rho,\gamma) =& \frac12(\mbox{tanh}(\rho+40)-\mbox{tanh}(\rho))u_s(\rho+\gamma;0.99),\qquad k_y=0.75,\\
u_{\mbox{parallel}}(\rho,\tau) =& \frac12(\mbox{tanh}(\rho+40)-\mbox{tanh}(\rho))u_s(\rho+\tau;0.99),\qquad \omega=0.99,\\
u_{\mbox{almost planar}}(\rho,\gamma,\tau) =& u_{\mbox{parallel}}(\rho,\tau) +  \frac12\left(\mbox{tanh}\left(\rho+\frac{\pi}{2}\right)-\mbox{tanh}\left(\rho-\frac{\pi}{2}\right)\right)\cos(\gamma)\cos(\rho).
\end{align*}

Typical computational meshes are $N_\rho=400,L_\rho=40\pi,N_\tau=20$ for 1D invasion fronts and oblique stripes. For 2D almost planar invasion fronts, we use $N_\tau=N_\gamma=16$.  Refinements of these meshes do not significantly change the results. 

Plots showing the convergence of the method for parallel stripe fronts are presented in figures~\ref{f:convergence_1} and \ref{f:convergence_2} where we take $(\mu,\nu)=(0,1.6)$ for the quadratic-cubic SH equation. Here we see that with relatively few collocation points in $\tau$ one finds rapid convergence. For the discretisation in $\rho$, we see as expected algebraic convergence due to the finite-differences. We also expect that as one approaches the edge of the homoclinic snaking, the invasion speed slows down and the numerical approximation should break down. In figure~\ref{f:convergence_1}, we plot the relative error of $\omega$ (since $\omega$ is going to zero absolute error would not be a good measure) and we see that the error remains small until one is very close to the homoclinic snake. 

In figure~\ref{f:convergence_2}, we explore the dependence on the computational parameters $d$ (the cut-off location) and $L_\rho$ (the finite truncation of $\rho$). 
We next show rapid convergence of the selected far-field wavenumber and invasion speed as we varying the cutoff parameter $d$ and the domain truncation parameter $L_\rho$ for 1D invasion fronts in the quadratic-cubic SH equation with $\nu=1.6$. We define the error in the selected far-field wavenumber and invasion speed to be the difference from the respective values but for $N_{\rho}=1000,N_{\tau}=40,L_\rho=40\pi,d=80$. We again see good convergence of the selected parameters and that they are insensitive to $d$ and $L_\rho$.

We have also verified the wavenumber selection and transition frequency seen in figure~\ref{f:invade_qc_1D} using the initial value solver in \S\ref{s:ivp_method}.

\section{Results}\label{s:numerics}

\subsection{Parallel invasion fronts}\label{s:rolls}
We start by looking at parallel invasion fronts in the quadratic-cubic SH equation for $\nu=1.6$. For these parameters the Maxwell point occurs at $\mu=0.2$ and snaking occurs for $0.181\leq \mu\leq 0.211$. In figure~\ref{f:invade_qc_1D}, we plot the invasion fronts where the stripes invade the trivial state. We see that the selected far-field wavenumber $k_x$ starts at the edge of the homoclinic snaking at the same value as the stripes with zero Hamiltonian (\ref{e:1D_Ham}). The selected far-field wavenumber then decreases to a minimum value around $\mu\sim0.1$ before increasing. Numerically, we also find invasion fronts for $\mu<0$, outside the bistability region until $\mu=\mu_c$ where the trivial state loses convective stability. It is interesting to note that the selected far-field wavenumber's behaviour is qualitatively very different to that predicted from the weakly nonlinear analysis in \S\ref{s:weak} where it is predicted to monotonically decreases as one approaches the edge of the homoclinic snaking region. To explore this a bit further, we trace out in figure~\ref{f:invade_qc_1D}(c) the selected far-field wavenumber $k_x$ in $(\mu,\nu)$-space and trace out the location of the minimum value of $k_x$. Here we see that qualitatively this ``dip" in $k_x$ occurs for all values in $(\mu,\nu)$-space but the location of the minimum value of $k_x$ approaches the edge of the homoclinic snaking. This suggests that the exponentially small asymptotics of Kozyreff \& Chapman~\cite{kozyreff2006,kozyreff2009} should be able to capture this behaviour of $k_x$.

\begin{figure}[h]
\centering
\includegraphics[width=0.8\linewidth]{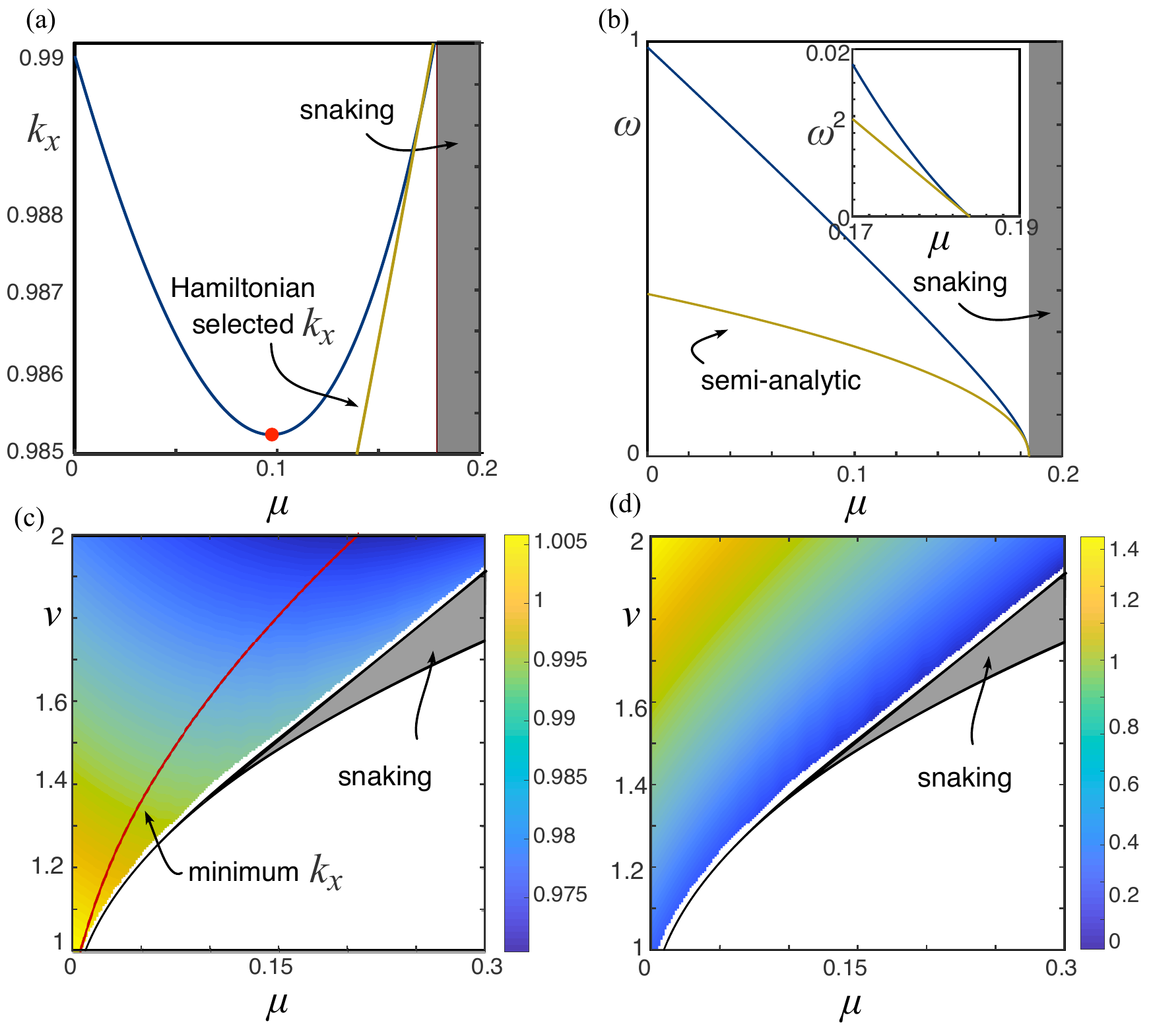}
\caption{(a) Bifurcation diagram showing the pattern selection wave number, $k_x$, for the parallel invasion fronts in the quadratic-cubic SH equation with $\nu=1.6$. The edge of the snaking region occurs around $\mu=0.181$ with a Hamiltonian selected wavenumber $k_x\sim0.9905$. The gold solid curve is the stationary stripes selected by the Hamiltonian condition. (b) shows the the invasion temporal wavenumber $\omega$ and in gold the semi-analytic prediction. The inset panel plots $\omega^2$ against $\mu$ to compare the semi-analytic prediction. (c) and (d) show the two parameter sweep of the selected wavenumber $k_x$ and invasion wavenumber $\omega$, respectively. The snaking region is shown as a grey shaded region and location of the minimum value of $k_x$ is plotted as a red curve in (c). \label{f:invade_qc_1D}}
\end{figure}

We also plot the selected transition frequency, $\omega$, in figure~\ref{f:invade_qc_1D}(b). We note that since $k_x\approx1$, the wave speed of the front is approximately $\omega$. Here we see the $|\delta|^{1/2}$-scaling predicted by the semi-analytical perturbation theory in \S\ref{s:semi_ana}, where $\delta$ is the distance from the edge of the homoclinic snaking region. Qualitatively, we see the behaviour of $\omega$ in $(\mu,\nu)$-space remains the same.

\begin{figure}[h]
	\centering
	\includegraphics[width=0.4\linewidth]{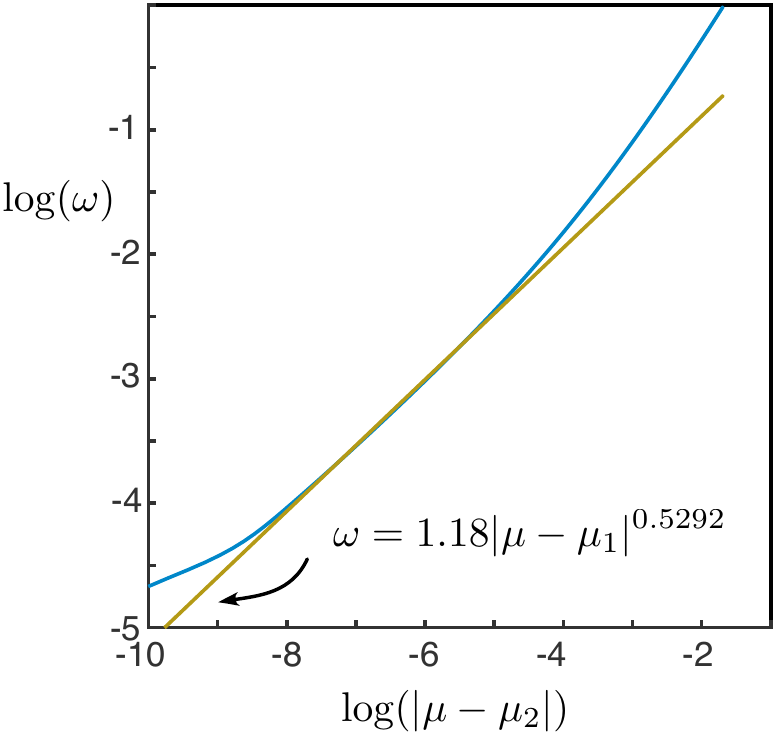}
	\caption{A plot of the log of the transition frequency for invasion fronts to the right of the snaking region $\nu=1.6$. Yellow line are least square fit. The semi-analytical prediction is $\omega=1.29|\mu-\mu|^{0.5}$.\label{f:semi_ana_comp}}
\end{figure}

In order to do a comparison with the semi-analytical perturbation theory, in figure~\ref{f:semi_ana_comp} we plot the log of the selected transition frequency. We observe for invading fronts where the trivial state invades, for sufficiently large $\nu$, that there is indeed a square root scaling law with respect to the distance from the edge of the homoclinic snake. For small values of $\nu$, we do not find a good fit which we speculate is due to the presence of the fold of the stripes. The least squares fit with the data compares reasonably well with the semi-analytical prediction but only very close to the edge of the homoclinic snaking region. However, it should be highlighted that numerics become increasingly difficult as one approaches the edge of the homoclinic snake as $\omega$ tends to zero and a finer discretisation is required.

\begin{figure}[h]
	\centering
	\includegraphics[width=0.8\linewidth]{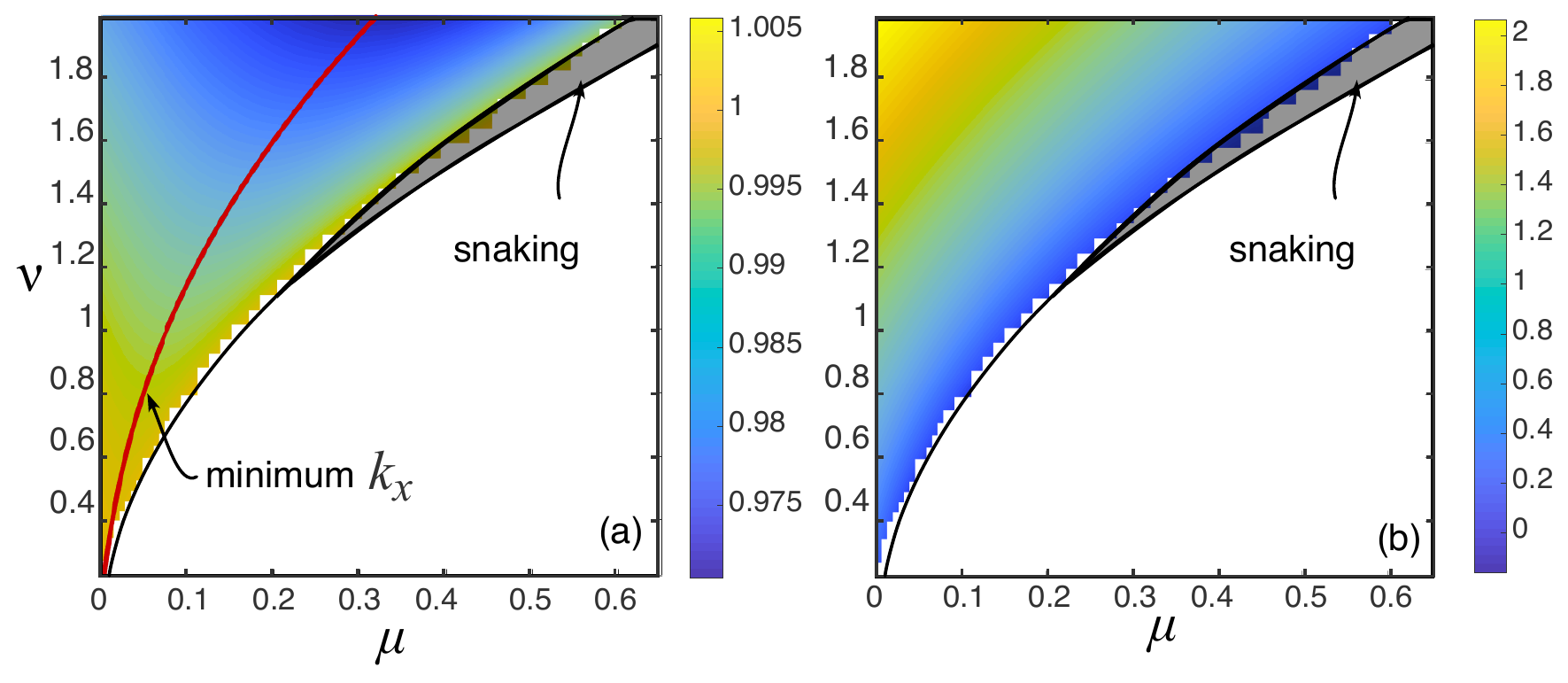}
	\caption{Two-parameter bifurcation diagrams for parallel invasion fronts in the Cubic-Quintic SH equation (a) $k_x$ and (b) $\omega$. The snaking region is shown as a shaded grey area and the location of the minimum selected $k_x$ wavenumber is plotted as a red curve in (a). \label{f:1D_fronts_cq}}
\end{figure}

We conclude our investigation by plotting the selected $k_x$ and $\omega$ values for the cubic-quintic SH equation in figure~\ref{f:1D_fronts_cq}. Here we see qualitatively similar behaviour as for the quadratic-cubic SH equation.  

\subsection{Perpendicular \& oblique planar stripes}\label{s:olbique_stripes}

In this section, we look at invasion fronts on the plane involving perpendicular and oblique stripes. We concentrate only on the cubic-quintic SH equation as stripes are typically destablised by hexagonal perturbations in the quadratic-cubic SH equation.

In figure~\ref{f:perp}, we plot the selected front speed for various $k_y$ values for $\nu=1.25$. We find good qualitative agreement with the weakly nonlinear analysis in \S\ref{s:weak} where the fronts form ``loops" in $(\mu,c)$-space. For $\tilde\mu<0$, the fronts are known as `pushed'~\cite{saarloos2003} since the trivial state is unstable and so we also plot the linear spreading speed~\cite{saarloos2003} for $\mu<0$ and $k_y=1,1.025,1.05$. We note that perpendicular localised stripes do not undergo homoclinic snaking since in a spatial dynamics formulation the stripes are constant in $x$; see~\cite{avitabile2010}. The upper branches are stable with respect to co-periodic perturbations. In order to understand their stability on the plane, we trace out the loci corresponding to the folds in figure~\ref{f:perp}(a) and the zero wavespeed as we vary $k_y$ in figure~\ref{f:perp}(b). We find a large region where the stripes in the invasion fronts are linearly stable in the plane. For these parameter values the zig-zag instability occurs at $k_y\sim1$ (in fact a little less than 1). We see that the stripes of the invasion front can be destabilised by either an Eckhaus or zig-zag instability. 
\begin{figure}[h]
	\centering
	\includegraphics[width=0.8\linewidth]{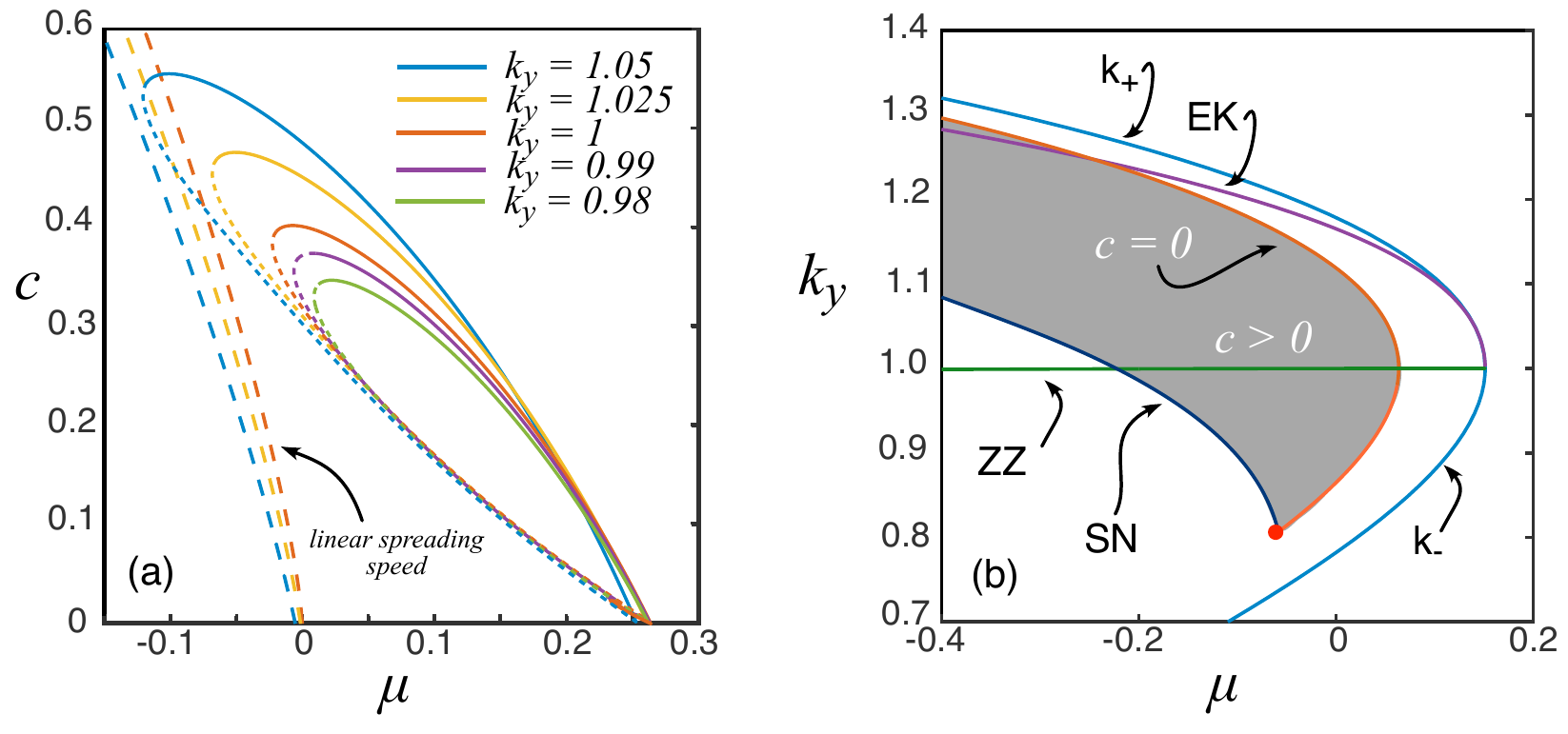}
	\caption{Perpendicular stripe fronts in the cubic-quintic SH equation with $\nu=1.25$, (a) existence for various $k_y$'s (b) two parameter bifurcation diagram depicting the region of existence of invading perpendicular fronts. The bifurcation point where the saddle node and $c=0$ curves intersect is at $(\mu,k_y)=(-0.141,0.807)$. $k_{\pm}$ denotes the existence boundaries for the stripes,  ``EK" denotes the Eckhaus instability boundary, and ``ZZ" the zig-zag instability boundary.\label{f:perp}}
\end{figure}

In figure~\ref{f:oblique}, we plot the selected far-field wavenumber $k = \sqrt{k_x^2+k_y^2}$ for various $k_y$ values and $\nu=1.25$ for oblique invasion fronts. For $k_y=0$, we have the parallel stripes described in the previous section. We observe the characteristic ``dip" in the selected far-field wavenumber which increases as the stripes become more slanted. The existence curves can be continued up to the Maxwell point. Interestingly, we can only find oblique stripe fronts for $k_y<0.8$ and there appears to be a critical angle for the oblique stripes beyond which there do not exist. All these oblique stripes invade slower in the $x$-direction than the parallel stripes except very close to or near the 1D snaking region where the reverse is true; see figure~\ref{f:oblique}(c).  Far away from the homoclinic snaking region, this is consistent with the weakly nonlinear analysis of oblique stripes described in \S\ref{s:weak} except where pinning effects occur near the homoclinic snaking region. However, the overall invasion speed $|\mathbf{c}|$ is larger than that for the parallel stripes. We note that the results in  Goh \& Scheel~\cite{goh2017} suggest that the fastest front should be the one with largest $k_y$ near the homoclinic snaking region and that more refined numerical experiments are possibly required to resolve this region.

\begin{figure}[h]
	\centering
	\includegraphics[width=\linewidth]{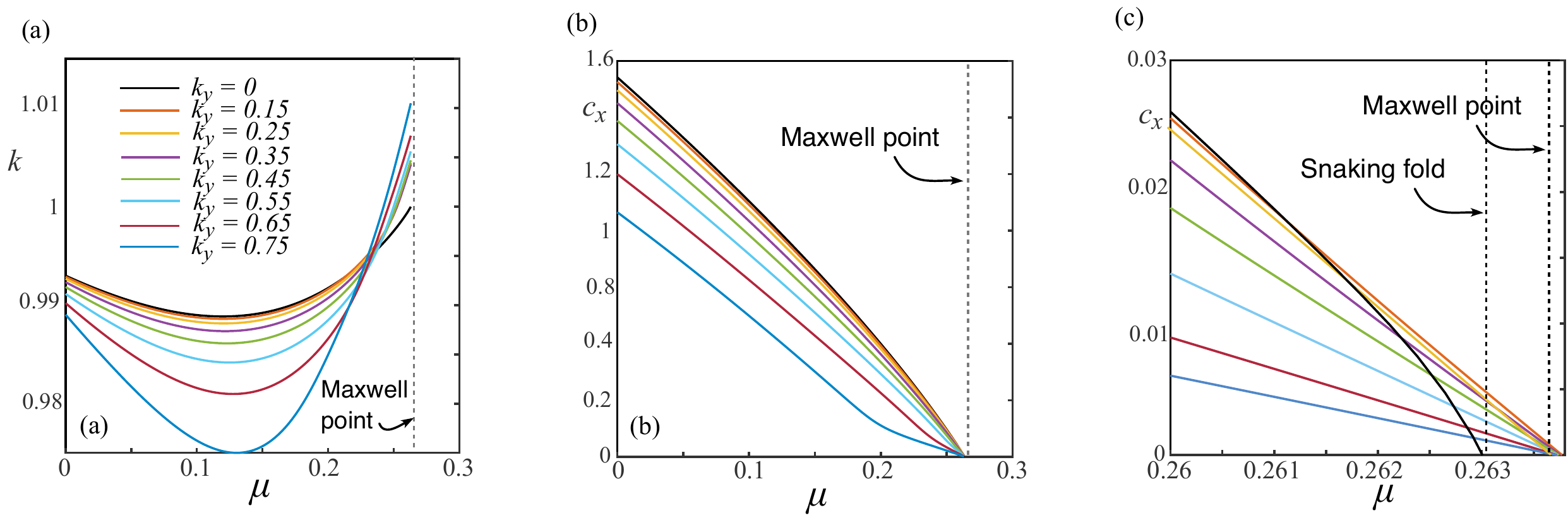}
	\caption{Oblique stripe invasion fronts in the cubic-quintic SH equation with $\nu=1.25$ with ($k_y>0$) and parallel ($k_y=0$) stripe fronts (a) the selected wavenumber, $k$, (b) wavespeed, $c_x$, selection in the $x$-direction, and (c) a zoom-in near the Maxwell point.\label{f:oblique}}
\end{figure}

\subsection{Almost planar stripe invasion fronts}\label{s:almost_stripes}

\begin{figure}[p]
\centering
\includegraphics[width=0.8\linewidth]{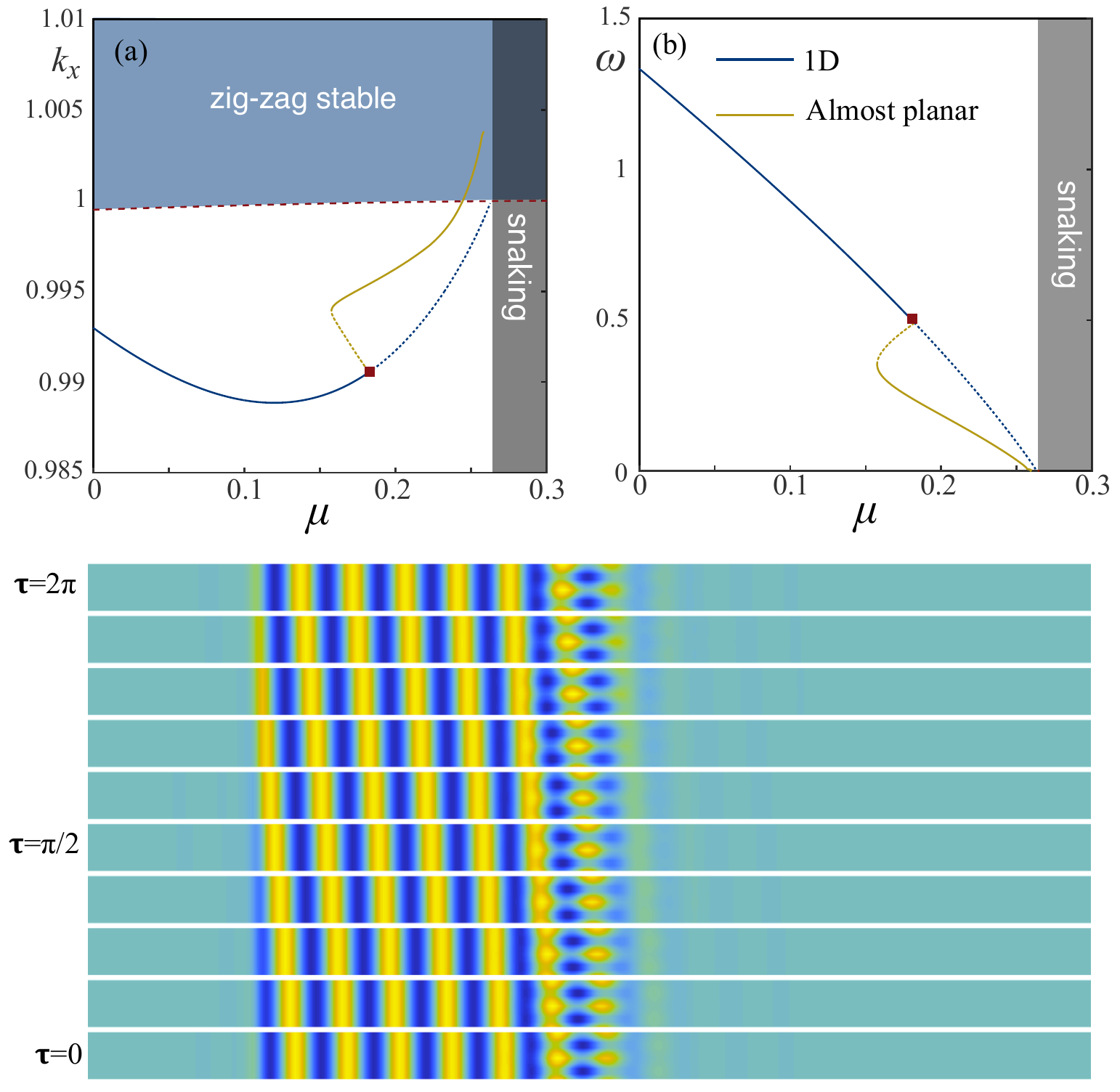}
\caption{Bifurcation diagrams of the almost planar invasion fronts $\nu=1.25,k_y=0.5$, ($N_\gamma=N_t=16,N_\rho=400,L_\rho= 40\pi$) and the parallel invasion fronts. Stability is respect to perturbations with $k_y=0.5$ but IVP suggests stability for smaller $k_y$ values. Panel (a) shows the selected far-field stripe wavenumber while panel (b) shows the selected temporal invasion wavenumber. The parallel invasion fronts are plotted as a blue curve while the almost planar invasion fronts are plotted as a gold curve. Panel (c) shows the ``core" of an almost planar invasion front at different time slices for $\mu=0.2,\nu=1.25,k_y=0.5$ and $k_x=0.9962, \omega =0.1749$\label{f:almost}}
\end{figure}
\begin{figure}[p]
	\centering
	\includegraphics[width=0.4\linewidth]{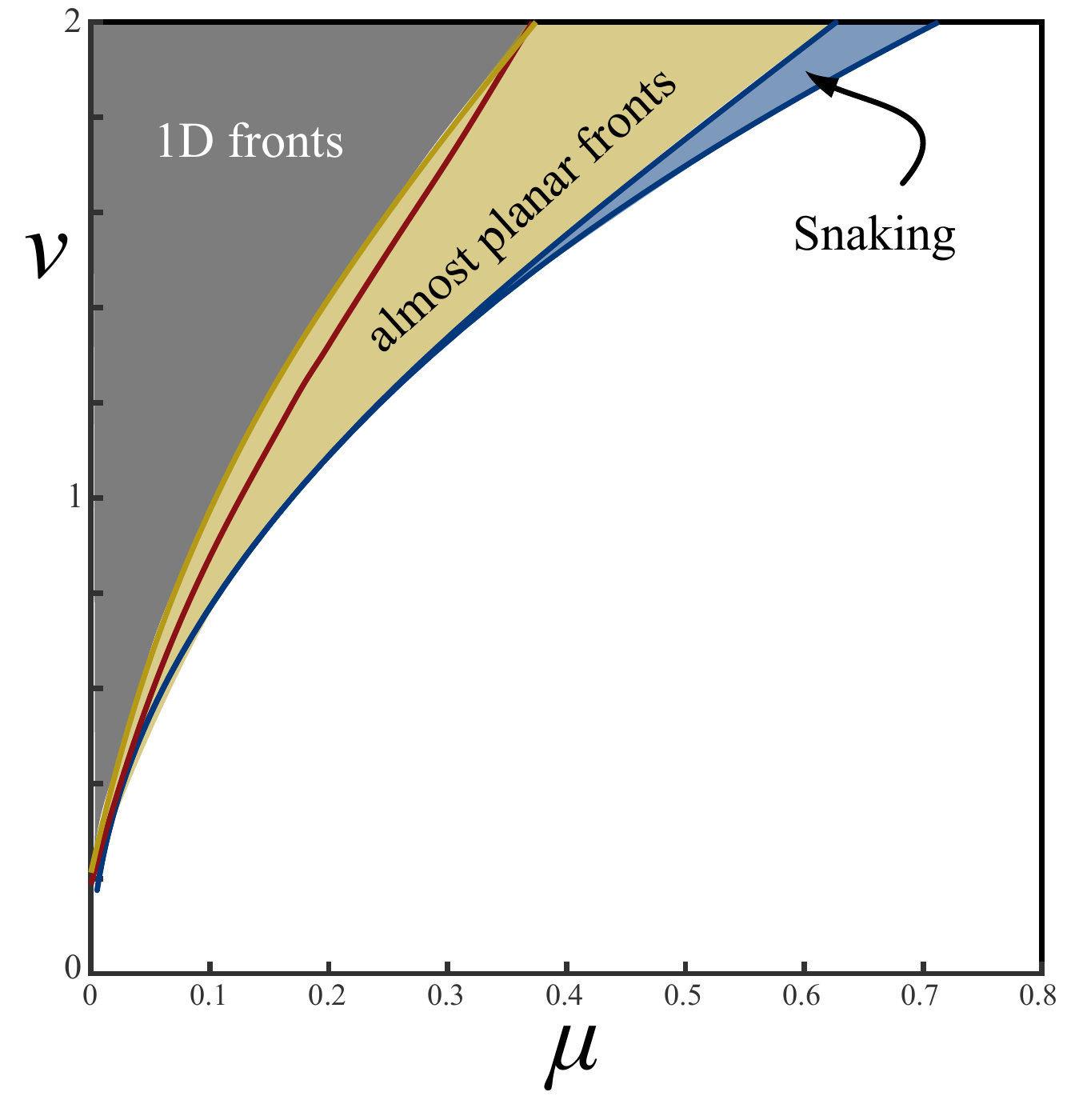}
	\caption{Two parameter bifurcation diagram for cubic-quintic SH equation. The red curve denotes the bifurcation from the parallel invasion front while the gold curve denotes the fold of the almost planar fronts. \label{f:2params_almost}}
\end{figure}

We now look at almost planar invasion fronts in the cubic-quintic SH equation. In figure~\ref{f:almost}, we show the bifurcation diagram for the almost planar and parallel invasion fronts with $\nu=1.25$ with $k_y=0.5$ fixed. Here we see that the selected far-field stripes of the parallel invasion fronts are zig-zag unstable for the entirety of the bistable region (though do restabilise for sufficiently large negative $\mu$, not shown). However, for moderate values of $k_y$, the zig-zag instability is not seen since it is a long wavelength instability. Interestingly, we observe a finite transverse bifurcation off the parallel invasion fronts at $\mu\sim0.18$ which leads to the formation of an almost planar invasion stripe front. One of these almost planar stripe invasion fronts is shown in figure \ref{f:almost}(c) at $\mu=0.2$. We find the bifurcation is subcritical and then restabilises at a fold at $\mu=0.16$ where it becomes stable with respect to co-periodic perturbations. The speed of these almost planar stripe invasion fronts is always found to be slower than the parallel invasion fronts. As we approach the snaking region, this almost planar stripe invasion front selects a far-field stripe wavenumber that is zig-zag stable and hence we expect to be able to see these fronts on the plane. We note that the almost planar fronts cannot approach the edge of the 1D snaking region since stationary almost planar fronts snake in a far larger region in parameter space.

In figure~\ref{f:almost_ivp}, we see that depending on if one starts either side of the fold of the almost planar stripe fronts then one sees the growth of stable almost planar fronts or the initial transverse instability dies out. We have been able to take large $y$ domains and observe this behaviour and no zig-zag instability. 

We are able to trace out in $(\mu,\nu)$-space the locations of the bifurcation loci corresponding to the snaking region, the bifurcation to the almost planar stripe fronts, and the fold of the almost planar fronts; shown in figure~\ref{f:2params_almost}. The locus of the fold of the almost planar fronts denotes the critical $(\mu,\nu)$-values for beyond which co-existence between parallel invasion fronts is observed. We observe that all these loci persist to small values of $\nu$ suggesting a weakly nonlinear analysis near the co-dimension 2 point $(\mu,\nu)=(0,0)$ might be able to capture this bifurcation to almost planar invasion fronts; see \S\ref{s:discussion} for a discussion on this.

\section{Planar patch invasion}\label{s:patch_invasion}
In this section, we will investigate what happens to patches of stripe pattern on the plane (known as ``worm patches" \cite{avitabile2010}) that invade the trivial state in the cubic-quintic SH equation.
\begin{figure}[h]
	\centering
	\includegraphics[width=0.8\linewidth]{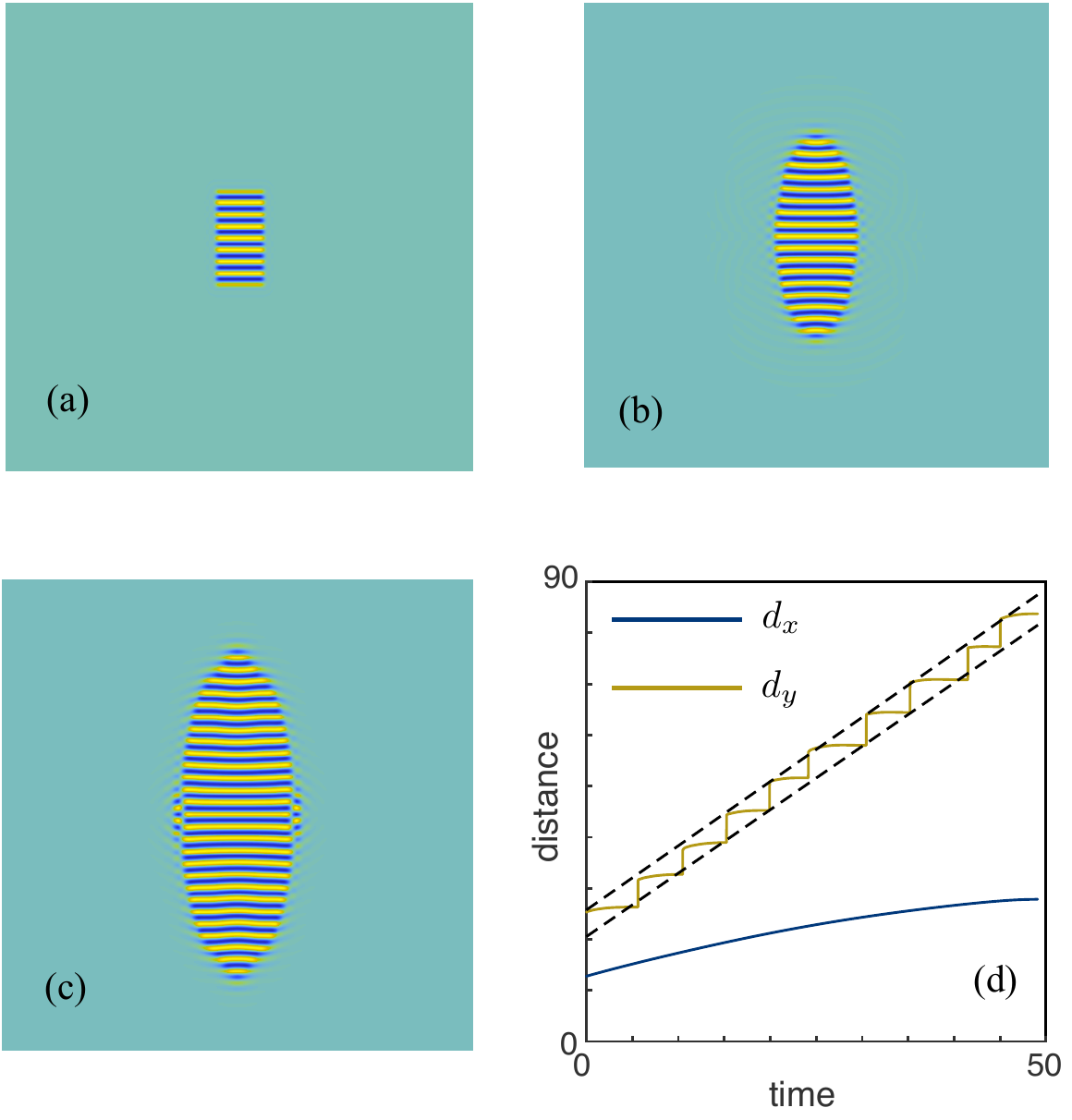}
	\caption{Worm patch invasion in the cubic-quintic SH equation with $(\mu,\nu)=(0.01,1.25)$. (a) at $t=0$, (b) $t=25$, (c) $t=50$ (d) the interface locations in the $x$- and $y$-directions at the midpoints given by $d_x$ and $d_y$, respectively. The upper dashed black line has a fitted line of $1.26t+25.75$ and the lower line of $1.24t+20.53$.\label{f:worm_patch_1}}
\end{figure}

We  carry out time simulations and choose a computational domain of $(x,y)\in[-60\pi,60\pi]^2$ with $N_x=N_y=2^{10}$ points in $x$ and $y$. 
The simulations are initialised with a rectangular patch of stripes given by
\[
u(x,y,0) = \frac{1.2}{4}(\mbox{tanh}(x+d_1) - \mbox{tanh}(x-d_1))(\mbox{tanh}(y+d_2)-\mbox{tanh}(y-d_2))\cos(y),
\]
where $(d_1,d_2)=(4\pi,8\pi)$. In figure~\ref{f:worm_patch_1}, we take $(\mu,\nu)=(0.01,1.25)$ and show the plots of the patch at $t=0,25,50$. We detect the location of the $x$ and $y$ interfaces (denoted by $d_x$ and $d_y$, respectively) along the mid-points by looking for the last zeros of $\tilde u-0.5$ using \textsc{Matlab}'s \verb1fzero1 command where $\tilde u$ is a piecewise linear interpolant of the discretised $u$ found using \textsc{Matlab}'s \verb2interp12 command. 

After the initial transient, we see in figure~\ref{f:worm_patch_1} the patch invades in both the $x$-direction via a perpendicular stripe front and in the $y$-direction via parallel invasion front. Plotting the location of the $y$-direction interface, $d_y$, we see that it undergoes a stepping process as expected with an average speed of $1.2$. The mean time between jumps is $4.96$ which is close to that predicted from the parallel invasion fronts $4.8$. At these values, there is no stable perpendicular invasion front for the wavenumber selected by the parallel invasion front. Since the propagation of the parallel invasion fronts is always quicker than the perpendicular front, the patch starts to bulge along the $x$-direction while in the $y$-direction the patch grows to a point. We see that the oblique stripes develop curvature so that their interface becomes perpendicular. All this results in a lengthening of the wavelength of the stripes in the middle of patch. From figure~\ref{f:perp}, we see that as the wavelength increases (corresponding to decreasing wavenumber) the perpendicular invasion stripes slow down and cease to exist.

\begin{figure}[h]
	\centering
	\includegraphics[width=\linewidth]{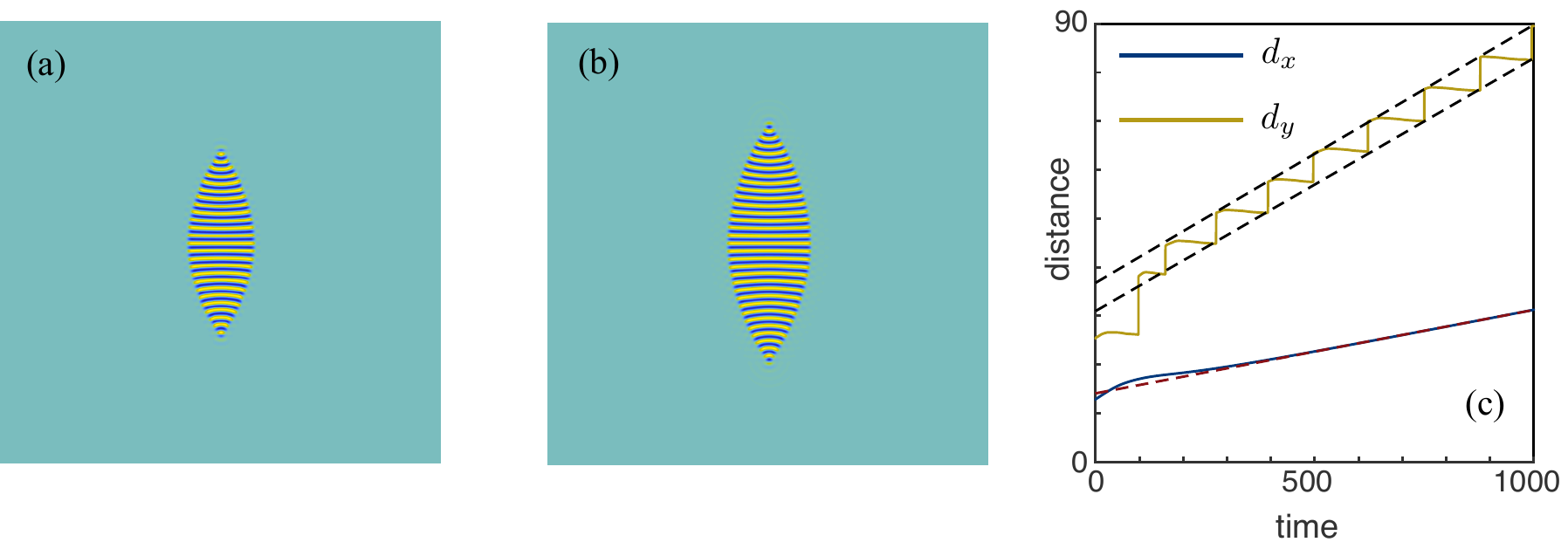}
	\caption{Worm patch invasion for the cubic-quintic SH equation with $(\mu,\nu)=(0.24,1.25)$. (a) $t=500$, (b) $t=1000$, $(c)$ the interface locations in the $x$- and $y$-directions at the mid-points given by $d_x$ and $d_y$, respectively. The upper dashed black line has a fitted line of $0.5t+36.8$ and the lower line of $0.5t+31.0$. The dashed red line has a fitted line of $0.02t+14.1$. \label{f:worm_patch_2}}
\end{figure}

In figure~\ref{f:worm_patch_2}, we increase $\mu$ to $0.24$ and keep $\nu=1.25$ fixed. For these parameter values  we see that the invasion speed of the perpendicular front in the $x$-direction settles down to a constant suggesting that the wavenumber of the stripes in the perpendicular front has been selected. We note that the mean time between jumps is about $112$ which is close to that for the almost planar invasion front $\sim90$ (for the parallel invasion front the mean jump is $39$). It appears this growing patch is stable and will become a long (in the $y$-direction) thin (in the $x$-direction) patch. The selected stripes are expected to be zig-zag unstable in this region. 

We note that close to the homoclinic snaking region, figure~\ref{f:oblique}(c) would suggest that the patches may change shape as the perpendicular front will travel faster than the parallel front leading to wide (in the $y$-direction), short (in the $x$-direction) patches. However, we have not been able to find this region and we always observe that the parallel front direction of the patch always invades faster than the perpendicular front direction. 

\section{Conclusion}\label{s:discussion}

{\bf Summary.} In this paper, we have provided a pattern selection principle for invasion fronts involving stripes in their far-field outside the homoclinic snaking region and, implemented a novel numerical continuation routine to systematically explore the invasion fronts in the bistable region of the planar SH equation. We have found that  the parallel invasion fronts propagate faster as they move away from the homoclinic snaking region and have a dip in the selected far-field wavenumber. 
In 2D, we explored a range of stripe fronts (perpendicular and oblique) as well as almost planar stripes fronts. We found that the bifurcation picture of the invasion fronts is far richer in 2D in the cubic-quintic SH equation.  
Finally, we showed how the front computations help explain the invasion process of patches of stripe pattern on the plane in the cubic-quintic SH equation. 
Now that we have a numerical method to compute stripe invasion fronts, we can now explore several open problems which we list below. 

{\bf Open Problems.} 

In terms of stationary fronts, it appears that oblique stripe fronts have been overlooked. It is clear that due to the Hamiltonian structure, oblique fronts cannot snake and can only exist at the 1D stripe Maxwell point. However, this raises the question of can they snake in systems which do not have spatial conserved quantities? In general, we do not expect snaking of oblique stripe fronts since the intersection between the centre-unstable and stable manifolds for such fronts in a spatial dynamics setting is not transverse due to the two-dimensional kernel from the $x$ and $y$ derivative.  In terms of planar stripe patch invasion, oblique fronts appear to have very little to do with the growth of the patch. In particular, it appears the patch can be well explained by the parallel and perpendicular invasion fronts only. 

There appears to be a significant qualitative difference between the predicted wavenumber selection of the parallel invasion fronts from the weakly nonlinear amplitude equation analysis and the numerics. In particular, the monotonic decrease in the far-field wavenumber as one approaches the snaking region from the amplitude equations is not observed numerically. Numerically, it appears the dip in the wavenumber converges to the edge of the homoclinic snaking region as one approaches the co-dimension 2 bifurcation point where the weakly nonlinear analysis becomes applicable. It would be interesting to look at the exponential asymptotics of Kozyreff \& Chapman~\cite{kozyreff2006,kozyreff2009,kozyreff2013} to see if the qualitative dip of the selected wavenumber can be explained analytically. 

It would be interesting to see if the weakly nonlinear analysis can explain the bifurcation from the parallel invasion fronts to the almost planar invasion fronts.  Close to the co-dimension 2 point, $(\mu,\nu)=(0,0)$, in the cubic-quintic SH equation one could carry out a weakly nonlinear analysis to derive amplitude equation and then investigate the bifurcation of the parallel invasion fronts to almost planar stripe fronts.

There are some interesting 2D instabilities of the invasion fronts that could be further investigated. It may be possible in a different system, that the selected stripe for the parallel invasion front becomes zig-zag stable leading to a bifurcation in the essential spectrum. We were unable to find a period-doubling (in time) bifurcation but this may be possible as the fronts get quicker. 

 To deal with the retreating fronts and invasion fronts into unstable states, one would have to impose an additional constraint to make solutions of the boundary value problem unique again. Coming up with a more dynamical systems motivated constraint would be interesting to explore. 
 
Propagating fronts connecting distorted hexagons to stripes or other distorted hexagons would be another interesting avenue to explore; see for instance~\cite{uecker2014,wetzel2018} for various stationary fronts between hexagons and stripes. The pattern selection mechanism in these fronts has yet to be explored even in the stationary case. 

From an algorithmic view, developing the far-field core decomposition idea to be used with time-stepping bifurcation routines would allow for the investigation of various invasion front instabilities found in doubly diffusive convection and plane Couette flow~\cite{beaume2018,pershin2018}. We note that the almost planar depinning fronts are similar to the twist instability seen by~\cite{beaume2018} close to the snaking region and where stability is regained as one moves sufficiently far from the snaking region. It would be interesting to see if one can detect this instability numerically. 

The invasion process of the stripe patches appears to be well explained by the parallel and perpendicular front computations. It would be interesting to see if there is a way to relate the evolution of the stripe patch to the Eikonal equation possibly using the ideas in~\cite{peletier2012a}. 
Perhaps an easier problem would be to understand the invasion process of radial ring patterns outside the ``snaking" region described in~\cite{McCalla2010,bramburger2018}.

In terms of other front propagation instabilities, the numerical methods outlined here could allow one to investigate the transition from pushed to pulled fronts (see~\cite{saarloos2003}) or modulated fronts emerging from a Turing bifurcation of a large front~\cite{sandstede2001}. Of course, the investigation of modulated fronts in  nonlocal equations~\cite{faye2015} and lattice systems~\cite{hupkes2009,bramburger2019} remain of major interest. 

\section*{Acknowledgements}
DJBL would like to thank Cedric Beaume, Jonathan Dawes, Edgar Knobloch, Bjorn Sandstede and Arnd Scheel for helpful discussions on this work. The author would like to thank the referees for their helpful and constructive comments that improved the paper. 

\appendix

\section{Appendix}

\subsection{Appendix: Proof of Proposition 1.1}\label{s:mod_fronts_inf}

In this section, we discuss why we expect generically that parallel stripe invasion fronts select a wavespeed and wavenumber whereas retreating parallel stripe fronts only select a wavespeed. 
To simplify the discussion and since parallel stripe fronts have no $y$-dependence, we consider the 1D SH equation. The results in this section are adaptions of those in Sandstede \& Scheel~\cite{sandstede2004,sandstede2008} and Goh \& Scheel~\cite{goh2017}.

The stationary parallel stripes of a depinning front in a co-moving frame are of the form $u_s(k_xx;k_x) = u_s(k_x(\xi\pm ct);k_x)$ and are time-periodic where $c>0$ is the propagation speed of the front and the $+$ denotes an invasion pattern forming front whereas $-$ denotes a retreating front. One-dimensional invasion fronts are of the form $u(x,t)=u(\rho,\tau)$, $\rho=k_x\xi,\tau=\omega t,\omega=ck_x$, solving the boundary value problem 
\begin{subequations}\label{e:1D_BVP_d}
\begin{align}
\omega(\pm u_{\rho} - u_{\tau}) - (1+k_x^2\partial_{\rho}^2)^2u - \mu u + f(u) =& 0,\\
u(\rho,\tau) - u(\rho,\tau+2\pi) =& 0,\\
\lim_{\rho\rightarrow\infty}u(\rho,\tau) =& 0,\\
\lim_{\rho\rightarrow-\infty}(u(\rho,\tau) - u_s(\rho\pm\tau;k_x)) =& 0,
\end{align}
\end{subequations}
for some $k_x>0$ and $\omega>0$ which we call the {\it wavenumber} and {\it transition frequency}, respectively. 

The linearisation of (\ref{e:1D_BVP_d}) about a front solution, $u^*$, is given by 
\[
\mathcal{L}u := \omega(\pm u_{\rho} - u_{\tau}) - (1+k_x^2\partial_{\rho}^2)^2u - \mu u + f_u(u^*)u.
\]
We note that $u^*$ solves a pseudo-elliptic equation, such that $\partial_\tau u$ and $\partial_\rho^4u$ belong to $L^\infty$ and $u^*$ can be readily seen to be smooth~\cite{goh2017}. 

As pointed out in~\cite{sandstede2004,goh2017}, the operator $\mathcal{L}$ is not Fredholm as a closed and densely defined operator on $L^2(\mathbb{R}\times\mathbb{T})$, say, where $\mathbb{T}=\mathbb{R}/2\pi\mathbb{Z}$ since both $\partial_\rho u^*$ and $\partial_\tau u^*$ lie in the kernel of $\mathcal{L}$ but they do not converge to zero at $\rho=\pm\infty$, such that a simple Weyl sequence construction shows that the range is not closed.

The operator $\mathcal{L}$ can be shown to be Fredholm by considering $\mathcal{L}$ as a closed operator on exponentially weighted spaces
\[
L_{\alpha}^2(\mathbb{R}\times\mathbb{T}) = \left\{u(\rho,\tau)\in L^2_{\mbox{loc}}(\mathbb{R}\times\mathbb{T})\; | \; e^{\alpha|\rho|}u\in L^2(\mathbb{R}\times\mathbb{T}) \right\},
\]
with small weights $\alpha\sim 0$. 

We define the asymptotic linear operators $\mathcal{L}_{l/r}$ where for $\mathcal{L}_{l}$ the $u^*$ is replaced by $u(\rho,\tau) = u_s(\rho\pm\tau;k_x)$ and for $\mathcal{L}_r$ the $u^*$ is replaced by zero. It is known that the operator $\mathcal{L}$ is Fredholm on $L_{\alpha}^2$ if and only if $\mathcal{L}_{l}$ is invertible on $L_{-\alpha}^2$ and $\mathcal{L}_{r}$ is invertible on $L_{\alpha}^2$; see for instance~\cite{sandstede2004}.  

We start by looking at the operator $\mathcal{L}_l$ and its dispersion relation. Substituting in $u_s(\rho\pm\tau)$ into (\ref{e:1D_BVP_d}), we see that $u_s(x)$ must be a $2\pi$-periodic solution of the ODE
\[
-(1+k_x^2\partial_x^2)^2u_s - \mu u_s + f(u_s) =0.
\]
Linearising this equation about $u_s$, we obtain the linear operator $\mathcal{L}_s$
\[
\mathcal{L}_s:=-(1+k_x^2\partial_x^2)^2 - \mu  + f'(u_s),
\]
which defines a closed operator on $L^2(0,2\pi)$ with domain $H^2_{\mbox{per}}(0,2\pi)$. We introduce the Floquet operator 
\[
\mathcal{\hat L} = -(1+[k_x\partial_x+\eta]^2)^2 - \mu  + f'(u_s(x)),
\]
on $L^2(\mathbb{R},\mathbb{C})$ with domain $H^4(\mathbb{R},\mathbb{C})$. 

\begin{hypo}\label{h:h1}
We assume there is an open region of $k_x$ where the periodic orbits are stable. Furthermore, the spectrum of $\mathcal{\hat L}$ on $L^2(\mathbb{R},\mathbb{C})$ lies in the open left half-pane except for a simple eigenvalue, $\hat\lambda$ close to the origin for $\eta\sim0$ with expansion
\begin{equation}\label{e:lam_lin}
\hat\lambda = d_{\|}\eta^2 + \mathcal{O}(\eta^4),
\end{equation}
with $d_{\|}>0$. 
\end{hypo}
For $\mu\sim0$, the results in \S\ref{s:roll_stab} suggest this is true for the stripes in the cubic-quintic SH equation by setting $(\sigma,\tau)=(i\eta,0)$ in proposition \ref{l:stab_rolls}. We anticipate this hypothesis is true in general for both the quadratic-cubic SH equation and for larger values of $\mu$.

Now a non-zero number $\rho\in\mathbb{C}$ is in the spectrum of $\mathcal{L}_l$ if, and only if, the linearised eigenvalue problem $\mathcal{L}_lu = \lambda u$ has a bounded non-zero solution $u(\rho,\tau)$ that is $2\pi$-periodic in $\tau$, where the Floquet multipler $\rho$ and the Floquet exponent $\lambda$ are related via $\rho = \exp(2\pi\lambda/\omega)$. These solutions can be calculated using the Floquet ansatz
\[
u(\rho,\tau) = e^{\eta\rho}w(\rho\pm\tau),
\]
where $w(\cdot)$ is $2\pi$-periodic and $\eta\in i\mathbb{R}$. Substituting this ansatz into $\mathcal{L}_l$ yields
\begin{equation}\label{e:lam2}
[\lambda \mp\omega \eta]w = \mathcal{\hat L}w.
\end{equation}
Comparing~(\ref{e:lam_lin}) and (\ref{e:lam2}), we see that $\hat\lambda$ is in the spectrum of the periodic orbit, computed in the travelling frame, if and only if
\[
\lambda = \hat\lambda \pm \omega \eta,
\]
is a Floquet exponent of $\mathcal{L}_l$, where $\eta\in i\mathbb{R}$ is the associated spatial Floquet exponent. In particular, for $\eta\in i\mathbb{R}$ close to zero, 
\begin{equation}\label{e:lam3}
\lambda =\hat\lambda(\eta) \pm \omega\eta = \pm \omega\eta + d_{\|}\eta^2 + \mathcal{O}(\eta^4).
\end{equation}

We next look at the spectrum of the operator $\mathcal{L}_r$. To do this, we linearise about the trivial state and include the homotopy parameter $\lambda$ to yield
\[
 \omega(\pm u_{\rho}-u_{\tau}) - (1+k_x^2\partial_\rho^2)^2u - \mu u = \lambda u.
\]
Using the ansatz $u=e^{i\ell \tau}e^{\eta\rho}\hat u$, yields the dispersion relation
\[
\lambda = \omega(\pm\eta - i\ell) - (1+k_x^2\eta^2)^2 - \mu.
\]

\begin{hypo}\label{h:convec}
We assume that the trivial state is linearly stable i.e., $\mu>0$.
\end{hypo}

\begin{remark}
From the numerics, hypothesis~\ref{h:convec} can be weakened to just requiring the trivial state is convectively unstable i.e., solutions of the linearized equation with spatially localized initial conditions decay pointwise, precisely when $\lambda$ lies in the left half-plane. In this paper, we are just concentrating on the bistable region where $\mu>0$. We note though that for $\mu\sim0$, we can explicitly compute the convective instability threshold. Setting $\eta_x=i\kappa/k_x,\kappa\in\mathbb{R}$, we have the dispersion relation 
\[
\tilde \lambda = i\kappa c - \mu - (1-\kappa^2)^2,
\]
where $\tilde\lambda = \lambda + i\ell\omega$. 
Following~\cite{doelman2003}, $\tilde\lambda$ has a double root at $(\kappa,\mu,c)=(\pm1,0,0)$. This double root persists for $(\mu,c)\neq0$ and is, to leading order given by $\kappa_*=\pm1 +ic/8$. The real part of $\tilde\lambda$ evaluated at $\kappa_*$ is given by to leading order $\Re\tilde\lambda(\kappa_*)=\Re\lambda(\kappa_*)=-\mu-c^2/16$. Hence for $\mu>0$ (i.e. in the bistable region) there is no restriction on $c$ whereas for $\mu<0$, we require $c>4\sqrt{-\mu}$. 

This additional assumption allows for the existence of invading fronts that select a unique propagation speed $c$ and far-field wavenumber $k_x$ for fixed $\mu<0$. Numerically, we can continue invasion fronts into the region $\mu>0$.
\end{remark}

\begin{lemma}\label{L:fred} (Fredholm Crossing). Assume that $\mu>0$ and $\omega\neq 0$, then the operator $\mathcal{L}$ is Fredholm with index $-1$ for invading fronts and $0$ for retreating fronts 
in $L_{\alpha}^2(\mathbb{R}\times\mathbb{T})$, $\alpha>0$, sufficiently small.
\end{lemma}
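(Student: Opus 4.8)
The plan is to reduce the Fredholm analysis of $\mathcal{L}$ to that of its two asymptotic operators and then to read off the index from a crossing count. As recorded above, the criterion of Sandstede \& Scheel~\cite{sandstede2004} tells us that $\mathcal{L}$ is Fredholm on $L^2_\alpha(\mathbb{R}\times\mathbb{T})$ precisely when $\mathcal{L}_l$ is invertible on $L^2_{-\alpha}$ and $\mathcal{L}_r$ is invertible on $L^2_\alpha$, and that its index is then fixed by the relative Morse indices of the asymptotic operators, equivalently by the signed number of spatial Floquet exponents the weight drags across the weight contour. So the argument splits into (i) verifying invertibility of the two asymptotic operators for small $\alpha>0$, which gives Fredholmness, and (ii) a crossing count that produces the index and separates the two signs $\pm$.

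For step (i) I would simply evaluate the two dispersion relations on the shifted contours. For the trivial right state, $\lambda = \omega(\pm\eta - \rmi\ell) - (1+k_x^2\eta^2)^2 - \mu$ evaluated on $\eta\in\alpha+\rmi\mathbb{R}$ has $\Re\lambda = \pm\omega\alpha - \Re[(1+k_x^2\eta^2)^2] - \mu$, which is strictly negative for small $\alpha$ because $\mu>0$ (Hypothesis~\ref{h:convec}); hence $0$ is not in the spectrum and $\mathcal{L}_r$ is invertible on $L^2_\alpha$ for both signs of the speed. For the periodic left state I would use the expansion~\eqref{e:lam3}: writing $\eta=-\alpha+\rmi k$ gives $\Im\lambda = (\pm\omega-2d_{\|}\alpha)k$, which vanishes only at $k=0$ for small $\alpha$ (since $d_{\|}>0$ by Hypothesis~\ref{h:h1} and $\omega\neq0$), where $\Re\lambda = \mp\omega\alpha + d_{\|}\alpha^2\neq0$; thus $0$ is not in the spectrum of $\mathcal{L}_l$ on $L^2_{-\alpha}$ either. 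Together these establish that $\mathcal{L}$ is Fredholm on $L^2_\alpha$ for small $\alpha>0$ for both invading and retreating fronts.

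For step (ii) I would isolate the single obstruction to Fredholmness at $\alpha=0$. There the right state is unaffected ($\Re\lambda<0$ for all $\eta\in\rmi\mathbb{R}$ by $\mu>0$), so the only spatial eigenvalue sitting on the imaginary axis at $\lambda=0$ is the simple root $\eta=0$ of the periodic-state relation~\eqref{e:lam3}, whose group velocity $\mathrm{d}\lambda/\mathrm{d}\eta|_{\eta=0}=\pm\omega$ is nonzero. Turning on the weight moves the contour past this root exactly once, and by the crossing lemma of~\cite{sandstede2004,sandstede2008} the index jumps by $\pm1$ with a sign fixed by the group velocity, while the trivial state contributes no crossing for small $\alpha$. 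For the invading front the group velocity $+\omega>0$ transports the marginal mode toward the core and the weight exposes it as one extra cokernel direction, giving $\mathrm{ind}\,\mathcal{L}=-1$; for the retreating front the group velocity $-\omega<0$ points the other way and no net crossing occurs, giving $\mathrm{ind}\,\mathcal{L}=0$. This is the same computation carried out for pattern-forming fronts in Goh \& Scheel~\cite{goh2017}, and the two values are consistent with the dichotomy in Proposition~\ref{prop:1}.

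The main obstacle is entirely in step (ii): getting the direction of the single crossing exactly right. Invertibility of the asymptotic operators is a direct evaluation of the two dispersion relations, but the index is a relative count, so one must (a) confirm that the trivial state contributes no crossing for either sign of the speed as $\alpha$ leaves zero, and (b) correctly orient the crossing of the periodic-state mode $\eta=0$ using the sign of its group velocity $\pm\omega$, so that the two cases differ by exactly one unit. I would pin this down by continuing the spatial eigenvalue $\eta(\lambda)\approx\pm\lambda/\omega$ to large positive $\Re\lambda$ and comparing the counts of unstable spatial eigenvalues on the two sides of the weight contour, following the relative-Morse-index bookkeeping of~\cite{sandstede2004}.
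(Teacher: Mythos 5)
Your proposal is correct and follows essentially the same route as the paper: the paper likewise invokes the Fredholm-border/signed-crossing characterisation of Sandstede \& Scheel~\cite[Lemma 3.6]{sandstede2004} and Goh \& Scheel~\cite[Lemma 2.1]{goh2017}, uses the simple spatial Floquet exponent $\eta(\lambda)$ with $\left.\frac{\D\eta}{\D\lambda}\right|_{\lambda=0}=\pm\frac{1}{\omega}$ from~(\ref{e:lam3}) together with linear stability of the trivial state ($\mu>0$, Hypothesis~\ref{h:convec}) to orient the single crossing of the marginal mode $\eta=0$, and concludes index $-1$ for invading and $0$ for retreating fronts on $L^2_\alpha$, $\alpha>0$ small. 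Your explicit step (i) verifying invertibility of $\mathcal{L}_{l/r}$ on the shifted contours is left implicit in the paper's proof but is the same underlying computation, not a different argument.
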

\begin{proof}
We follow the proof~\cite[Lemma 2.1]{goh2017} and~\cite[Lemma 3.6]{sandstede2004} and characterise the Fredholm indices using Fredholm borders. The Fredholm index can be computed by counting the signed crossings of multipliers through the origin during a homotopy from $\mathcal{L}_r(\eta)$ to $\mathcal{L}_l(-\eta)$; see~\cite[Lemma 2.1]{goh2017}. 
We note that for $\mu>0$, the asymptotic (trivial) state at $\rho=+\infty$ is linearly stable from hypothesis \ref{h:convec}. 
Provided $\omega\neq 0$, then~(\ref{e:lam3}) has a simple spatial Floquet exponent $\eta=\eta(\lambda)$ for all $\lambda$ close to zero and 
\[
\left.\frac{\D \eta}{\D \lambda}\right|_{\lambda=0} = \pm\frac{1}{\omega}.
\]
Hence, for $\lambda<0$ close to zero, the relative Morse index at $\rho=-\infty$ is therefore $+1$ for invading fronts and $-1$ if retreating fronts. Furthermore, for invading fronts there are no crossings if we homotope between $\mathcal{L}_r(\eta)$ and $\mathcal{L}_l(\eta)$ but for retreating fronts there must be another crossing with sign $+1$. Hence $\mathcal{L}$ is Fredholm of index 0 for $\alpha<0$, small, and Fredholm of index $-1$ for invading or $0$ for retreating fronts, respectively, for $\alpha>0$, small.
\end{proof}

\begin{hypo}\label{h:trans} (Transverse front)
Assume that the kernel of $\mathcal{L}$ in $L_{-\alpha}^2$, $\alpha>0$ sufficiently small is two-dimensional spanned by $\partial_\rho u^*$ and $\partial_\tau u^*$. Furthermore, the eigenvalue $\lambda=0$ is algebraically simple. 
\end{hypo}

We also define the $L^2$-adjoint of  $\mathcal{L}$,
\[
\mathcal{L}^{\mbox{ad}}:X_0\subset L^2 \rightarrow L^2,\qquad \mathcal{L}^{\mbox{ad}}v = -\omega(\pm v_\rho - v_\tau) - (1+k_x^2\partial_\rho^2)^2v - \mu v + f_u(u^*)v,
\]
where $X_0=H^1(\mathbb{T},L^2(\mathbb{R}))\cap L^2(\mathbb{T},H^4(\mathbb{R}))$. 
Hence, the co-kernel of $\mathcal{L}$ is also spanned by $\partial_\rho u^*$ and $\partial_\tau u^*$.  

It turns out that a linear combination of the kernel elements form a localised eigenfunction. 

\begin{lemma} 
There exists constants $\delta>0$ and $\psi\in\mathbb{R}$ such that 
\[
u^*(\rho) = u_s(\rho\pm\cdot + \psi;k_x) + \mathcal{O}(e^{\delta\rho}),
\]
in $H^k(\mathbb{R}\times\mathbb{T})$ as $\rho\rightarrow-\infty$. 
The same estimate is true for the derivatives with respect to $\rho$ and $\tau$. 
Furthermore, the geometric multiplicity of $\lambda=0$ as an eigenvalue of the point spectrum of $\mathcal{L}$ posed on $L^2_{\alpha}$ is 1.  
\end{lemma}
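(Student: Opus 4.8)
I would treat the two assertions separately: the exponential convergence of $u^*$ to a fixed phase-shift of the wave train, and the reduction of the geometric multiplicity of $\lambda=0$ from two (on $L^2_{-\alpha}$) to one (on $L^2_{\alpha}$). The second is an easy consequence of the first, so the real content is the exponential estimate as $\rho\to-\infty$.

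\textbf{Exponential convergence.} I view the modulated-front problem (\ref{e:1D_BVP_d}) as an ill-posed evolution in the spatial variable $\rho$, in the spirit of Sandstede \& Scheel~\cite{sandstede2004,sandstede2008} and Goh \& Scheel~\cite{goh2017}; the pseudo-elliptic smoothing recorded after (\ref{e:1D_BVP_d}) (control of $\partial_\tau u^*$ and $\partial_\rho^4 u^*$) is what makes this legitimate. The asymptotic wave train $u_s(\rho\pm\tau;k_x)$ is a relative equilibrium whose translation orbit is a one-dimensional neutral direction tangent to $u_s'$, and its linearization is $\mathcal{L}_l$. The spatial Floquet exponents of $\mathcal{L}_l$ at $\lambda=0$ are governed by (\ref{e:lam3}): the near-zero roots of $\pm\omega\eta+d_{\|}\eta^2+\mathcal{O}(\eta^4)=0$ are $\eta=0$ and $\eta\approx\mp\omega/d_{\|}$, the latter real and bounded away from the imaginary axis, while the remaining exponents are discrete and off-axis. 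By Hypothesis~\ref{h:h1} the only imaginary exponent is the simple translation mode $\eta=0$. Equivalently --- this is exactly the invertibility of $\mathcal{L}_l$ on $L^2_{-\alpha}$ exploited in Lemma~\ref{L:fred} --- the linearization about the wave train possesses an exponential dichotomy on half-lines, transverse to the one-dimensional center, with spectral gap $\delta>0$ equal to the distance from $\eta=0$ to the rest of the spatial spectrum.

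\textbf{From the dichotomy to the estimate and the multiplicity.} The bounded front, converging to the wave train as $\rho\to-\infty$, therefore lies on the center-unstable manifold; projecting onto the center selects a unique asymptotic phase $\psi$, and the gap $\delta$ forces the remainder $v(\rho,\cdot):=u^*(\rho,\cdot)-u_s(\rho\pm\cdot+\psi;k_x)$ to obey $\|v(\rho,\cdot)\|_{H^k}=\mathcal{O}(e^{\delta\rho})$ as $\rho\to-\infty$; this is obtained by inserting the dichotomy Green's function into the variation-of-constants formula for the nonlinear deviation equation, and elliptic bootstrapping then upgrades the bound to every $H^k$ and to the $\rho$- and $\tau$-derivatives. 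This proves the first claim. For the multiplicity, Hypothesis~\ref{h:trans} gives that $\ker\mathcal{L}$ on $L^2_{-\alpha}$ is spanned by $\partial_\rho u^*$ and $\partial_\tau u^*$. Differentiating the estimate, $\partial_\rho u^*\to u_s'(\rho\pm\cdot+\psi)$ and $\partial_\tau u^*\to\pm u_s'(\rho\pm\cdot+\psi)$ with $\mathcal{O}(e^{\delta\rho})$ error, so a combination $a\,\partial_\rho u^*+b\,\partial_\tau u^*$ is asymptotic to the nonzero periodic function $(a\pm b)u_s'$ unless $a\pm b=0$. Since membership in $L^2_{\alpha}$ (with $0<\alpha<\delta$) demands exponential decay as $\rho\to-\infty$, only the one-dimensional line $a=\pm b$ survives; the surviving element $\partial_\rho u^*\mp\partial_\tau u^*$ is not identically zero and already decays at $\rho\to+\infty$ toward the trivial state, hence genuinely lies in $L^2_{\alpha}$. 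Thus the geometric multiplicity of $\lambda=0$ on $L^2_{\alpha}$ is exactly one.

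\textbf{Main obstacle.} The crux is the exponential dichotomy of the previous step: because the modulated-front equation is not a well-posed evolution in $\rho$, standard invariant-manifold theorems do not apply directly, and the dichotomy must be constructed for the linearized relation using the pseudo-elliptic structure, with the single neutral translation mode removed by a center-manifold-with-boundary reduction that pins down $\psi$. Once that machinery of~\cite{sandstede2004,goh2017} is in place, the decay rate and the multiplicity count are bookkeeping.
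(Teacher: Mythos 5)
Your proposal is correct and follows essentially the same route as the paper: the paper simply cites \cite[Lemma 2.2]{goh2017} and the spatial-dynamics arguments of \cite[Theorem 3]{sandstede2004} for the exponential convergence with asymptotic phase (whose proof via exponential dichotomies you sketch), and then, exactly as you do, differentiates the estimate to conclude that $(\pm\partial_\rho-\partial_\tau)u^*$ is the unique (up to scalar) exponentially localised kernel element, so the geometric multiplicity of $\lambda=0$ on $L^2_\alpha$ is one. Your explicit check that the surviving combination is nonzero and decays as $\rho\to+\infty$ is a small point the paper leaves implicit.
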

\begin{proof}
The exponential convergence and asymptotic phase of the invasion front to $u_s$ have been proved in \cite[Lemma 2.2]{goh2017} using the spatial dynamic arguments in~\cite[Theorem 3]{sandstede2004}. Using that 
\[
\partial_{\rho}u^*(\rho,\cdot) = u_s'(\rho\pm\cdot) + \mathcal{O}(e^{\delta\rho}), \qquad
\partial_{\tau}u^*(\rho,\cdot) = u_s'(\rho\pm\cdot) + \mathcal{O}(e^{\delta\rho}),
\]
for some $\delta>0$, as $\rho\rightarrow-\infty$, we can find an exponentially localised eigenfunction as $|\rho|\rightarrow\infty$. This eigenfunction is given by $(\pm\partial_{\rho}-\partial_{\tau})u^*(\rho,\cdot)$ and generates a one-dimensional subspace of solutions that decay exponentially with rate $\delta$.
\end{proof}

Since one of the eigenfunctions is exponentially localised as $\rho\rightarrow-\infty$ 
(which we call $e_{\omega}$ with corresponding adjoint eigenfunction which is also $e_{\omega}$) 
and the other is not, then the kernel of $\mathcal{L}$ is one-dimensional in $L_{\alpha}^2,\alpha\gtrsim0$. 

To show persistence of an invasion front, we employ the far-field core decomposition~\cite{lloyd2017,goh2017}
\begin{equation}\label{e:far_core_1D}
u(\rho,\tau) = u_s(\rho\pm\tau + \psi;k_x)\chi(\rho) + v(\rho,\tau; \omega),
\end{equation}
where $v\in X_\alpha :=H^1(\mathbb{T},L_\alpha^2(\mathbb{R}))\cap L^2(\mathbb{T},H_\alpha^4(\mathbb{R}))$, and $k_x,\psi,\omega$ are free variables. 
Substituting this ansatz into the SH equation yields (after subtracting the equation for $u_s$)
\begin{equation}\label{e:PDE_w}
\mathbb{L}[u_s(\rho\pm\tau+\psi;k_x)\chi(z) + v(\rho,\tau;\omega)] + f(u_s(\rho\pm\tau+\psi;k_x)\chi(z) + v(\rho,\tau;\omega)) - \chi\left(\mathbb{L}u_s + f(u_s) \right) = 0,
\end{equation}
where $\mathbb{L}u :=  \omega(\pm u_{\rho} - u_{\tau}) - (1+k_x^2\partial_{\rho}^2)^2u - \mu u$. 

We consider the left-hand side of (\ref{e:PDE_w}) as a (locally defined) nonlinear operator
\begin{equation}\label{e:Fv}
F_v \; : X_{\alpha}\times \mathbb{R}^3\rightarrow L^2_{\alpha},\qquad (v,k_x,\psi,\omega)\rightarrow F_v(v,k_x,\omega).
\end{equation}
Note that $F_v$ is well defined since terms not involving $v$ are given by commutators between cut-off and differential operators and nonlinearities, hence compactly supported. Moreover, $F_v$ is readily seen to be a smooth function and the the derivative with respect to $v$ at a front $u^* = v^*+ \chi u_s$ is the linearisation 
\[
\partial_vF_v = \mathcal{L},
\]
so that $DF_v(v,k_x,\psi,\omega) = (\partial_vF_v(v^*)v,\partial_{k_x}F_v,\partial_\psi F_v, \partial_\omega F_v)$  is Fredholm of index 2 for invading fronts and index 3 for retreating fronts by Fredholm bordering theory. 
We now prove Proposition~\ref{prop:1}. 

\begin{Proposition}
Assume Hypotheses \ref{h:h1}-\ref{h:trans}, and there exists a front solution $(v^*,k_x^*,\psi^*,\omega^*)$ such that $F_v(v^*;k_x^*,\psi^*,\omega^*)=0$, then invading fronts with $\omega<0$ select a unique far-field wavenumber $k_x$ and transition frequency $\omega$ and retreating fronts with $\omega>0$ select just a transition frequency $\omega=\omega(k_x)$.
\end{Proposition}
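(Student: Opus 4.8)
The plan is to prove both selection statements at once by applying the implicit function theorem to the far-field core map $F_v$ of~(\ref{e:Fv}) and then reducing, via Lyapunov--Schmidt, to a finite-dimensional linear problem posed on the cokernel of $\mathcal{L}=\partial_vF_v$. By Lemma~\ref{L:fred}, $\mathcal{L}$ is Fredholm on $L^2_\alpha$ of index $-1$ for invading and $0$ for retreating fronts, while the preceding lemma shows that its kernel on $L^2_\alpha$ is one-dimensional, spanned by the exponentially localised mode $e_\omega=(\pm\partial_\rho-\partial_\tau)u^*$. Consequently $\dim\mathrm{coker}\,\mathcal{L}=2$ in the invading case and $=1$ in the retreating case. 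Since $DF_v$ is Fredholm of index $2$, respectively $3$, by Fredholm bordering, the zero set of $F_v$ near the given front is a smooth manifold of that dimension as soon as $DF_v$ is surjective, and I would read off the selection of $(k_x,\omega)$ from the structure of $\ker DF_v$.

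For the reduction, let $P$ denote the projection of $L^2_\alpha$ onto a fixed complement of $\mathrm{Range}\,\mathcal{L}$, realised by $L^2$-pairing against a basis of $\ker\mathcal{L}^{\mathrm{ad}}$ in $L^2_{-\alpha}$. Solving $\mathcal{L}v=-(\delta k_x\,\partial_{k_x}F_v+\delta\psi\,\partial_\psi F_v+\delta\omega\,\partial_\omega F_v)$ for $v$ is possible exactly when the right-hand side is annihilated by $P$, so the reduced map is
\begin{equation*}
M:\mathbb{R}^3\to\mathrm{coker}\,\mathcal{L},\qquad M(\delta k_x,\delta\psi,\delta\omega)=P\bigl[\delta k_x\,\partial_{k_x}F_v+\delta\psi\,\partial_\psi F_v+\delta\omega\,\partial_\omega F_v\bigr],
\end{equation*}
and $\ker DF_v$ fibres over $\ker M$ with one-dimensional fibre $\ker\mathcal{L}$. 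Two facts drive the argument. First, $\psi$ is pure gauge: the family $(v^*(\psi),k_x,\psi,\omega)$ solves $F_v=0$ for all $\psi$, so $\partial_\psi F_v=-\mathcal{L}\,\partial_\psi v^*\in\mathrm{Range}\,\mathcal{L}$ and $P\,\partial_\psi F_v=0$; thus $M$ depends only on $(\delta k_x,\delta\omega)$ and the $\psi$-direction always lies in $\ker M$. Second, a short computation using $\partial_\omega\mathbb{L}=\pm\partial_\rho-\partial_\tau$ and the identity $\mathbb{L}u_s+f(u_s)\equiv0$ shows that, evaluated at the front, $\partial_\omega F_v=(\pm\partial_\rho-\partial_\tau)u^*=e_\omega$; since the localised kernel mode is also its own adjoint mode, $\langle\partial_\omega F_v,e_\omega\rangle=\|e_\omega\|^2>0$, so $P\,\partial_\omega F_v\neq0$.

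These two facts already settle the retreating case: there $\mathrm{coker}\,\mathcal{L}$ is one-dimensional and spanned by $e_\omega$, so $P\,\partial_\omega F_v$ alone makes $M$ onto, $DF_v$ is surjective, and $\ker M$ is two-dimensional, containing the $\psi$-direction together with one genuine direction on which $\delta\omega=-\langle\partial_{k_x}F_v,e_\omega\rangle/\|e_\omega\|^2\,\delta k_x$. Integrating, the solution set is a one-parameter family parameterised by $k_x$ with $\omega=\omega(k_x)$, as claimed. In the invading case $\mathrm{coker}\,\mathcal{L}$ is two-dimensional; choosing the basis $\{e_\omega,e^{\mathrm{ad}}_2\}$, the mode $\partial_\omega F_v$ fills the $e_\omega$-direction, and surjectivity of $DF_v$ reduces to showing that $\partial_{k_x}F_v$ supplies the remaining direction, i.e. that the $2\times2$ pairing matrix of $(\partial_{k_x}F_v,\partial_\omega F_v)$ against $(e_\omega,e^{\mathrm{ad}}_2)$ is invertible. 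Granting this, $\ker M$ is exactly the $\psi$-direction, every element of $\ker DF_v$ has $\delta k_x=\delta\omega=0$, and the solution set is the symmetry orbit at fixed $(k_x,\omega)$: invading fronts select a unique pair.

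The main obstacle is precisely this last transversality, $\langle\partial_{k_x}F_v,e^{\mathrm{ad}}_2\rangle\neq0$, where $e^{\mathrm{ad}}_2$ is the additional cokernel element that exists only because the index drops by one for invading fronts. Unlike $\partial_\omega F_v$, which is globally localised and self-paired, $e^{\mathrm{ad}}_2$ is merely bounded and asymptotic to a stripe translation mode as $\rho\to-\infty$, so I would evaluate the pairing by integrating by parts and using the cancellation $\partial_{k_x}(\mathbb{L}u_s+f(u_s))\equiv0$ to reduce it to a single flux/Melnikov integral across the transition region involving $\partial_{k_x}u_s$ and the asymptotic adjoint mode. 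Its non-vanishing is exactly the statement that the Fredholm border of Lemma~\ref{L:fred} is crossed transversally as $k_x$ varies, which is guaranteed by the simple, quadratic dispersion relation~(\ref{e:lam3}) with $d_{\|}>0$ from Hypothesis~\ref{h:h1} together with $\omega\neq0$ (nonzero group velocity, so that $\mathrm{d}\eta/\mathrm{d}\lambda=\pm1/\omega$ is finite and the crossing is simple). This transversality is where the essential difference between invading and retreating fronts---and hence the appearance of the extra selected parameter---originates, and establishing it rigorously in the weighted space is the technical heart of the proof.
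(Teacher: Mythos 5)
Your proposal is correct and takes essentially the same route as the paper: a Lyapunov--Schmidt/implicit-function-theorem argument for $F_v$ built on the Fredholm indices of Lemma~\ref{L:fred}, the one-dimensional localised kernel mode $e_\omega$, and the $\psi$-gauge direction, with your explicit computation $\partial_\omega F_v=(\pm\partial_\rho-\partial_\tau)u^*=e_\omega$ being exactly what underlies the paper's appeal to algebraic simplicity of $\lambda=0$ to solve for $\omega$. The single transversality you leave as a sketch for invading fronts, $\langle\partial_{k_x}F_v,e_2^{\mathrm{ad}}\rangle\neq0$, is precisely the invertibility of $(I-P)DF_v$ in the joint variables $(\tilde v,k_x)$ that the paper likewise does not prove in-line but cites as a simple adaptation of \cite[Lemma 6.5]{lloyd2017}, so your argument matches the paper's proof in both structure and level of rigor.
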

\begin{proof} 
We prove persistence using Lyapunov-Schmidt reduction following~\cite[Theorem I.2.3 and I.4.1]{kielhofer2012}. 

We assume the decompositions 
\[
X_\alpha = \mbox{ker}(DF_v)\oplus X^0,\qquad L^2_\alpha = \mbox{ran}(DF_v)\oplus L^0,
\]
where $X^0= \mbox{ker}(DF_v)^\perp$ and $L^0= \mbox{ran}(DF_v)^\perp$ and we define the orthogonal projection $Pu = \langle e_{\omega},u \rangle_{L_\alpha^2} e_{\omega}$ (the decomposition is valid since solutions of (\ref{e:1D_BVP_d}) are regular). We decompose  $v$ given by
\[
v(\rho,\tau;\omega)= \beta e_{\omega} + \tilde v(\rho,\tau;\omega),\qquad \beta\in\mathbb{R},\quad \tilde v\in X_\alpha
\]
where $P\tilde v =0$  and we re-write the nonlinear problem as
\begin{align*}
PF_v\left(\beta e_{\omega} + \tilde v(\rho,\tau;\omega),k_x,\omega\right) =0, \qquad
(I-P)F_v\left(\beta e_{\omega} + \tilde v(\rho,\tau;\omega),k_x,\omega\right) =0.
\end{align*}
Via a simple adaption of \cite[Lemma 6.5]{lloyd2017}, it can be shown that the linear operator $(I-P)DF_v$ has a trivial kernel and is invertible. Hence the second equation can be solved using the implicit function theorem for $(\tilde v,k_x) = G(\omega,\psi,\beta)$ for invading fronts or $\tilde v = G(\omega,k_x,\psi,\beta)$ for retreating fronts. We now consider the first equation $\tilde F_v$ given by
 \[
 \tilde F_v := PF_v\left(\beta e_{\omega} +\tilde v,k_x,\omega\right) =0,\qquad \tilde F_v :  \tilde X_\alpha\times I_{\psi^*,\beta^*}\times W_{\omega^*} \rightarrow \mbox{ran}(L^0)
 \]
where we take neighbourhoods $\tilde X_\alpha$ of zero in the kernel of $F_v$, $W_{\omega^*}$ for values of $\omega$ near $\omega^*$, and $I_{\psi^*,\beta^*}$ for values $(\psi,\beta)$ near $(\psi^*,0)$.
Since $\partial_\omega \tilde F_v\notin R(\mathcal{L})$ (from the assumption of the zero eigenvalue being algebraically simple), we can apply the implicit function theorem to solve for $\omega=H(\tilde v,k_x,\psi,\beta)$. Since $\beta$ corresponds to spatial translations of the front and is arbitrary, we can set $\beta=0$. Hence, we find for invading fronts a unique continuation of $k_x$ and $\omega$ whereas for retreating fronts $\omega = \omega(k_x)$. 
\end{proof}

\subsection{Approximation of parallel invasion fronts on finite domains}\label{s:modulated_finite}
We now look at approximating parallel invasion fronts on finite domains in $\rho$ since the previous section suggests that for invading fronts there is an isolated invasion front on the infinite line. We compute an invasion fronts on $\Omega_{L_\rho}=(-L_\rho,L_\rho)\times(0,2\pi)$ and we approximate $F_v$~(\ref{e:Fv}) on $v\in X(\Omega_{L_\rho}):=H^1((0,2\pi),L^2(-L_\rho,L_\rho))\cap L^2((0,2\pi),H^4(-L_\rho,L_\rho))$ with periodic boundary conditions in $\tau$ and and boundary conditions at $\rho=\pm L_{\rho}$ to be specified below. We denote this approximation by $F_v^{L_{\rho}}$. We note that while the linearised operator $\partial_vF_v^{L_{\rho}}$ with appropriate boundary conditions is Fredholm with index zero, it is however very ill-conditioned; see~\cite[\S6.4]{lloyd2017}. 

To overcome this, we impose the boundary conditions on $v$ at $\rho=\pm L_\rho$, for all $\tau\in[0,2\pi)$, such that $-\Delta^2$ is Fredholm of index 0 with the same boundary conditions e.g., we impose Dirichlet boundary conditions in $v$. We also introduce the following phase conditions
\begin{equation}\label{e:phase_cons}
\Phi_1 v =\int_0^{2\pi}\int_{-L_\rho}^{L_\rho}([\partial_\rho-\partial_\tau]v^{\mbox{old}})(v-v^{\mbox{old}})\D\rho\D\tau,\qquad \Phi_2v = \int_0^{2\pi}\int_{-L_\rho}^{-L_\rho+2\pi/|k_x|}u'_sv\D\rho\D\tau,
\end{equation}
where $v^{\mbox{old}}$ is a template solution (e.g. the initial guess or previous solution). 
The first phase condition is the standard phase condition for travelling fronts for selecting the invasion speed and has been proven in~\cite[Chapter 11]{krauskopf2007} to be an admissible phase condition due to the function $(\partial_\rho + \partial_\tau)u^{\mbox{old}}$ decaying exponentially fast to zero. The second phase condition is the same one used to select the far-field wavenumber $k_x$ for stationary grain boundaries; see~\cite{lloyd2017}. 
We then solve the system $(F_v^{L_\rho},\Phi_1v,\Phi_2v)=\mathbf{0}$ for $(v,k_x,\omega)$. The linear operator $(\partial_{v,k_x,\omega}F^{L_\rho}_v,\Phi_1, \Phi_2)\::\: X_{\mbox{bc}}\times \mathbb{R}^2\rightarrow L^2\times\mathbb{R}^2$ is Fredholm with index $0$.  Under the assumption of transverse boundary conditions~\cite[Hypothesis 6.8]{lloyd2017} it should be possible to show exponential convergence in $L_\rho$ to the infinite-dimensional system.

\section{Appendix}\label{s:lay_proof}

\subsection{Proof of Proposition 2.1: Stripe existence}
We obtain the existence of stripes for the cubic-quintic SH equation for $(\mu,\nu)\approx(0,0)$ via Lyapunov-Schmidt reduction following~Mielke~\cite{mielke1995a}. Omitting tilde's and redefining $x$, we define nonlinear problem $F(\mu,\kappa,\nu,u)=0$, where
\begin{equation}\label{e:non_F_appendix}
F(\mu,\kappa,\nu,u) := -(1+(1+\kappa)\partial_x^2)u - \mu u + \nu u^3 - u^5,\qquad F:\mathbb{R}^4\times X^4\rightarrow X^0,
\end{equation}
$X^j = H^j_{\mbox{per}}([0,2\pi],\mathbb{R})$ and $k_x = \sqrt{1+\kappa}$. 

The linearisation about the $u=0$ is given by $L_{\mbox{per}}u = \partial_uF(0,0,0,0)[u]=-(1+\partial_x^2)u$ is self-adjoint with discrete spectrum $\Sigma(L_{\mbox{per}})=\{-(1-n^2)^2 : n\in\mathbb{Z}\}$ and kernel $\mbox{ker}(L_{\mbox{per}})$ spanned by $U_1=\cos(x)$ and $U_2=\sin(x)$. We assume the decompositions
\[
X^4 = \mbox{ker}(L_{\mbox{per}}) \oplus \mbox{ker}(L_{\mbox{per}})^\perp,\qquad \mbox{and}\qquad X^0 = \mbox{ran}(L_{\mbox{per}}) \oplus  \mbox{ran}(L_{\mbox{per}})^\perp,
\]
where $\mbox{ran}(L_{\mbox{per}})$ is the range of $L_{\mbox{per}}$. We define the orthogonal projection $Pu = \frac{1}{2\pi}(\langle U_1,u\rangle U_1 + \langle U_2,u\rangle U_2)$ where $\langle u,v\rangle = \int_0^{2\pi}uv\D x$ and the mapping $Q:\: X^j\rightarrow\mathbb{R}^2$ where $Qu = \frac{1}{2\pi}(\langle u,U_1\rangle,\langle u,U_2\rangle)^T$. 

We decompose $u\in X^4$ into $\alpha_1 U_1 + \alpha_2 U_2 + V$, where $PV = 0$. We re-write nonlinear problem~(\ref{e:non_F_appendix}) as
\begin{align}
QF(\mu,\kappa,\tilde\nu,\alpha_1 U_1 + \alpha_2 U_2 + V) =& \mathbf{0},\\
(I-P)F(\mu,\kappa,\tilde\nu,\alpha_1 U_1 + \alpha_2 U_2 + V) =& 0.
\end{align}
The second equation can be solved for $V\in (I-P)X^4$ as a function of $(\mu,\kappa,\nu,\alpha)$ using the implicit function theorem. From $(I-P)L_{\mbox{per}}U_j=0$, we find that $V=G(\mu,\kappa,\nu,\alpha) = \mathcal{O}(|\nu||\alpha|^3+|\alpha|^5)$ for $\alpha\rightarrow0$. Hence, all small periodic solutions of (\ref{e:non_F}) satisfy the bifurcation equation
\[
f(\mu,\nu,\kappa,\alpha) :=QF(\mu,\nu,\kappa,\alpha_1 U_1 + \alpha_2 U_2 + V) = \mathbf{0}\in\mathbb{R}^2,
\]
where $f$ is given by
\[
f(\mu,\nu,\kappa,\alpha) = \left[-\frac12\kappa^2 - \frac12\mu + \left(\frac{3\nu}{8}  + \mathcal{O}(|\mu||\nu|)\right)|\alpha|^2 - \left(\frac{5}{16}+\mathcal{O}(|\mu| + |\nu|)\right)|\alpha|^4 + \mathcal{O}(|\alpha|^6)\right]\alpha,
\]
where $|\alpha|^2 = \alpha_1^2 + \alpha_2^2$, for $\alpha_1,\alpha_2\rightarrow0$ uniformly for $(\mu,\kappa,\nu)$ small. Solving the pre-factor in front of $\alpha$ in $f$ leads to the existence result. 

\subsection{Proof of Proposition 2.3: Stripe stability}

Linear stability of the periodic solutions to (\ref{e:non_F_appendix}) can be found by looking at the linear problem
\[
-(1+\partial_x^2 + \partial_y^2)^2w + \mu w + (3\tilde\nu\epsilon^2\tilde u_{\mu,k,\tilde\nu}^2(x) - 5\tilde u_{\mu,\kappa,\tilde\nu}^4(x))w = \lambda w.
\]
Carrying out a Floquet-Bloch decomposition of the form $w = e^{i(\sigma x + \tau y)}\tilde w(\xi)$ where $\xi = k_xx$ and $\tilde w\in X^4$, leads to the eigenvalue problem
\begin{equation}\label{e:roll_lin_op_appendix}
L(\mu,k_x,\nu,\sigma,\tau,\lambda)\tilde w := -(1+(k_x\partial_\xi + i\sigma)^2 - \tau^2)^2\tilde w + \mu \tilde w+ (3\tilde\nu\epsilon^2\tilde u_{\mu,\kappa,\nu}^2(x) - 5\tilde u_{\mu,\kappa,\nu}^4(x))\tilde w  - \lambda \tilde w = 0
\end{equation}
 We solve this eigenvalue problem using Lyapunov-Schmidt reduction and decompose $\tilde w = \beta_1 U_1 + \beta_2 U_2 + W$ with $PW=0$. For sufficiently small $(\mu,k_x-1,\nu,\sigma,\tau,\lambda)$, the equation $(I-P)L(\mu,k_x,\nu)\tilde w = 0$ can be solved uniquely for $W=\mathcal{W}(\epsilon,k_x,\nu)\beta=\mathcal{O}((\nu a^2 + a^4)|\beta|)$ as $a\rightarrow0$. In particular,
 $\mathcal{W}$ satisfies 
 \[
 \mathcal{W}(\mu,k_x,\nu,\sigma,\tau,\lambda) = D_{\alpha}G(\mu,k_x,\nu,(\tilde a,0)) + i\sigma H(\mu,k_x,\nu,\tilde a) + \tilde a^2\mathcal{O}(\sigma^2+\tau^2+|\lambda|),
 \]
 where $D_{\alpha}G(\mu,k_x,\nu,(\tilde a,0)) = \mathcal{O}(|\nu|\tilde a^2 + \tilde a^4)$ and $H(\mu,k_x,\nu,\tilde a) =\mathcal{O}(|\nu| \tilde a^2 + \tilde a^4)$ since the coupling of $\beta$ and $W$ in the equation for $W$ is only active in the $(3\nu\tilde u_{\mu,\kappa,\nu}^2 - 5\tilde u_{\mu,\kappa,\nu}^4(x))w$ term. Inserting $\mathcal{W}$ into $QL(\epsilon,k_x,\tilde\nu,\sigma,\tau,\lambda)[\beta_1U_1 + \beta_2U_2 + \mathcal{W}\tilde w ]= \mathbf{m}(\epsilon,k_x,\nu,\sigma,\tau,\lambda) $ yields the eigenvalue problem $\mathbf{m}=0$, where
 \[
 \mathbf{m}(\mu,k_x,\nu,\sigma,\tau,\lambda) =  \begin{pmatrix}
 \rho + c(\mu,\nu,k_x) -\lambda & i\delta \\
 -i\delta & \rho-\lambda
 \end{pmatrix} + \tilde a^4\begin{pmatrix}
 \mathcal{O}(\sigma^2+\tau^2+|\lambda|) &  \mathcal{O}(|\sigma|+\tau^2+|\lambda|) \\
 \mathcal{O}(|\sigma|+\tau^2+|\lambda|) &  \mathcal{O}(\sigma^2+\tau^2+|\lambda|)
 \end{pmatrix},
 \]
 and 
 $\rho = -2k_x^2\sigma^2+(1-k_x^2)(\sigma^2+\tau^2)-\frac12(\sigma^2+\tau^2)^2,\delta=2k\sigma(\sigma^2+\tau^2+k^2-1)$ and $c(\mu,\nu,k_x) = \frac34\nu\tilde a^2 - \frac54\tilde a^4+\mathcal{O}(\tilde a^6)$. 
 
  In order to determine the most unstable perturbations, we introduce the scalings
 \[
\mu = \epsilon^4\hat\mu,\qquad (k_x^2-1)=2\epsilon^2\hat\kappa,\qquad\sigma = \epsilon^2\hat\sigma,\qquad\tau = \epsilon\hat\tau,\qquad\lambda = \epsilon^4\hat\lambda,
 \]
 where $\hat\kappa \in[-\frac{\tilde\kappa}{2},\frac{\tilde\kappa}{2}]$ and $|\epsilon|\ll 1$. 
 
 We then find 
 \[
 \mathbf{m}(\epsilon,\hat\mu,\sqrt{1+2\epsilon^2\hat\kappa},\epsilon^2\hat\nu,\epsilon^2\hat\sigma,\epsilon\hat\tau,\epsilon^4\hat\lambda) = \epsilon^4\mathbf{\widehat m}(\tilde\mu,\hat\kappa,\hat\nu,\hat\sigma,\hat\tau,\hat\lambda)+ \mathcal{O}(\epsilon^5),
 \]
 where 
 \[
 \mathbf{\widehat m}(\hat\mu,\hat\kappa,\hat\nu,\hat\sigma,\hat\tau,\hat\lambda) = \begin{pmatrix}
 -2\hat\sigma^2-2\hat\kappa\hat\tau - \frac12\hat\tau^4 + \frac34\hat\nu\hat a^2_\pm - \frac54\hat a^4_\pm - \hat\lambda & 2i\hat\sigma(\hat\tau^2+2\hat\kappa) \\
 -2i\hat\sigma(\hat\tau^2+2\hat\kappa) & -2i\hat\sigma^2-2\hat\kappa\hat\tau - \frac12\hat\tau^4 -\hat\lambda
 \end{pmatrix},
 \]
 and 
 \[
 \hat a^2_\pm = \frac{3\hat\nu}{5}\pm\frac{\sqrt{9\hat\nu^2-40(\tilde\mu+4\hat\kappa^2)}}{5} + \mathcal{O}(\epsilon^4).
 \]
 The eigenvalues of $\mathbf{\hat m}$ are given by
 \begin{align}
 \hat\lambda_\pm(\hat\mu,\hat\kappa,\hat\nu,\hat\sigma,\hat\tau) =& \frac12\left[2(-2\hat\sigma^2-2\hat\kappa\hat\tau - \frac12\hat\tau^4 )+ \frac34\hat\nu\hat a^2_\pm - \frac54\hat a^4_\pm\right]\\&\pm\frac12\sqrt{16(\hat\tau^2+2\hat\kappa)\hat\sigma^2+(\frac34\hat\nu\hat a^2_\pm - \frac54\hat a^4_\pm)^2}.
 \end{align}
The critical eigenvalue for the upper branch $\hat a_+$ is maximal with respect to $(\hat\sigma,\hat\tau)$ on the axes $\hat\sigma=0$ or $\hat\tau=0$ and is given by
\begin{align}
\lambda_+(\hat\mu,\hat\kappa,\hat\nu,0,\hat\tau) =& (-2\hat\kappa\hat\tau - \frac12\hat\tau^4 )\leq\left\{ \begin{array}{cl}
0 & \mbox{for $\kappa\in[0,\frac{\tilde\kappa}{2}]$,}\\
4\hat\kappa^2 & \mbox{for $\hat\kappa\in[-\frac{\tilde\kappa}{2},0]$,}
\end{array}\right.
\\
\lambda_+(\hat\mu,\hat\kappa,\hat\nu,\hat\sigma,0) =& \frac12\left[2(-2\hat\sigma^2)+ \frac34\hat\nu\hat a^2_\pm - \frac54\hat a^4_\pm\right]+\frac12\sqrt{16(2\kappa)\hat\sigma^2+(\frac34\hat\nu\hat a^2_\pm - \frac54\hat a^4_\pm)^2},\\
&\leq\left\{ 
\begin{array}{cl}
0 & \mbox{for $\kappa^2\leq\frac{135\hat\nu^2-800\tilde\mu+15\hat\nu\sqrt{81\hat\nu^2-320\tilde\mu}}{6400}$}, \\
\frac{(8\kappa^2+c)^2}{32\kappa^2} & \mbox{for $\kappa^2\in\left[\frac{135\hat\nu^2-800\tilde\mu+15\hat\nu\sqrt{81\hat\nu^2-320\tilde\mu}}{6400},\frac{\tilde\kappa}{4}\right]$}.
\end{array}
\right.
\end{align}

\bibliographystyle{siam}
\bibliography{swift_depinning_bib}

\begin{thebibliography}{10}

\bibitem{Ackemann2009}
{\sc T.~Ackemann, W.~J. Firth, and G.-L. Oppo}, {\em Chapter 6 fundamentals and
  applications of spatial dissipative solitons in photonic devices}, in
  Advances in Atomic Molecular and Optical Physics, P.~R.~B. E.~Arimondo and
  C.~C. Lin, eds., vol.~57 of Advances In Atomic, Molecular, and Optical
  Physics, Academic Press, 2009, pp.~323 -- 421.

\bibitem{aranson2000}
{\sc I.~S. Aranson, B.~A. Malomed, L.~M. Pismen, and L.~S. Tsimring}, {\em
  Crystallization kinetics and self-induced pinning in cellular patterns},
  Phys. Rev. E, 62 (2000), pp.~R5--R8.

\bibitem{avery2018}
{\sc M.~Avery, R.~Goh, O.~Goodloe, A.~Milewski, and A.~Scheel}, {\em Growing
  stripes, with and without wrinkles}.
\newblock Submitted 2018.

\bibitem{avitabile2010}
{\sc D.~Avitabile, D.~J.~B. Lloyd, J.~Burke, E.~Knobloch, and B.~Sandstede},
  {\em To snake or not to snake in the planar swift-hohenberg equation}, SIAM
  J. Appl. Dyn. Syst., 9 (2010), pp.~704--733.

\bibitem{beaume2018}
{\sc C.~Beaume, A.~Bergeon, and E.~Knobloch}, {\em {Three-dimensional doubly
  diffusive convectons: instability and transition to complex dynamics}}, J.
  Fluid Mech., 840 (2018), pp.~74--105.

\bibitem{beck2009}
{\sc M.~Beck, J.~Knobloch, D.~Lloyd, B.~Sandstede, and T.~Wagenknecht}, {\em
  Snakes, ladders, and isolas of localised patterns}, SIAM J. Math. Anal., 41
  (2009), pp.~936--972.

\bibitem{bramburger2018}
{\sc J.~Bramburger, D.~Altschuler, C.~Avery, T.~Sangsawang, M.~Beck, P.~Carter,
  and B.~Sandstede.}, {\em Localized radial roll patterns in higher space
  dimensions}.
\newblock Submitted 2018, 2018.

\bibitem{bramburger2019}
{\sc J.~J. Bramburger and B.~Sandstede}, {\em Spatially localized structures in
  lattice dynamical systems}.
\newblock Submitted 2019.

\bibitem{budd2005}
{\sc C.~J. Budd and R.~Kuske}, {\em Localized periodic patterns for the
  non-symmetric generalized {S}wift-{H}ohenberg equation}, Physica~D, 208
  (2005), pp.~73--95.

\bibitem{burke2006}
{\sc J.~Burke and E.~Knobloch}, {\em Localized states in the generalized
  {S}wift-{H}ohenberg equation}, Phys. Rev. E, 73 (2006), p.~056211.

\bibitem{burke2007b}
{\sc J.~Burke and E.~Knobloch}, {\em Homoclinic snaking: Structure and
  stability}, Chaos, 17 (2007), p.~037102.

\bibitem{burke2007c}
{\sc J.~Burke and E.~Knobloch}, {\em Normal form for spatial dynamics in the
  {S}wift-{H}ohenberg equation}, Discr. Cont. Dyn. Sys. Suppl.,  (2007),
  pp.~170--180.
\newblock (September issue).

\bibitem{kozyreff2009}
{\sc S.~J. Chapman and G.~Kozyreff}, {\em Exponential asymptotics of localized
  patterns and snaking bifurcation diagrams.}, Physica~D, 238 (2009),
  pp.~319--354.

\bibitem{coullet2000}
{\sc P.~Coullet, C.~Riera, and C.~Tresser}, {\em Stable static localized
  structures in one dimension}, Phys. Rev. Lett., 84 (2000), pp.~3069--3072.

\bibitem{dawes2010}
{\sc J.~H.~P. Dawes}, {\em The emergence of a coherent structure for coherent
  structures: localized states in nonlinear systems}, Philos. Trans. R. Soc.
  Lond. Ser. A Math. Phys. Eng. Sci., 368 (2010), pp.~3519--3534.

\bibitem{auto07p}
{\sc E.~J. Doedel and B.~Oldeman}, {\em \textsc{auto07p}: continuation and
  bifurcation software for ordinary differential equations}, tech. rep.,
  Concordia University, 2009.

\bibitem{doelman2003}
{\sc A.~Doelman, B.~Sandstede, A.~Scheel, and G.~Schneider}, {\em Propagation
  of hexagonal patterns near onset}, European J. Appl. Math., 14 (2003),
  pp.~85--110.

\bibitem{pde2path}
{\sc T.~Dohnal, J.~D.~M. Rademacher, H.~Uecker, and D.~Wetzel}, {\em pde2path
  2.0: multi-parameter continuation and periodic domains}, in ENOC 2014 -
  Proceedings of 8th European Nonlinear Dynamics Conference, H.~Ecker,
  A.~Steindl, and S.~Jakubek, eds., 2014.

\bibitem{duan1995}
{\sc J.~Duan and P.~Holmes}, {\em Fronts, domain walls and pulses in a
  generalized {G}inzburg-{L}andau equation}, Proc. Edinburgh Math. Soc. (2), 38
  (1995), pp.~77--97.

\bibitem{eckmann1991}
{\sc J.-P. Eckmann and C.~E. Wayne}, {\em Propagating fronts and the center
  manifold theorem}, Comm. Math. Phys., 136 (1991), pp.~285--307.

\bibitem{emmerich2012}
{\sc H.~Emmerich, H.~L{\"o}wen, R.~Wittkowski, T.~Gruhn, G.~I. T{\'o}th,
  G.~Tegze, and L.~Gr{\'a}n{\'a}sy}, {\em Phase-field-crystal models for
  condensed matter dynamics on atomic length and diffusive time scales: an
  overview}, Advances in Physics, 61 (2012), pp.~665--743.

\bibitem{faye2015}
{\sc G.~Faye and M.~Holzer}, {\em Modulated traveling fronts for a nonlocal
  {F}isher-{KPP} equation: a dynamical systems approach}, J. Differential
  Equations, 258 (2015), pp.~2257--2289.

\bibitem{goh2017}
{\sc R.~Goh and A.~Scheel}, {\em {Pattern-forming fronts in a Swift-Hohenberg
  equation with directional quenching --- parallel and oblique stripes}}, J.
  London Math. Soc., 98 (2018), pp.~104--128.

\bibitem{haragus1999}
{\sc M.~H\u ar\u agu\c~s Courcelle and G.~Schneider}, {\em Bifurcating fronts
  for the {T}aylor-{C}ouette problem in infinite cylinders}, Z. Angew. Math.
  Phys., 50 (1999), pp.~120--151.

\bibitem{hupkes2009}
{\sc H.~J. Hupkes and B.~Sandstede}, {\em Modulated wave trains in lattice
  differential systems}, J. Dynam. Differential Equations, 21 (2009),
  pp.~417--485.

\bibitem{kao2013}
{\sc H.-C. Kao and E.~Knobloch}, {\em Instabilities and dynamics of weakly
  subcritical patterns}, Math. Model. Nat. Phenom., 8 (2013), pp.~131--154.

\bibitem{kassam2005}
{\sc A.-K. Kassam and L.~N. Trefethen}, {\em Fourth-order time-stepping for
  stiff {PDE}s}, SIAM J. Sci. Comput., 26 (2005), pp.~1214--1233 (electronic).

\bibitem{kielhofer2012}
{\sc H.~Kielh\"{o}fer}, {\em Bifurcation theory}, vol.~156 of Applied
  Mathematical Sciences, Springer, New York, second~ed., 2012.
\newblock An introduction with applications to partial differential equations.

\bibitem{knobloch2008}
{\sc E.~Knobloch}, {\em Spatially localized structures in dissipative systems:
  open problems}, Nonlinearity, 21 (2008), pp.~T45--T60.

\bibitem{knobloch2015}
\leavevmode\vrule height 2pt depth -1.6pt width 23pt, {\em {Spatial
  Localization in Dissipative Systems}}, Annu. Rev. Condens. Matter Phys., 6
  (2015), pp.~325--59.

\bibitem{kozyreff2006}
{\sc G.~Kozyreff and S.~J. Chapman}, {\em Asymptotics of large bound states of
  localized structures}, Phys. Rev. Lett., 97 (2006), p.~044502.

\bibitem{kozyreff2013}
{\sc G.~Kozyreff and S.~J. Chapman}, {\em {Analytical Results for Front Pinning
  between an Hexagonal Pattern and a Uniform State in Pattern-Formation
  Systems}}, Phys. Rev. Lett., 111 (2013), p.~054501.

\bibitem{krauskopf2007}
{\sc B.~Krauskopf, H.~M. Osinga, and J.~Galan-Vioque}, eds., {\em Numerical
  Continuation Methods for Dynamical Systems}, Springer, 2007.

\bibitem{krauskopf2008}
{\sc B.~Krauskopf and T.~Rie\ss}, {\em A {L}in's method approach to finding and
  continuing heteroclinic connections involving periodic orbits}, Nonlinearity,
  21 (2008), pp.~1655--1690.

\bibitem{lloyd2015}
{\sc D.~J.~B. Lloyd, C.~Gollwitzer, I.~Rehberg, and R.~Richter}, {\em
  Homoclinic snaking near the surface instability of a polarisable fluid}, J.
  Fluid Mech., 783 (2015), pp.~283--305.

\bibitem{lloyd2008}
{\sc D.~J.~B. Lloyd, B.~Sandstede, D.~Avitabile, and A.~R. Champneys}, {\em
  Localized hexagon patterns of the planar {S}wift--{H}ohenberg equation}, SIAM
  J. Appl. Dynam. Syst., 7 (2008), pp.~1049--1100.

\bibitem{lloyd2017}
{\sc D.~J.~B. Lloyd and A.~Scheel}, {\em Continuation and bifurcation of grain
  boundaries in the {S}wift-{H}ohenberg equation}, SIAM J. Appl. Dyn. Syst., 16
  (2017), pp.~252--293.

\bibitem{makrides2014}
{\sc E.~Makrides and B.~Sandstede}, {\em Predicting the bifurcation structure
  of localized snaking patterns}, Phys. D, 268 (2014), pp.~59--78.

\bibitem{malomed1990}
{\sc B.~A. Malomed, A.~A. Nepomnyashchy, and M.~I. Tribelsky}, {\em Domain
  boundaries in convection patterns}, Phys. Rev. A, 42 (1990), pp.~7244--7263.

\bibitem{McCalla2010}
{\sc S.~McCalla and B.~Sandstede}, {\em Snaking of radial solutions of the
  multi-dimensional {S}wift-{H}ohenberg equation: a numerical study}, Phys. D,
  239 (2010), pp.~1581--1592.

\bibitem{meron2018}
{\sc E.~Meron}, {\em From patterns to function in living systems: Dryland
  ecosystems as a case study}, Annual Review of Condensed Matter Physics, 9
  (2018), p.~null.

\bibitem{mielke1995a}
{\sc A.~Mielke}, {\em A new approach to sideband-instabilities using the
  principle of reduced instability}, in Nonlinear dynamics and pattern
  formation in the natural environment ({N}oordwijkerhout, 1994), vol.~335 of
  Pitman Res. Notes Math. Ser., Longman, Harlow, 1995, pp.~206--222.

\bibitem{mielke1997}
{\sc A.~Mielke}, {\em Instability and stability of rolls in the
  {S}wift-{H}ohenberg equation}, Comm. Math. Phys., 189 (1997), pp.~829--853.

\bibitem{monteiro2017}
{\sc R.~Monteiro and A.~Scheel}, {\em Phase separation patterns from
  directional quenching}, J. Nonlinear Sci., 27 (2017), pp.~1339--1378.

\bibitem{morrissey2015}
{\sc D.~Morrissey and A.~Scheel}, {\em Characterizing the effect of boundary
  conditions on striped phases}, SIAM J. Appl. Dyn. Syst., 14 (2015),
  pp.~1387--1417.

\bibitem{peletier2001}
{\sc L.~A. Peletier and W.~C. Troy}, {\em Spatial {P}atterns}, Birkh{\"a}user,
  Boston, 2001.

\bibitem{peletier2012a}
{\sc M.~A. Peletier and M.~Veneroni}, {\em Stripe patterns and a
  projection-valued formulation of the eikonal equation}, Philos. Trans. R.
  Soc. Lond. Ser. A Math. Phys. Eng. Sci., 370 (2012), pp.~1730--1739.

\bibitem{pershin2018}
{\sc A.~Pershin, C.~Beaume, and S.~M. Tobias}, {\em Dynamics of spatially
  localized states in transitional plane couette flow}.
\newblock Submitted 2018, 2018.

\bibitem{pomeau1986}
{\sc Y.~Pomeau}, {\em Front motion, metastability, and subcritical bifurcations
  in hydrodynamics}, Physica~D, 23 (1986), pp.~3--11.

\bibitem{ponedel2017}
{\sc B.~C. Ponedel, H.-C. Kao, and E.~Knobloch}, {\em Front propagation in
  weakly subcritical pattern-forming systems}, Phys. Rev. E, 96 (2017),
  p.~032208.

\bibitem{radss}
{\sc J.~D.~M. Rademacher, B.~Sandstede, and A.~Scheel}, {\em Computing absolute
  and essential spectra using continuation}, Phys. D, 229 (2007), pp.~166--183.

\bibitem{sandstede2001}
{\sc B.~Sandstede and A.~Scheel}, {\em Essential instabilities of fronts:
  bifurcation, and bifurcation failure}, Dyn. Syst., 16 (2001), pp.~1--28.

\bibitem{sandstede2004}
{\sc B.~Sandstede and A.~Scheel}, {\em Defects in oscillatory media: toward a
  classification}, SIAM J. Appl. Dyn. Syst., 3 (2004), pp.~1--68 (electronic).

\bibitem{sandstede2008}
\leavevmode\vrule height 2pt depth -1.6pt width 23pt, {\em Relative {M}orse
  indices, {F}redholm indices, and group velocities}, Discr. Cont. Dyn. Syst.
  A, 20 (2008), pp.~139--158.

\bibitem{sherratt2012}
{\sc J.~A. Sherratt}, {\em Numerical continuation methods for studying periodic
  travelling wave (wavetrain) solutions of partial differential equations},
  Appl. Math. Comput., 218 (2012), pp.~4684--4694.

\bibitem{thompson2015}
{\sc J.~M.~T. Thompson}, {\em Advances in shell buckling: Theory and
  experiments}, Int. J. Bifurcation and Chaos, 25 (2015), pp.~1530001--1.

\bibitem{trefethen2000}
{\sc L.~N. Trefethen}, {\em Spectral {M}ethods in \textsc{matlab}}, SIAM,
  Philadelphia, 2000.

\bibitem{uecker2014}
{\sc H.~Uecker and D.~Wetzel}, {\em Numerical results for snaking of patterns
  over patterns in some 2{D} {S}elkov-{S}chnakenberg reaction-diffusion
  systems}, SIAM J. Appl. Dyn. Syst., 13 (2014), pp.~94--128.

\bibitem{uecker2014b}
{\sc H.~Uecker, D.~Wetzel, and J.~D.~M. Rademacher}, {\em pde2path---a {M}atlab
  package for continuation and bifurcation in 2{D} elliptic systems}, Numer.
  Math. Theory Methods Appl., 7 (2014), pp.~58--106.

\bibitem{saarloos2003}
{\sc W.~van Saarloos}, {\em Front propagation into unstable states}, {Physics
  Reports}, 386 (2003), pp.~29 -- 222.

\bibitem{wetzel2018}
{\sc D.~Wetzel}, {\em Tristability between stripes, up-hexagons, and
  down-hexagons and snaking bifurcation branches of spatial connections between
  up- and down-hexagons}, Phys. Rev. E, 97 (2018), p.~062221.

\bibitem{woods1999}
{\sc P.~D. Woods and A.~R. Champneys}, {\em Heteroclinic tangles and homoclinic
  snaking in the unfolding of a degenerate reversible {H}amiltonian-{H}opf
  bifurcation}, Physica~D, 129 (1999), pp.~147--170.

\end{thebibliography}

\end{document}